\theoremstyle{plain}
\newtheorem{thm}{\protect\theoremname}
\theoremstyle{plain}
\newtheorem{lyxalgorithm}[thm]{\protect\algorithmname}
\theoremstyle{plain}
\newtheorem{lem}[thm]{\protect\lemmaname}
\theoremstyle{remark}
\newtheorem{rem}[thm]{\protect\remarkname}
\theoremstyle{plain}
\newtheorem{cor}[thm]{\protect\corollaryname}
\theoremstyle{plain}
\newtheorem{prop}[thm]{\protect\propositionname}
\DeclareMathOperator{\Tr}{Tr}
\DeclareMathOperator{\sgn}{sgn}
\numberwithin{equation}{section}
\providecommand{\algorithmname}{Algorithm}
\providecommand{\corollaryname}{Corollary}
\providecommand{\lemmaname}{Lemma}
\providecommand{\propositionname}{Proposition}
\providecommand{\remarkname}{Remark}
\providecommand{\theoremname}{Theorem}
\begin{document}
\title{Accelerated optimization of measured relative entropies}
\author{Zixin Huang\thanks{School of Science, College of STEM, RMIT University, Melbourne, Australia}
\and Mark M. Wilde\thanks{School of Electrical and Computer Engineering,
Cornell University, Ithaca, New York 14850, USA}}
\date{\today}
\maketitle
\begin{abstract}
The measured relative entropy and measured R\'enyi relative entropy
are quantifiers of the distinguishability of two quantum states $\rho$
and $\sigma$. They are defined as the maximum classical relative
entropy or R\'enyi relative entropy realizable by performing a measurement
on $\rho$ and $\sigma$, and they have interpretations in terms of
asymptotic quantum hypothesis testing. Crucially, they can be rewritten
in terms of variational formulas involving the optimization of a concave
or convex objective function over the set of positive definite operators.
In this paper, we establish foundational properties of these objective
functions by analyzing their matrix gradients and Hessian superoperators;
namely, we prove that these objective functions are $\beta$-smooth
and $\gamma$-strongly convex / concave, where $\beta$ and $\gamma$
depend on the max-relative entropies of $\rho$ and $\sigma$. A practical
consequence of these properties is that we can conduct Nesterov accelerated
projected gradient descent / ascent, a well known classical optimization
technique, to calculate the measured relative entropy and measured
R\'enyi relative entropy to arbitrary precision. These algorithms
are generally more memory efficient than our previous algorithms based
on semidefinite optimization {[}Huang and Wilde, arXiv:2406.19060{]},
and for well conditioned states $\rho$ and $\sigma$, these algorithms
are notably faster.
\end{abstract}
\tableofcontents{}

\section{Introduction}

\subsection{Background and motivation}

The measured relative entropy is a quantifier of the distinguishability
of two quantum states~\cite{Donald1986,Piani2009}, defined as the
maximum of the classical relative entropy of the probability distributions
resulting from performing the same measurement on the states (see
\eqref{eq:meas-rel-ent-def} for a definition). This quantity has
appeared in a number of contexts in quantum information theory~\cite{petz2007quantum,Piani2009,Mosonyi2015,Matsumoto2016,Hiai2017,Berta2017,Hiai2021,Huang2024}
and has an operational meaning in asymptotic quantum hypothesis testing.
There has been interest in developing techniques for calculating this
quantity, formulated as classical~\cite{Huang2024} and quantum algorithms
\cite{Goldfeld2024}, and one requires methods from optimization theory
for doing so. Critical to this optimization task is a variational
formula for measured relative entropy~\cite{Berta2017}, which rewrites
the quantity in terms of an optimization of a concave objective function
over the cone of positive definite operators.

Similar to the above, the measured R\'enyi relative entropy is also
a distinguishability quantifier \cite[Eqs.~(3.116)--(3.117)]{Fuchs1996},
defined similarly as stated above but in terms of the R\'enyi relative
entropy (see \eqref{eq:measured-renyi-rel-ent-def} for a definition).
There has also been interest in developing techniques for calculating
this quantity~\cite{Goldfeld2024,Huang2024}, again requiring methods
from optimization theory. Likewise, there are variational formulas
for measured R\'enyi relative entropy~\cite{Berta2017}, holding
for all values of the R\'enyi parameter and expressing it as an optimization
of a convex or concave objective function over the cone of positive
definite operators (in particular, the objective function is convex
if the R\'enyi parameter $\alpha\in\left(0,1\right)$ and it is concave
if $\alpha\in\left(1,\infty\right)$).

In our prior work~\cite{Huang2024}, we formulated the optimization
task of the measured relative entropy and the measured R\'enyi relative
entropy in terms of semidefinite optimization. An advantage of this
approach is that standard semidefinite programming (SDP) solvers
\cite{sedumi1999,cvx2014,cvxpy2018,mosek2023,yalmip2022} can be used
for performing the needed optimization. However, a drawback of this
approach for calculating measured relative entropy is that, being
based on the methods of~\cite{Fawzi2019}, the resulting SDPs require
a number of matrix inequalities that is logarithmic in the inverse
precision desired. Furthermore, when calculating measured R\'enyi
relative entropy, the approach relies on techniques from~\cite{Fawzi2017},
and the resulting SDPs require a number of matrix inequalities that
is logarithmic in the denominator resulting from a rational approximation
of the R\'enyi parameter, and thus the number of matrix inequalities
becomes larger as the R\'enyi parameter approaches one. Additionally,
in both approaches, the dimension of every matrix involved in each
matrix inequality is twice that of the original states. In practice,
both approaches work well for moderately sized states, but it has
been left open to determine if there could be improvements.

\subsection{Summary of contributions}

In order to address the aforementioned concerns, here we study foundational
properties of the objective functions involved in the variational
formulas for the measured relative entropy and the measured R\'enyi
relative entropy. Namely, we establish analytical expressions for
the matrix gradients and Hessian superoperators of these objective
functions (see Lemmas~\ref{lem:matrix-gradient-meas-rel-ent},~\ref{lem:Hessian-superoperator-measured-rel-ent},
\ref{lem:matrix-gradient-meas-renyi-rel-ent}, and~\ref{lem:hessian-super-op-meas-renyi-rel-ent}),
and we then prove that they are $\beta$-smooth and $\gamma$-strongly
convex / concave on an operator interval containing the optimal positive
definite operator (see Corollaries~\ref{cor:smoothness-strong-concavity-meas-rel-ent}
and~\ref{cor:smoothness-strong-concavity-meas-renyi-rel-ent}).

As a consequence of these findings, we can conduct Nesterov accelerated
projected gradient descent / ascent, a well known classical optimization
technique~\cite{Nesterov1983,Nesterov2004,Beck2009}, to calculate
the measured relative entropy and measured R\'enyi relative entropy
to arbitrary precision (see Algorithms~\ref{alg:measured-rel-ent},
\ref{alg:measured-renyi-0-half}, and~\ref{alg:measured-renyi-half-infty}).
The resulting algorithms are generally more memory efficient than
our previous SDP algorithm from~\cite{Huang2024}, and for well-conditioned
states, they are notably faster. In particular, each matrix gradient
update step of the algorithms here involves matrices that have the
same dimension as that of the states, and there is no need to perform
the costly linear system solver used in each iteration of the interior-point
method for semidefinite optimization~\cite{Nesterov1994} (see also
\cite{Vandenberghe1996,Helmberg1996,Todd2001,Helmberg2003}). Based
on the general fact that Nesterov accelerated gradient descent is
known to be an optimal method for minimizing $\beta$-smooth and $\gamma$-strongly
convex functions \cite[Section~3.7 and Theorem~3.15]{Bubeck2015},
we suspect that our optimization method proposed here is likely to
be optimal (i.e., fastest possible) for well-conditioned states.

\subsection{Paper organization}

Our paper is structured as follows. In Section~\ref{sec:Preliminaries},
we review background material needed to understand the rest of the
paper. In particular, we review the concepts of matrix gradient and
Hessian superoperator of a smooth function $f$ of a positive definite
matrix $\omega$. We include some basic examples, then we review the
notions of $\beta$-smoothness and $\gamma$-strong convexity / concavity
for such a function, and we finally review Nesterov accelerated projected
gradient descent / ascent for optimizing the function $f$. Section~\ref{sec:Measured-relative-entropy}
presents our results for measured relative entropy: namely, for the
objective function corresponding to the variational formula in \eqref{eq:measured-rel-ent-var-form},
we derive its matrix gradient, an operator interval containing the
optimal $\omega$, its Hessian superoperator, its $\beta$-smoothness
and $\gamma$-strong concavity on the aforementioned operator interval,
and Nesterov accelerated projected gradient descent for performing
the optimization. Section~\ref{sec:Measured-Renyi-relative} presents
similar results for the measured R\'enyi relative entropy, using
the variational formula in \eqref{eq:meas-renyi-var-form} and breaking
up the development based on three different intervals for the R\'enyi
parameter $\alpha$: including $\alpha\in\left(0,\frac{1}{2}\right)$,
$\alpha\in\left[\frac{1}{2},1\right)$, and $\alpha\in\left(1,\infty\right)$,
with it often being possible to merge developments for the latter
two intervals due to common aspects. Section~\ref{sec:Comparison-with-SDP}
provides a detailed comparison of our previous algorithms from~\cite{Huang2024}
with our approach, with the main conclusion being that our approach
here is notably faster than our previous approach whenever the states
are well conditioned. We finally conclude in Section~\ref{sec:Conclusion}
with a summary and suggestions for future research. Appendix~\ref{app:matrix-derivatives}
contains review material on matrix derivatives and important examples
relevant for our purposes here. Appendix~\ref{app:Derivations-of-integral-formulas}
contains derivations of integral formulas used for analyzing the $\beta$-smoothness
and $\gamma$-strong convexity / concavity of the objective functions
relevant for measured R\'enyi relative entropy.

\section{Preliminaries}

\label{sec:Preliminaries}As mentioned above, one main goal of our
paper is to establish algorithms for optimizing the measured relative
entropy and the measured R\'enyi relative entropy. Due to results
of~\cite{Berta2017}, these quantities can be written as an optimization
of either a convex or concave function of a positive definite matrix.
Our results here strengthen these properties, demonstrating smoothness
and strong convexity / concavity of these functions.

With this in mind, we begin by introducing a general context for optimizing
a scalar, real-valued function of a positive definite matrix, which
applies to all of the optimization tasks considered in our paper.
In the theory of optimization, an ideal scenario occurs when the function
of interest has certain smoothness and convexity / concavity properties,
as reviewed below.

\subsection{Matrix gradient}

Let $f(\omega)$ be a scalar, real-valued function of a positive definite
operator $\omega$. The matrix gradient or matrix derivative of $f$
with respect to $\omega$ is defined in terms of the partial derivative
of $f$ with respect to each matrix element $\omega_{i,j}$ of $\omega$,
denoted as $\frac{\partial f(\omega)}{\partial\omega_{i,j}}$. For
$d\in\mathbb{N}$ and a $d\times d$ matrix $\omega$, these terms
can be arranged into a matrix, using a convenient ordering called
numerator layout notation \cite[Chapter~9]{Magnus2019}, as follows:
\begin{equation}
\frac{\partial f(\omega)}{\partial\omega}\coloneqq\begin{bmatrix}\frac{\partial f(\omega)}{\partial\omega_{1,1}} & \frac{\partial f(\omega)}{\partial\omega_{2,1}} & \cdots & \frac{\partial f(\omega)}{\partial\omega_{d,1}}\\
\frac{\partial f(\omega)}{\partial\omega_{1,2}} & \frac{\partial f(\omega)}{\partial\omega_{2,2}} &  & \frac{\partial f(\omega)}{\partial\omega_{d,2}}\\
\vdots &  & \ddots & \vdots\\
\frac{\partial f(\omega)}{\partial\omega_{1,d}} & \frac{\partial f(\omega)}{\partial\omega_{2,d}} & \cdots & \frac{\partial f(\omega)}{\partial\omega_{d,d}}
\end{bmatrix}.\label{eq:matrix-gradient-def}
\end{equation}
Indeed, the elements of the matrix derivative of $f(\omega)$ form
a matrix $\frac{\partial f(\omega)}{\partial\omega}$ with matrix
elements written in the following alternative ways:
\begin{equation}
\left[\frac{\partial f(\omega)}{\partial\omega}\right]_{i,j}=\langle j|\frac{\partial f(\omega)}{\partial\omega}|i\rangle=\frac{\partial f(\omega)}{\partial\omega_{j,i}},
\end{equation}
where $\left\{ |i\rangle\right\} _{i}$ is the standard (orthonormal)
basis. See Appendix~\ref{app:matrix-derivatives} for a review of
various aspects of matrix derivatives, along with formulas for matrix
derivatives of example functions that we use throughout the paper.
As a simple example, consider that
\begin{align}
\frac{\partial}{\partial\omega_{i,j}}\omega & =\frac{\partial}{\partial\omega_{i,j}}\sum_{k,\ell}\omega_{k,\ell}|k\rangle\!\langle\ell|\label{eq:basic-matrix-deriv-1}\\
 & =\sum_{k,\ell}\frac{\partial}{\partial\omega_{i,j}}\omega_{k,\ell}|k\rangle\!\langle\ell|\\
 & =\sum_{k,\ell}\delta_{i,k}\delta_{j,\ell}|k\rangle\!\langle\ell|\\
 & =|i\rangle\!\langle j|,\label{eq:basic-matrix-deriv-last}
\end{align}
which implies that
\begin{align}
\frac{\partial}{\partial\omega_{j,i}}\Tr\!\left[A\omega\right] & =\Tr\!\left[A\frac{\partial}{\partial\omega_{i,j}}\omega\right]\\
 & =\Tr\!\left[A|j\rangle\!\langle i|\right]\\
 & =\langle i|A|j\rangle,
\end{align}
so that $\frac{\partial}{\partial\omega}\Tr\!\left[A\omega\right]=A$.
As another example, consider that
\begin{align}
\frac{\partial}{\partial\omega_{j,i}}\Tr\!\left[A\omega^{2}\right] & =\frac{\partial}{\partial\omega_{j,i}}\Tr\!\left[A\omega\omega\right]\\
 & =\Tr\!\left[A\left(\frac{\partial}{\partial\omega_{j,i}}\omega\right)\omega\right]+\Tr\!\left[A\omega\left(\frac{\partial}{\partial\omega_{j,i}}\omega\right)\right]\\
 & =\Tr\!\left[\omega A\left(\frac{\partial}{\partial\omega_{j,i}}\omega\right)\right]+\Tr\!\left[A\omega\left(\frac{\partial}{\partial\omega_{j,i}}\omega\right)\right]\\
 & =\Tr\!\left[\left(\omega A+A\omega\right)\left(\frac{\partial}{\partial\omega_{j,i}}\omega\right)\right]\\
 & =\Tr\!\left[\left(\omega A+A\omega\right)|j\rangle\!\langle i|\right]\\
 & =\langle i|\left(\omega A+A\omega\right)|j\rangle,
\end{align}
which implies that
\begin{equation}
\frac{\partial}{\partial\omega}\Tr\!\left[A\omega^{2}\right]=\omega A+A\omega.
\end{equation}
If $A$ is Hermitian, then the functions $\Tr\!\left[A\omega\right]$
and $\Tr\!\left[A\omega^{2}\right]$ are real-valued for positive
definite $\omega$, consistent with our assumption on $f$.

\subsection{Hessian superoperator}

The second matrix derivative of $f(\omega)$ has elements defined
as follows:
\begin{equation}
\frac{\partial}{\partial\omega_{k,\ell}}\frac{\partial}{\partial\omega_{i,j}}f(\omega).
\end{equation}
Generalizing how the matrix derivative is a rank-two tensor, i.e.,
a matrix, the second derivative is a rank-four tensor (with four indices),
which is in correspondence with a superoperator, called the Hessian
superoperator $\mathcal{H}_{\omega}$. We index the elements of the
Hessian superoperator~$\mathcal{H}_{\omega}$ as follows:
\begin{equation}
\langle j|\mathcal{H}_{\omega}(|k\rangle\!\langle\ell|)|i\rangle=\frac{\partial}{\partial\omega_{k,\ell}}\frac{\partial}{\partial\omega_{i,j}}f(\omega),
\end{equation}
where $|k\rangle$ and $|\ell\rangle$ are chosen from the standard
orthonormal basis $\left\{ |i\rangle\right\} _{i}$. Due to the fact
that
\begin{equation}
\frac{\partial}{\partial\omega_{k,\ell}}\frac{\partial}{\partial\omega_{i,j}}f(\omega)=\frac{\partial}{\partial\omega_{i,j}}\frac{\partial}{\partial\omega_{k,\ell}}f(\omega),
\end{equation}
for all $i,j,k,\ell$, the Hessian superoperator $\mathcal{H}_{\omega}$
is self-adjoint, in the following sense:
\begin{equation}
\langle j|\mathcal{H}_{\omega}(|k\rangle\!\langle\ell|)|i\rangle=\langle\ell|\mathcal{H}_{\omega}(|i\rangle\!\langle j|)|k\rangle.
\end{equation}
This is equivalent to
\begin{equation}
\Tr\!\left[|i\rangle\!\langle j|\mathcal{H}_{\omega}(|k\rangle\!\langle\ell|)\right]=\Tr\!\left[|k\rangle\!\langle\ell|\mathcal{H}_{\omega}\!\left(|i\rangle\!\langle j|\right)\right].
\end{equation}
Continuing the second example above, we can determine its Hessian
superoperator by observing that
\begin{align}
\frac{\partial}{\partial\omega_{k,\ell}}\frac{\partial}{\partial\omega}\Tr\!\left[A\omega^{2}\right] & =\frac{\partial}{\partial\omega_{k,\ell}}\left(\omega A+A\omega\right)\\
 & =\left(\frac{\partial}{\partial\omega_{k,\ell}}\omega\right)A+A\left(\frac{\partial}{\partial\omega_{k,\ell}}\omega\right)\\
 & =|k\rangle\!\langle\ell|A+A|k\rangle\!\langle\ell|.
\end{align}
Extending the last line by linearity, this implies that the Hessian
superoperator $\mathcal{H}_{\omega}$ for this example is specified
by its action on an arbitrary matrix $X$ as follows:
\begin{equation}
\mathcal{H}_{\omega}(X)=XA+AX.
\end{equation}

Using the matrix gradient and the Hessian superoperator, the function
$f(\omega)$ has a Taylor series expansion as follows:
\begin{equation}
f(\omega+\tau)=f(\omega)+\left\langle \tau,\frac{\partial f(\omega)}{\partial\omega}\right\rangle +\frac{1}{2}\left\langle \tau,\mathcal{H}_{\omega}(\tau)\right\rangle +o\!\left(\left\Vert \tau\right\Vert _{2}^{2}\right),\label{eq:Taylor-expansion-function-of-matrix}
\end{equation}
where the Hilbert--Schmidt inner product and norm are respectively
defined for $X,Y$ as
\begin{align}
\left\langle X,Y\right\rangle  & \coloneqq\Tr\!\left[X^{\dag}Y\right],\\
\left\Vert X\right\Vert _{2} & \coloneqq\sqrt{\left\langle X,X\right\rangle }.
\end{align}

We should finally indicate that there is a direct correspondence between
these notions and the usual notions of gradient and Hessian matrix.
Namely, we can stack the columns of \eqref{eq:matrix-gradient-def}
(usually called ``vec'' operation), and the matrix gradient becomes
the conventional gradient vector. Furthermore, the Hessian superoperator
can be reorganized as a Hessian matrix using what is called the natural
representation of superoperators \cite[Section~2.2.2]{Watrous2018},
and then there is no distinction between the above notions and the
conventional ones. In more detail, by defining the maximally entangled
vector $|\Gamma\rangle$ as
\begin{equation}
|\Gamma\rangle\coloneqq\sum_{i}|i\rangle\otimes|i\rangle,
\end{equation}
the matrix gradient $\frac{\partial f(\omega)}{\partial\omega}$ corresponds
to the vector gradient $\left(\frac{\partial f(\omega)}{\partial\omega}\otimes I\right)|\Gamma\rangle$,
and there exists a Hessian matrix $H_{\omega}$ satisfying
\begin{equation}
\left\langle \tau,\mathcal{H}_{\omega}(\tau)\right\rangle =\langle\tau|H_{\omega}|\tau\rangle,
\end{equation}
for all $\tau$, where $|\tau\rangle\coloneqq\left(\tau\otimes I\right)|\Gamma\rangle$.
Furthermore, under this mapping, the Hilbert--Schmidt inner product
reduces to the usual Euclidean inner product and the Hilbert--Schmidt
norm reduces to the usual Euclidean norm. The main advantage of working
directly with the matrix gradient and the Hessian superoperator is
that we can more easily derive smoothness and concavity properties
of $f(\omega)$.

\subsection{Smoothness and strong convexity / concavity}

Suppose that $f$ is a convex and smooth function of $\omega$, so
that the expansion in \eqref{eq:Taylor-expansion-function-of-matrix}
exists. Directly related to definitions for smooth functions of vectors,
the function $f$ is called $\beta$-smooth if the largest eigenvalue
of the Hessian superoperator is no larger than $\beta>0$ \cite[Lemma~2.26]{Garrigos2024};
i.e., if
\begin{equation}
\max_{Y:\left\Vert Y\right\Vert _{2}=1}\left\langle Y,\mathcal{H}_{\omega}(Y)\right\rangle \leq\beta,
\end{equation}
and it is called $\gamma$-strongly convex if the smallest eigenvalue
of the Hessian superoperator is no smaller than $\gamma>0$ \cite[Lemma~2.15]{Garrigos2024};
i.e., if
\begin{equation}
\min_{Y:\left\Vert Y\right\Vert _{2}=1}\left\langle Y,\mathcal{H}_{\omega}(Y)\right\rangle \geq\gamma.
\end{equation}
In the above, we have defined the largest and smallest eigenvalues
of the superoperator $\mathcal{H}_{\omega}$ in a variational way
as one can do for matrices, but instead replacing the standard vector
Euclidean inner product with the Hilbert--Schmidt inner product.
In this sense, all of the eigenvalues of the Hessian superoperator
of a $\beta$-smooth and $\gamma$-strongly convex function $f$ are
strictly positive and contained in the interval $\left[\gamma,\beta\right]$.

Complementary to the above, suppose that $f$ is a concave and smooth
function of $\omega$, so that the expansion in \eqref{eq:Taylor-expansion-function-of-matrix}
exists. The function $f$ is called $\beta$-smooth if the smallest
eigenvalue of the Hessian superoperator is no smaller than $-\beta<0$;
i.e., if
\begin{equation}
\min_{Y:\left\Vert Y\right\Vert _{2}=1}\left\langle Y,\mathcal{H}_{\omega}(Y)\right\rangle \geq-\beta,
\end{equation}
and it is called $\gamma$-strongly concave if the largest eigenvalue
of the Hessian superoperator is no larger than $-\gamma<0$; i.e.,
if
\begin{equation}
\max_{Y:\left\Vert Y\right\Vert _{2}=1}\left\langle Y,\mathcal{H}_{\omega}(Y)\right\rangle \leq-\gamma.
\end{equation}
All of the eigenvalues of the Hessian superoperator of a $\beta$-smooth
and $\gamma$-strongly concave function $f$ are strictly negative
and contained in the interval $\left[-\beta,-\gamma\right]$.

It can also be the case that a function $f$ is $\beta$-smooth and
$\gamma$-strongly convex on an operator interval. In fact, this occurs
in all of our developments that follow. That is, let $\mathcal{I}\subset\mathbb{R}_{+}$
be a closed interval. Then it can be the case $f$ is $\beta$-smooth
and $\gamma$-strongly convex on $\omega$ having all of its eigenvalues
in the interval $\mathcal{I}$.

\subsection{Nesterov accelerated optimization of smooth and strongly convex /
concave functions}

A function $f(\omega)$ of a positive definite matrix $\omega$ that
is both $\beta$-smooth and $\gamma$-strongly convex on an operator
interval is ideal for optimization. That is, there exists a well known
procedure called Nesterov accelerated projected gradient descent to
optimize such a function~\cite{Nesterov2004,Beck2009}. Let
\begin{equation}
\kappa\coloneqq\frac{\beta}{\gamma}
\end{equation}
denote the condition number of $f$. Let $A$ be a Hermitian matrix
with spectral decomposition $A=\sum_{i}\lambda_{i}\Pi_{i}$. We define
the clamping of $A$ to a closed interval $\mathcal{I}\subset\mathbb{R}$
as follows:
\begin{equation}
\left[A\right]_{\mathcal{I}}\coloneqq\sum_{i:\lambda_{i}\in\mathcal{I}}\lambda_{i}\Pi_{i}.
\end{equation}
Suppose that $f(\omega)$ is $\beta$-smooth and $\gamma$-strongly
convex for every $\omega$ having all of its eigenvalues contained
in the interval $\mathcal{I}\subset\mathbb{R}_{+}$, and suppose furthermore
that the optimal $\omega^{\star}$ has all of its eigenvalues contained
in the interval $\mathcal{I}$.

The Nesterov accelerated projected gradient descent algorithm for
minimizing $f$ to within $\varepsilon>0$ error of the optimal solution
is as follows:
\begin{lyxalgorithm}
\label{alg:NAPGD}Proceed according to the following steps:
\begin{enumerate}
\item Set $m=0$, and for some $\delta\in\mathcal{I}$, initialize
\begin{equation}
\omega_{0}\leftarrow\xi_{0}\leftarrow\delta I.
\end{equation}
\item While $\left\Vert \left.\frac{\partial}{\partial\omega}f(\omega)\right|_{\omega=\omega_{m}}\right\Vert _{2}>\sqrt{2\gamma\varepsilon}$,
\begin{align}
\xi_{m+1} & \leftarrow\omega_{m}-\frac{1}{\beta}\left.\frac{\partial}{\partial\omega}f(\omega)\right|_{\omega=\omega_{m}},\\
\chi_{m+1} & \leftarrow\left[\xi_{m+1}\right]_{\mathcal{I}},\\
\omega_{m+1} & \leftarrow\chi_{m+1}+\left(\frac{\sqrt{\kappa}-1}{\sqrt{\kappa}+1}\right)\left(\chi_{m+1}-\chi_{m}\right).
\end{align}
Set $m=m+1$.
\end{enumerate}
\end{lyxalgorithm}

The stopping condition $\left\Vert \left.\frac{\partial}{\partial\omega}f(\omega)\right|_{\omega=\omega_{m}}\right\Vert _{2}\leq\sqrt{2\gamma\varepsilon}$
implies that 
\begin{equation}
\left|f(\omega^{\star})-f(\omega_{m})\right|\leq\varepsilon,\label{eq:error-threshold-desired}
\end{equation}
as a consequence of strong convexity. Indeed, an equivalent definition
of strong convexity for a smooth function $f$ is as follows \cite[Lemma~2.14]{Garrigos2024}:
\begin{equation}
f(\omega_{y})\geq f(\omega_{x})+\left\langle \left.\frac{\partial}{\partial\omega}f(\omega)\right|_{\omega=\omega_{x}},\omega_{y}-\omega_{x}\right\rangle +\frac{\gamma}{2}\left\Vert \omega_{y}-\omega_{x}\right\Vert _{2}^{2}.\label{eq:strong-convexity-def}
\end{equation}
Now minimizing both sides of \eqref{eq:strong-convexity-def} with
respect to $\omega_{y}$, and noting that the right-hand side attains
its minimum at $\omega_{y}=\omega_{x}-\frac{1}{\gamma}\left.\frac{\partial}{\partial\omega}f(\omega)\right|_{\omega=\omega_{x}}$
(while enlarging the minimization on the right-hand side to include
Hermitian matrices), leads to the inequality
\begin{equation}
f(\omega^{\star})\geq f(\omega_{x})-\frac{1}{2\gamma}\left\Vert \left.\frac{\partial}{\partial\omega}f(\omega)\right|_{\omega=\omega_{x}}\right\Vert _{2}^{2},
\end{equation}
which implies \eqref{eq:error-threshold-desired} when rearranged.

The number of iterations required by Algorithm~\ref{alg:NAPGD} is
equal to \cite[Theorem~3.18]{Bubeck2015}
\begin{equation}
O\!\left(\sqrt{\kappa}\ln\!\left(\frac{1}{\varepsilon}\right)\right).
\end{equation}

A similar procedure can be employed with the same guarantees if the
function $f$ is instead $\beta$-smooth and $\gamma$-strongly concave:
the only change is that the gradient update step should be $\xi_{m+1}\leftarrow\omega_{m}+\frac{1}{\beta}\left.\frac{\partial}{\partial\omega}f(\omega)\right|_{\omega=\omega_{m}}$,
so that it corresponds to an ascent step searching for a maximum,
rather than a descent step searching for a minimum.

\section{Measured relative entropy}

\label{sec:Measured-relative-entropy}For a positive definite state
$\rho$ and a positive definite operator $\sigma$, the measured relative
entropy is defined as~\cite{Donald1986,Piani2009}
\begin{equation}
D^{M}(\rho\|\sigma)\coloneqq\sup_{\left(\Lambda_{x}\right)_{x\in\mathcal{X}}}\sum_{x\in\mathcal{X}}\Tr\!\left[\Lambda_{x}\rho\right]\ln\!\left(\frac{\Tr\!\left[\Lambda_{x}\rho\right]}{\Tr\!\left[\Lambda_{x}\sigma\right]}\right),\label{eq:meas-rel-ent-def}
\end{equation}
where $\left(\Lambda_{x}\right)_{x\in\mathcal{X}}$ is a positive
operator-valued measure (i.e., satisfying $\Lambda_{x}\geq0$ for
all $x\in\mathcal{X}$ and $\sum_{x\in\mathcal{X}}\Lambda_{x}=I$).
It can be expressed in the following variational form \cite[Lemma~1 and Theorem~2]{Berta2017}:
\begin{equation}
D^{M}(\rho\|\sigma)=\sup_{\omega>0}h_{\rho,\sigma}(\omega),\label{eq:measured-rel-ent-var-form}
\end{equation}
where
\begin{equation}
h_{\rho,\sigma}(\omega)\coloneqq\Tr\!\left[\left(\ln\omega\right)\rho\right]+1-\Tr\!\left[\omega\sigma\right].\label{eq:measured-rel-ent-obj-func-def}
\end{equation}
The optimal $\omega$ in \eqref{eq:measured-rel-ent-var-form} leads
to an optimal measurement in \eqref{eq:meas-rel-ent-def}: indeed,
it is known that the optimal measurement in \eqref{eq:meas-rel-ent-def}
is given by a projective measurement in the eigenbasis of the optimal
$\omega$ \cite[Lemma~1 and Theorem~2]{Berta2017}.

We are interested in determining the optimal value of the objective
function $h_{\rho,\sigma}(\omega)$ by means of a gradient ascent
approach. To delineate such an optimization algorithm for this purpose
and as indicated above, we should determine the matrix gradient and
Hessian superoperator of the function $h_{\rho,\sigma}(\omega)$,
as well as its smoothness and concavity properties. After doing so,
we can then plug directly into Algorithm~\ref{alg:NAPGD}.

\subsection{Matrix gradient for measured relative entropy}

Some of the properties listed in this section were recently presented
in~\cite{Sreekumar2025}; however, we list them here for completeness.
\begin{lem}
\label{lem:matrix-gradient-meas-rel-ent}The matrix gradient $\frac{\partial}{\partial\omega}h_{\rho,\sigma}(\omega)$
of $h_{\rho,\sigma}(\omega)$ in \eqref{eq:measured-rel-ent-obj-func-def}
is given by
\begin{align}
\frac{\partial}{\partial\omega}h_{\rho,\sigma}(\omega) & =\int_{0}^{\infty}ds\ \left(\omega+sI\right)^{-1}\rho\left(\omega+sI\right)^{-1}-\sigma\label{eq:matrix-gradient-rel-ent-1}\\
 & =\sum_{\ell,m}f_{\ln x}^{\left[1\right]}(\lambda_{\ell},\lambda_{m})\Pi_{\ell}\rho\Pi_{m}-\sigma,\label{eq:matrix-gradient-rel-ent-2}
\end{align}
where $\omega=\sum_{k}\lambda_{k}\Pi_{k}$ is the spectral decomposition
of $\omega$ and $f_{\ln x}^{\left[1\right]}(x,y)$ is the first divided
difference of the logarithm function, defined for $x,y>0$ as
\begin{equation}
f_{\ln x}^{\left[1\right]}(x,y)\coloneqq\begin{cases}
\frac{1}{x} & :x=y\\
\frac{\ln x-\ln y}{x-y} & :x\neq y
\end{cases}.
\end{equation}
\end{lem}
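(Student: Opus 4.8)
The plan is to differentiate the three summands of $h_{\rho,\sigma}(\omega)=\Tr[(\ln\omega)\rho]+1-\Tr[\omega\sigma]$ separately and add the results. The constant $1$ contributes nothing, and the term $-\Tr[\omega\sigma]=-\Tr[\sigma\omega]$ is handled immediately by the first worked example in the Preliminaries, which gives $\frac{\partial}{\partial\omega}\Tr[\sigma\omega]=\sigma$ for Hermitian $\sigma$; this produces the $-\sigma$ appearing in both \eqref{eq:matrix-gradient-rel-ent-1} and \eqref{eq:matrix-gradient-rel-ent-2}. All the real work is therefore in computing $\frac{\partial}{\partial\omega}\Tr[(\ln\omega)\rho]$.

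For the first form \eqref{eq:matrix-gradient-rel-ent-1}, I would invoke the integral representation of the operator logarithm,
\[
\ln\omega=\int_{0}^{\infty}\left(\frac{1}{1+s}I-(\omega+sI)^{-1}\right)ds,
\]
valid for positive definite $\omega$, so that $\Tr[(\ln\omega)\rho]=\int_{0}^{\infty}\Tr[(\frac{1}{1+s}I-(\omega+sI)^{-1})\rho]\,ds$. Differentiating elementwise with respect to $\omega_{j,i}$, the piece $\frac{1}{1+s}\Tr[\rho]$ is constant in $\omega$ and drops out, and for the resolvent I would use the standard derivative of a matrix inverse, namely $\frac{\partial}{\partial\omega_{j,i}}(\omega+sI)^{-1}=-(\omega+sI)^{-1}|j\rangle\!\langle i|(\omega+sI)^{-1}$, obtained by differentiating $(\omega+sI)(\omega+sI)^{-1}=I$ and using $\frac{\partial}{\partial\omega_{j,i}}\omega=|j\rangle\!\langle i|$ from \eqref{eq:basic-matrix-deriv-last}. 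Pushing the derivative through the trace and the integral then yields $\frac{\partial}{\partial\omega_{j,i}}\Tr[(\ln\omega)\rho]=\int_{0}^{\infty}\langle i|(\omega+sI)^{-1}\rho(\omega+sI)^{-1}|j\rangle\,ds$, which is exactly the $(i,j)$ matrix element of \eqref{eq:matrix-gradient-rel-ent-1} under the numerator-layout convention $[\partial f/\partial\omega]_{i,j}=\partial f/\partial\omega_{j,i}$.

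To pass to the second form \eqref{eq:matrix-gradient-rel-ent-2}, I would substitute the spectral decomposition $\omega=\sum_{k}\lambda_{k}\Pi_{k}$, so that $(\omega+sI)^{-1}=\sum_{k}(\lambda_{k}+s)^{-1}\Pi_{k}$, and expand the integrand as $\sum_{\ell,m}(\lambda_{\ell}+s)^{-1}(\lambda_{m}+s)^{-1}\Pi_{\ell}\rho\Pi_{m}$. Interchanging the now-scalar $s$-integral with the finite double sum reduces the problem to evaluating $\int_{0}^{\infty}(\lambda_{\ell}+s)^{-1}(\lambda_{m}+s)^{-1}\,ds$, which by partial fractions equals $\frac{\ln\lambda_{\ell}-\ln\lambda_{m}}{\lambda_{\ell}-\lambda_{m}}$ when $\lambda_{\ell}\neq\lambda_{m}$ and $\frac{1}{\lambda_{\ell}}$ when $\lambda_{\ell}=\lambda_{m}$ --- precisely $f_{\ln x}^{[1]}(\lambda_{\ell},\lambda_{m})$. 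This gives \eqref{eq:matrix-gradient-rel-ent-2} directly.

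The routine calculus is straightforward; the step requiring genuine care is justifying the interchange of differentiation (and of the second $s$-integral with the sum) past the improper integral $\int_{0}^{\infty}$. I would control this by noting that for $\omega$ bounded below by $\varepsilon I>0$ the integrand and its $\omega$-derivative decay like $s^{-2}$ as $s\to\infty$ and are bounded near $s=0$, so dominated convergence applies uniformly on a neighborhood of any fixed positive definite $\omega$. A secondary point to keep straight is the index bookkeeping in the numerator-layout convention, together with the symmetry $f_{\ln x}^{[1]}(x,y)=f_{\ln x}^{[1]}(y,x)$, which is what lets the relabeled double sum be written as $\sum_{\ell,m}f_{\ln x}^{[1]}(\lambda_{\ell},\lambda_{m})\Pi_{\ell}\rho\Pi_{m}$ exactly as stated. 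Alternatively, one could bypass the integral representation entirely and obtain \eqref{eq:matrix-gradient-rel-ent-2} in one step from the Daleckii--Krein divided-difference formula for the Fréchet derivative of $\ln$, then integrate to recover \eqref{eq:matrix-gradient-rel-ent-1}; I prefer leading with the integral form since it makes the convergence estimates transparent.
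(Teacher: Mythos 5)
Your proof is correct, but it is organized in the opposite direction from the paper's. The paper's proof is a one-line citation of its appendix machinery: the divided-difference formula \eqref{eq:matrix-deriv-func-f} (obtained by linearity from power-series expansions, in the style of Daleckii--Krein) yields \eqref{eq:matrix-gradient-rel-ent-2}, the quoted corollary \eqref{eq:deriv-log} yields \eqref{eq:matrix-gradient-rel-ent-1}, and \eqref{eq:matrix-deriv-n-1-2} handles the $-\Tr[\omega\sigma]$ term. You instead prove the integral form \eqref{eq:matrix-gradient-rel-ent-1} from scratch, starting from the resolvent representation $\ln\omega=\int_{0}^{\infty}\bigl(\tfrac{1}{1+s}I-(\omega+sI)^{-1}\bigr)ds$ and the derivative of a matrix inverse, and then \emph{derive} the divided-difference form \eqref{eq:matrix-gradient-rel-ent-2} from it by spectral decomposition and the partial-fractions evaluation $\int_{0}^{\infty}(\lambda_{\ell}+s)^{-1}(\lambda_{m}+s)^{-1}ds=f_{\ln x}^{[1]}(\lambda_{\ell},\lambda_{m})$ --- i.e., integral first, divided difference second, whereas the paper's logical route runs power series $\to$ divided difference $\to$ integral. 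Your computations check out (the sign and placement in $\frac{\partial}{\partial\omega_{j,i}}(\omega+sI)^{-1}=-(\omega+sI)^{-1}|j\rangle\!\langle i|(\omega+sI)^{-1}$ are right, and the index bookkeeping matches the numerator-layout convention as used in the paper's worked example $\frac{\partial}{\partial\omega}\Tr[A\omega]=A$), and your dominated-convergence argument for pushing $\partial/\partial\omega_{j,i}$ through the improper integral is exactly the kind of justification the paper leaves implicit. What each approach buys: the paper's is modular, reusing general formulas from Appendix~\ref{app:matrix-derivatives} that also serve the R\'enyi case (Lemma \ref{lem:matrix-gradient-meas-renyi-rel-ent}); yours is self-contained, makes the convergence estimates explicit, and sidesteps the power-series convergence caveat attached to \eqref{eq:matrix-deriv-func-f} (the logarithm has no globally convergent expansion, so applying that formula to $\ln$ on an arbitrary positive definite $\omega$ requires a little extra care that your resolvent route avoids). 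Your closing remark that one could alternatively lead with the Daleckii--Krein formula is, in effect, a description of the paper's own route.
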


\begin{proof}
Eqs.~\eqref{eq:matrix-gradient-rel-ent-1}--\eqref{eq:matrix-gradient-rel-ent-2}
follow from a direct application of \eqref{eq:matrix-deriv-n-1-2},
\eqref{eq:matrix-deriv-func-f}, and \eqref{eq:deriv-log}.
\end{proof}
Thus, the first-order optimality condition for the optimal solution
of \eqref{eq:measured-rel-ent-var-form} is that $\omega>0$ should
satisfy
\begin{equation}
0=\frac{\partial}{\partial\omega}h_{\rho,\sigma}(\omega)\quad\implies\quad\int_{0}^{\infty}ds\ \left(\omega+sI\right)^{-1}\rho\left(\omega+sI\right)^{-1}=\sigma,
\end{equation}
or equivalently,
\begin{equation}
\sum_{\ell,m}f_{\ln x}^{\left[1\right]}(\lambda_{\ell},\lambda_{m})\Pi_{\ell}\rho\Pi_{m}=\sigma.\label{eq:first-order-optimal-condition}
\end{equation}
This allows us to determine conditions on the eigenvalues of $\omega$
that are relevant for performing a gradient ascent search for the
optimal $\omega$. Indeed, Lemma~\ref{lem:omega-bounds} below limits
the search space needed in a gradient ascent search algorithm that
searches for the optimal $\omega$, and the bounds are expressed in
terms of quantities related to the max-relative entropy~\cite{Datta2009}.
\begin{lem}
\label{lem:omega-bounds}The optimal $\omega>0$ for $h_{\rho,\sigma}(\omega)$
in \eqref{eq:measured-rel-ent-obj-func-def} satisfies the following:
\begin{equation}
\left\Vert \rho^{-\frac{1}{2}}\sigma\rho^{-\frac{1}{2}}\right\Vert ^{-1}I\leq\omega\leq\left\Vert \sigma^{-\frac{1}{2}}\rho\sigma^{-\frac{1}{2}}\right\Vert I.\label{eq:optimal-omega-meas-rel-ent}
\end{equation}
\end{lem}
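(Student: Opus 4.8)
The plan is to exploit the first-order optimality condition, which for the optimal $\omega>0$ reads
\begin{equation}
\int_{0}^{\infty}ds\ \left(\omega+sI\right)^{-1}\rho\left(\omega+sI\right)^{-1}=\sigma ,
\end{equation}
and to test it against the extremal eigenvectors of $\omega$. The decisive simplification is that, for any unit eigenvector $|\psi\rangle$ of $\omega$ with eigenvalue $\lambda$, one has $\left(\omega+sI\right)^{-1}|\psi\rangle=\left(\lambda+s\right)^{-1}|\psi\rangle$, so sandwiching the optimality condition between $\langle\psi|$ and $|\psi\rangle$ makes both resolvent factors act as scalars simultaneously and collapses the integral to $\int_{0}^{\infty}ds\,\left(\lambda+s\right)^{-2}=\tfrac{1}{\lambda}$. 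This yields the exact scalar identity
\begin{equation}
\langle\psi|\sigma|\psi\rangle=\frac{1}{\lambda}\langle\psi|\rho|\psi\rangle ,\qquad\text{equivalently}\qquad\lambda=\frac{\langle\psi|\rho|\psi\rangle}{\langle\psi|\sigma|\psi\rangle},
\end{equation}
valid for every eigenpair $(\lambda,|\psi\rangle)$ of $\omega$. Throughout I use that $\rho,\sigma>0$, so that $\rho^{\pm 1/2}$ and $\sigma^{\pm 1/2}$ all exist.

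For the upper bound, I would apply this identity to an eigenvector $|\psi\rangle$ attaining $\lambda_{\max}(\omega)$ and then change variables to $|\chi\rangle\coloneqq\sigma^{1/2}|\psi\rangle$. Since $\langle\psi|\sigma|\psi\rangle=\langle\chi|\chi\rangle$ and $\langle\psi|\rho|\psi\rangle=\langle\chi|\sigma^{-1/2}\rho\sigma^{-1/2}|\chi\rangle$, the ratio becomes a Rayleigh quotient of $\sigma^{-1/2}\rho\sigma^{-1/2}$, bounded above by its largest eigenvalue $\left\Vert \sigma^{-1/2}\rho\sigma^{-1/2}\right\Vert$. Hence $\lambda_{\max}(\omega)\leq\left\Vert \sigma^{-1/2}\rho\sigma^{-1/2}\right\Vert$, which is the right-hand inequality in \eqref{eq:optimal-omega-meas-rel-ent}.

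For the lower bound, I would instead take an eigenvector $|\phi\rangle$ attaining $\lambda_{\min}(\omega)$ and bound the reciprocal $\langle\phi|\sigma|\phi\rangle/\langle\phi|\rho|\phi\rangle$ by passing to $|\eta\rangle\coloneqq\rho^{1/2}|\phi\rangle$. This turns the reciprocal into a Rayleigh quotient of $\rho^{-1/2}\sigma\rho^{-1/2}$, bounded above by $\left\Vert \rho^{-1/2}\sigma\rho^{-1/2}\right\Vert$, so that $\lambda_{\min}(\omega)\geq\left\Vert \rho^{-1/2}\sigma\rho^{-1/2}\right\Vert^{-1}$, the left-hand inequality. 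Since the two bounds control the largest and smallest eigenvalues of $\omega$, they are equivalent to the operator inequalities stated in \eqref{eq:optimal-omega-meas-rel-ent}.

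I do not anticipate a genuine obstacle: once the extremal-eigenvector test is made, the integral separates exactly and the remainder is a pair of congruence transformations followed by the Rayleigh--Ritz characterization of the extremal eigenvalues. The one point requiring care is that $|\psi\rangle$ must be a true eigenvector of $\omega$, not merely an arbitrary test vector, since this is exactly what forces the two resolvent factors to act as a common scalar and lets the $s$-integral factor out; for a generic vector this collapse would fail, so the choice of eigenvectors is essential rather than cosmetic.
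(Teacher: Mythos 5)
Your proof is correct and follows essentially the same route as the paper: both extract from the first-order optimality condition the identity that every eigenvalue of the optimal $\omega$ equals a ratio of expectation values of $\rho$ and $\sigma$ (the paper via the divided-difference form of the gradient tested against the eigenprojections, yielding $\lambda_{k}=\Tr[\Pi_{k}\rho]/\Tr[\Pi_{k}\sigma]$; you via collapsing the integral representation against eigenvectors, which is the same computation in a different guise), and then bound that ratio by $\left\Vert \sigma^{-\frac{1}{2}}\rho\sigma^{-\frac{1}{2}}\right\Vert$ from above and $\left\Vert \rho^{-\frac{1}{2}}\sigma\rho^{-\frac{1}{2}}\right\Vert ^{-1}$ from below. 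The only substantive difference is the last step, where the paper invokes the variational characterization $\max_{\Lambda:0\leq\Lambda\leq I}\Tr[\Lambda\rho]/\Tr[\Lambda\sigma]=\left\Vert \sigma^{-\frac{1}{2}}\rho\sigma^{-\frac{1}{2}}\right\Vert$ from \cite[Lemma~A.4]{Mosonyi2015} as a black box, whereas you prove exactly the two instances needed by the congruences $|\chi\rangle=\sigma^{1/2}|\psi\rangle$ and $|\eta\rangle=\rho^{1/2}|\phi\rangle$ together with Rayleigh--Ritz, making your argument self-contained at no extra cost.
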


\begin{proof}
We apply the map $X\mapsto\Tr[\Pi_{k}X]$ to both sides of \eqref{eq:first-order-optimal-condition}
to find that
\begin{align}
\Tr[\Pi_{k}\sigma] & =\Tr\!\left[\Pi_{k}\left(\sum_{\ell,m}f_{\ln x}^{\left[1\right]}(\lambda_{\ell},\lambda_{m})\Pi_{\ell}\rho\Pi_{m}\right)\right]\\
 & =\sum_{\ell,m}f_{\ln x}^{\left[1\right]}(\lambda_{\ell},\lambda_{m})\Tr\!\left[\Pi_{k}\Pi_{\ell}\rho\Pi_{m}\right]\\
 & =\frac{1}{\lambda_{k}}\Tr\!\left[\Pi_{k}\rho\right],
\end{align}
which implies that every eigenvalue of $\omega$ satisfies
\begin{equation}
\lambda_{k}=\frac{\Tr\!\left[\Pi_{k}\rho\right]}{\Tr\!\left[\Pi_{k}\sigma\right]}.
\end{equation}
Thus, every eigenvalue of $\omega$ satisfies
\begin{equation}
\min_{\Lambda:0\leq\Lambda\leq I}\frac{\Tr\!\left[\Lambda\rho\right]}{\Tr\!\left[\Lambda\sigma\right]}\leq\lambda_{k}=\frac{\Tr\!\left[\Pi_{k}\rho\right]}{\Tr\!\left[\Pi_{k}\sigma\right]}\leq\max_{\Lambda:0\leq\Lambda\leq I}\frac{\Tr\!\left[\Lambda\rho\right]}{\Tr\!\left[\Lambda\sigma\right]}.\label{eq:eigenvalue-bnds-meas-rel-ent}
\end{equation}
Considering that \cite[Lemma~A.4]{Mosonyi2015}
\begin{equation}
\max_{\Lambda:0\leq\Lambda\leq I}\frac{\Tr\!\left[\Lambda\rho\right]}{\Tr\!\left[\Lambda\sigma\right]}=\left\Vert \sigma^{-\frac{1}{2}}\rho\sigma^{-\frac{1}{2}}\right\Vert ,\label{eq:dual-max-rel-entropy}
\end{equation}
and
\begin{align}
\min_{\Lambda:0\leq\Lambda\leq I}\frac{\Tr\!\left[\Lambda\rho\right]}{\Tr\!\left[\Lambda\sigma\right]} & =\min_{\Lambda:0\leq\Lambda\leq I}\left(\frac{\Tr\!\left[\Lambda\sigma\right]}{\Tr\!\left[\Lambda\rho\right]}\right)^{-1}\\
 & =\left(\max_{\Lambda:0\leq\Lambda\leq I}\frac{\Tr\!\left[\Lambda\sigma\right]}{\Tr\!\left[\Lambda\rho\right]}\right)^{-1}\\
 & =\left\Vert \rho^{-\frac{1}{2}}\sigma\rho^{-\frac{1}{2}}\right\Vert ^{-1},\label{eq:dual-flip-max-rel-entropy}
\end{align}
we thus conclude that
\begin{equation}
\left\Vert \rho^{-\frac{1}{2}}\sigma\rho^{-\frac{1}{2}}\right\Vert ^{-1}I\leq\omega\leq\left\Vert \sigma^{-\frac{1}{2}}\rho\sigma^{-\frac{1}{2}}\right\Vert I,
\end{equation}
completing the proof.
\end{proof}
\begin{rem}
We can alternatively express the bounds in Lemma~\ref{lem:omega-bounds}
as follows:
\begin{equation}
\sup_{\mu\geq0}\left\{ \mu:\mu\rho\leq\sigma\right\} I\leq\omega\leq\inf_{\lambda\geq0}\left\{ \lambda:\rho\leq\lambda\sigma\right\} I,
\end{equation}
given that $\left\Vert \sigma^{-\frac{1}{2}}\rho\sigma^{-\frac{1}{2}}\right\Vert =\inf_{\lambda\geq0}\left\{ \lambda:\rho\leq\lambda\sigma\right\} $
and $\left\Vert \rho^{-\frac{1}{2}}\sigma\rho^{-\frac{1}{2}}\right\Vert ^{-1}=\sup_{\mu\geq0}\left\{ \mu:\mu\sigma\leq\rho\right\} $.
\end{rem}

\subsection{Hessian superoperator for measured relative entropy}

To help determine second-order properties of the function $h_{\rho,\sigma}(\omega)$,
such as smoothness and strong concavity, we compute the Hessian superoperator
of $h_{\rho,\sigma}(\omega)$.
\begin{lem}
\label{lem:Hessian-superoperator-measured-rel-ent}The Hessian superoperator
of $h_{\rho,\sigma}(\omega)$ in \eqref{eq:measured-rel-ent-obj-func-def}
is the following self-adjoint, Hermiticity-preserving superoperator:
\begin{equation}
\Phi_{\omega}(X)\coloneqq-\int_{0}^{\infty}ds\ \left(\omega+sI\right)^{-1}\left[X\left(\omega+sI\right)^{-1}\rho+\rho\left(\omega+sI\right)^{-1}X\right]\left(\omega+sI\right)^{-1}.\label{eq:hessian-superoperator}
\end{equation}
\end{lem}

\begin{proof}
Consider that the Hessian superoperator can be computed from
\begin{align}
\frac{\partial}{\partial\omega_{i,j}}\frac{\partial}{\partial\omega}h_{\rho,\sigma}(\omega) & =\frac{\partial}{\partial\omega_{i,j}}\left(\int_{0}^{\infty}ds\ \left(\omega+sI\right)^{-1}\rho\left(\omega+sI\right)^{-1}-\sigma\right)\\
 & =\int_{0}^{\infty}ds\ \frac{\partial}{\partial\omega_{i,j}}\left[\left(\omega+sI\right)^{-1}\rho\left(\omega+sI\right)^{-1}\right],\label{eq:hessian-meas-rel-ent-step-1}
\end{align}
where we applied Lemma~\ref{lem:matrix-gradient-meas-rel-ent} for
the first equality. Now consider that
\begin{multline}
\frac{\partial}{\partial\omega_{i,j}}\left[\left(\omega+sI\right)^{-1}\rho\left(\omega+sI\right)^{-1}\right]=\\
\left[\frac{\partial}{\partial\omega_{i,j}}\left(\omega+sI\right)^{-1}\right]\rho\left(\omega+sI\right)^{-1}+\left(\omega+sI\right)^{-1}\rho\left[\frac{\partial}{\partial\omega_{i,j}}\left(\omega+sI\right)^{-1}\right]\label{eq:hessian-calc-1}
\end{multline}
Recalling that
\begin{equation}
\frac{\partial}{\partial x}A(x)^{-1}=-A(x)^{-1}\left[\frac{\partial}{\partial x}A(x)\right]A(x)^{-1},
\end{equation}
because
\begin{equation}
0=\frac{\partial}{\partial x}I=\frac{\partial}{\partial x}\left(A(x)^{-1}A(x)\right)=\left(\frac{\partial}{\partial x}A(x)^{-1}\right)A(x)+A(x)^{-1}\left(\frac{\partial}{\partial x}A(x)\right),
\end{equation}
it follows that
\begin{align}
\frac{\partial}{\partial\omega_{i,j}}\left(\omega+sI\right)^{-1} & =-\left(\omega+sI\right)^{-1}\left[\frac{\partial}{\partial\omega_{i,j}}\left(\omega+sI\right)\right]\left(\omega+sI\right)^{-1}\\
 & =-\left(\omega+sI\right)^{-1}\left[\frac{\partial}{\partial\omega_{i,j}}\omega\right]\left(\omega+sI\right)^{-1}\\
 & =-\left(\omega+sI\right)^{-1}|i\rangle\!\langle j|\left(\omega+sI\right)^{-1},
\end{align}
where we used the fact that $\frac{\partial}{\partial\omega_{i,j}}\omega=|i\rangle\!\langle j|$
(see \eqref{eq:basic-matrix-deriv-1}--\eqref{eq:basic-matrix-deriv-last}).
Thus,
\begin{align}
 & \left[\frac{\partial}{\partial\omega_{i,j}}\left(\omega+sI\right)^{-1}\right]\rho\left(\omega+sI\right)^{-1}+\left(\omega+sI\right)^{-1}\rho\left[\frac{\partial}{\partial\omega_{i,j}}\left(\omega+sI\right)^{-1}\right]\nonumber \\
 & =-\left(\omega+sI\right)^{-1}|i\rangle\!\langle j|\left(\omega+sI\right)^{-1}\rho\left(\omega+sI\right)^{-1}\nonumber \\
 & \qquad-\left(\omega+sI\right)^{-1}\rho\left(\omega+sI\right)^{-1}|i\rangle\!\langle j|\left(\omega+sI\right)^{-1}\\
 & =-\left(\omega+sI\right)^{-1}\left[|i\rangle\!\langle j|\left(\omega+sI\right)^{-1}\rho+\rho\left(\omega+sI\right)^{-1}|i\rangle\!\langle j|\right]\left(\omega+sI\right)^{-1},\label{eq:hessian-calc-final}
\end{align}
implying that the Hessian superoperator is given by \eqref{eq:hessian-superoperator},
after combining with \eqref{eq:hessian-meas-rel-ent-step-1}.

The superoperator in \eqref{eq:hessian-superoperator} is Hermiticity
preserving by inspection.

Defining $\Phi_{s,\rho,\omega}(X)$ as
\begin{equation}
\Phi_{s,\rho,\omega}(X)\coloneqq\left(\omega+sI\right)^{-1}\left[X\left(\omega+sI\right)^{-1}\rho+\rho\left(\omega+sI\right)^{-1}X\right]\left(\omega+sI\right)^{-1},\label{eq:Phi-s-rho-superop}
\end{equation}
and noting that
\begin{equation}
\Phi_{\omega}(X)=-\int_{0}^{\infty}ds\ \Phi_{s,\rho,\omega}(X),
\end{equation}
consider that
\begin{align}
\left\langle Y,\Phi_{s,\rho,\omega}(X)\right\rangle  & =\left\langle Y,\left(\omega+sI\right)^{-1}\left[X\left(\omega+sI\right)^{-1}\rho+\rho\left(\omega+sI\right)^{-1}X\right]\left(\omega+sI\right)^{-1}\right\rangle \label{eq:self-adjoint-hessian-superop-1}\\
 & =\left\langle Y,\left(\omega+sI\right)^{-1}X\left(\omega+sI\right)^{-1}\rho\left(\omega+sI\right)^{-1}\right\rangle \nonumber \\
 & \qquad+\left\langle Y,\left(\omega+sI\right)^{-1}\rho\left(\omega+sI\right)^{-1}X\left(\omega+sI\right)^{-1}\right\rangle \\
 & =\Tr\!\left[Y^{\dag}\left(\omega+sI\right)^{-1}X\left(\omega+sI\right)^{-1}\rho\left(\omega+sI\right)^{-1}\right]\nonumber \\
 & \qquad+\Tr\!\left[Y^{\dag}\left(\omega+sI\right)^{-1}\rho\left(\omega+sI\right)^{-1}X\left(\omega+sI\right)^{-1}\right]\\
 & =\Tr\!\left[\left(\omega+sI\right)^{-1}\rho\left(\omega+sI\right)^{-1}Y^{\dag}\left(\omega+sI\right)^{-1}X\right]\nonumber \\
 & \qquad+\Tr\!\left[\left(\omega+sI\right)^{-1}Y^{\dag}\left(\omega+sI\right)^{-1}\rho\left(\omega+sI\right)^{-1}X\right]\\
 & =\Tr\!\left[\left[\left(\omega+sI\right)^{-1}Y\left(\omega+sI\right)^{-1}\rho\left(\omega+sI\right)^{-1}\right]^{\dag}X\right]\nonumber \\
 & \qquad+\Tr\!\left[\left[\left(\omega+sI\right)^{-1}\rho\left(\omega+sI\right)^{-1}Y\left(\omega+sI\right)^{-1}\right]^{\dag}X\right]\\
 & =\left\langle \Phi_{s,\rho,\omega}(Y),X\right\rangle .\label{eq:self-adjoint-hessian-superop-last}
\end{align}
It follows that $\Phi_{\omega}(X)$ is self-adjoint after observing
that
\begin{align}
\left\langle Y,\Phi_{\omega}(X)\right\rangle  & =-\int_{0}^{\infty}ds\ \left\langle Y,\Phi_{s,\rho,\omega}(X)\right\rangle \\
 & =-\int_{0}^{\infty}ds\ \left\langle \Phi_{s,\rho,\omega}(Y),X\right\rangle \\
 & =\left\langle \Phi_{\omega}(Y),X\right\rangle ,
\end{align}
thus concluding the proof.
\end{proof}

\subsection{Smoothness and strong concavity for measured relative entropy}
\begin{lem}
For $\omega>0$ satisfying \eqref{eq:optimal-omega-meas-rel-ent},
the Hessian superoperator $\Phi_{\omega}$ in \eqref{eq:hessian-superoperator}
has its minimum and maximum eigenvalues bounded as follows:
\begin{equation}
-\lambda_{\max}(\rho)\left\Vert \rho^{-\frac{1}{2}}\sigma\rho^{-\frac{1}{2}}\right\Vert ^{2}\leq\min_{Y:\left\Vert Y\right\Vert _{2}=1}\left\langle Y,\Phi_{\omega}(Y)\right\rangle \leq\max_{Y:\left\Vert Y\right\Vert _{2}=1}\left\langle Y,\Phi_{\omega}(Y)\right\rangle \leq-\frac{\lambda_{\min}(\rho)}{\left\Vert \sigma^{-\frac{1}{2}}\rho\sigma^{-\frac{1}{2}}\right\Vert ^{2}}.
\end{equation}
\end{lem}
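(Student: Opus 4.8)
The plan is to bound the quadratic form $\langle Y,\Phi_\omega(Y)\rangle$ for all $Y$ with $\|Y\|_2=1$ by separating the dependence on $\rho$ from an $s$-integral that can be evaluated in closed form. Writing $R_s\coloneqq(\omega+sI)^{-1}$ and recalling that $\Phi_\omega=-\int_0^\infty ds\,\Phi_{s,\rho,\omega}$ with $\Phi_{s,\rho,\omega}$ as in \eqref{eq:Phi-s-rho-superop}, I would set $X=Y$ in the trace identities already established in the proof of Lemma \ref{lem:Hessian-superoperator-measured-rel-ent} (see \eqref{eq:self-adjoint-hessian-superop-1}) and cyclically rearrange each of the two resulting traces into the form $\Tr[\rho B]$ with $B\succeq0$. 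Concretely, the first trace becomes $\Tr[\rho\,R_sY^\dag R_sYR_s]$ and the second becomes $\Tr[\rho\,R_sYR_sY^\dag R_s]$; each bracketed operator is positive semidefinite because, e.g., $R_sY^\dag R_sYR_s=(PR_s^{1/2})^\dag(PR_s^{1/2})$ with $P\coloneqq R_s^{1/2}YR_s^{1/2}$.

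Given this, the next step uses the hypothesis $\lambda_{\min}(\rho)I\leq\rho\leq\lambda_{\max}(\rho)I$ together with the elementary fact that $\Tr[\rho B]$ is monotone in $\rho$ whenever $B\succeq0$. This sandwiches the integrand $\langle Y,\Phi_{s,\rho,\omega}(Y)\rangle$ between $\lambda_{\min}(\rho)$ and $\lambda_{\max}(\rho)$ times the $\rho$-free quantity $\Tr[R_sY^\dag R_sYR_s]+\Tr[R_sYR_sY^\dag R_s]$. Integrating over $s$ and using $\langle Y,\Phi_\omega(Y)\rangle=-\int_0^\infty ds\,\langle Y,\Phi_{s,\rho,\omega}(Y)\rangle$ reduces the problem to evaluating $\int_0^\infty ds\,\big(\Tr[R_sY^\dag R_sYR_s]+\Tr[R_sYR_sY^\dag R_s]\big)$.

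To evaluate this, I would pass to the eigenbasis $\omega=\sum_i\mu_i|i\rangle\!\langle i|$, in which $R_s$ is diagonal with entries $(\mu_i+s)^{-1}$. A direct expansion then shows the combined integrand equals $\sum_{i,j}|Y_{ij}|^2\,\frac{1}{(\mu_i+s)(\mu_j+s)}\big(\frac{1}{\mu_i+s}+\frac{1}{\mu_j+s}\big)$. The key computational step is the exact integral identity $\int_0^\infty\frac{1}{(\mu_i+s)(\mu_j+s)}\big(\frac{1}{\mu_i+s}+\frac{1}{\mu_j+s}\big)\,ds=\frac{1}{\mu_i\mu_j}$, obtained by recognizing the integrand as $-\frac{d}{ds}\big[(\mu_i+s)^{-1}(\mu_j+s)^{-1}\big]$ and telescoping. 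This closed-form evaluation is the part I expect to be the main obstacle, since it is what makes the resulting bound clean; hence $\int_0^\infty ds\,(\cdots)=\sum_{i,j}|Y_{ij}|^2/(\mu_i\mu_j)$.

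Finally, Lemma \ref{lem:omega-bounds} and the hypothesis \eqref{eq:optimal-omega-meas-rel-ent} guarantee that every eigenvalue $\mu_i$ of $\omega$ lies in $[a,b]$ with $a=\|\rho^{-1/2}\sigma\rho^{-1/2}\|^{-1}$ and $b=\|\sigma^{-1/2}\rho\sigma^{-1/2}\|$, so $1/b^2\leq1/(\mu_i\mu_j)\leq1/a^2$. Using $\|Y\|_2^2=\sum_{i,j}|Y_{ij}|^2=1$, this bounds $\int_0^\infty ds\,(\cdots)$ between $1/b^2$ and $1/a^2$. Multiplying by $\lambda_{\min}(\rho)$ and $\lambda_{\max}(\rho)$ as appropriate, negating, and substituting the values of $a$ and $b$ then yields $-\lambda_{\max}(\rho)\|\rho^{-1/2}\sigma\rho^{-1/2}\|^2\leq\langle Y,\Phi_\omega(Y)\rangle\leq-\lambda_{\min}(\rho)/\|\sigma^{-1/2}\rho\sigma^{-1/2}\|^2$ for every unit-norm $Y$, which is exactly the claimed bound on the extreme eigenvalues of $\Phi_\omega$.
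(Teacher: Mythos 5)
Your proof is correct, and it takes a genuinely different computational route from the paper's, even though both start from the same decomposition $\langle Y,\Phi_\omega(Y)\rangle=-\int_0^\infty ds\,\langle Y,\Phi_{s,\rho,\omega}(Y)\rangle$ with $\Phi_{s,\rho,\omega}$ as in \eqref{eq:Phi-s-rho-superop}. The paper never diagonalizes $\omega$: it bounds the resolvents by scalar multiples of the identity, $(u+s)^{-1}I\le(\omega+sI)^{-1}\le(\ell+s)^{-1}I$ with $u=\|\sigma^{-\frac{1}{2}}\rho\sigma^{-\frac{1}{2}}\|$ and $\ell=\|\rho^{-\frac{1}{2}}\sigma\rho^{-\frac{1}{2}}\|^{-1}$, pointwise in $s$ inside the traces (three resolvent factors, plus clamping $\rho$, yielding $2\lambda_{\min}(\rho)(u+s)^{-3}$ and $2\lambda_{\max}(\rho)(\ell+s)^{-3}$), and then evaluates the single scalar integral $\int_0^\infty(u+s)^{-3}ds=\frac{1}{2u^2}$. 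You instead extract $\rho$ first via the PSD-weighted trace inequality $\lambda_{\min}(\rho)\Tr[B]\le\Tr[\rho B]\le\lambda_{\max}(\rho)\Tr[B]$ --- your factorizations $R_sY^\dag R_sYR_s=(PR_s^{1/2})^\dag(PR_s^{1/2})$ with $P=R_s^{1/2}YR_s^{1/2}$ are correct and are exactly what licenses this step --- and then evaluate the remaining $\rho$-free integral exactly: your index bookkeeping correctly combines the two traces into $\sum_{i,j}|Y_{ij}|^2\,r_ir_j(r_i+r_j)$ with $r_i=(\mu_i+s)^{-1}$, and since $\frac{d}{ds}r_i=-r_i^2$, the telescoping identity $r_ir_j(r_i+r_j)=-\frac{d}{ds}(r_ir_j)$ indeed gives $\int_0^\infty ds\,T(s)=\sum_{i,j}|Y_{ij}|^2/(\mu_i\mu_j)=\langle Y,\omega^{-1}Y\omega^{-1}\rangle$, where $T(s)\coloneqq\Tr[R_sY^\dag R_sYR_s]+\Tr[R_sYR_sY^\dag R_s]$. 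What your route buys: the $s$-integral is computed exactly rather than bounded, so all slack is localized in the single final clamp of the eigenvalues $\mu_i$ via \eqref{eq:optimal-omega-meas-rel-ent}; in quadratic-form terms you actually prove the sharper, $Y$-dependent sandwich $\lambda_{\min}(\rho)\langle Y,\omega^{-1}Y\omega^{-1}\rangle\le-\langle Y,\Phi_\omega(Y)\rangle\le\lambda_{\max}(\rho)\langle Y,\omega^{-1}Y\omega^{-1}\rangle$, which is also a good sanity check, since for $\rho$ proportional to the identity the Hessian of $\Tr[\rho\ln\omega]$ is exactly proportional to $-\omega^{-1}(\cdot)\,\omega^{-1}$. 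What the paper's route buys in exchange is that it uses only operator inequalities plus one scalar integral, with no spectral decomposition of $\omega$, and that template transfers verbatim to the R\'enyi Hessians in Lemma \ref{lem:smoothness-strong-con-meas-renyi}, where the extra weight $t^{\frac{\alpha-1}{\alpha}}$ destroys your clean telescoping antiderivative (the analogous exact integral still exists as a divided-difference expression, but it no longer factors as $1/(\mu_i\mu_j)$). Both routes land on identical final constants.
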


\begin{proof}
We begin by bounding the maximum and minimum eigenvalues of the superoperator~$\Phi_{s,\rho,\omega}$
defined in \eqref{eq:Phi-s-rho-superop}. Defining
\begin{equation}
u\equiv\left\Vert \sigma^{-\frac{1}{2}}\rho\sigma^{-\frac{1}{2}}\right\Vert ,
\end{equation}
and letting $Y$ satisfy $\left\Vert Y\right\Vert _{2}=1$, the minimum
eigenvalue of $\Phi_{s,\rho,\omega}$ can be bounded from below as
follows:
\begin{align}
\left\langle Y,\Phi_{s,\rho,\omega}(Y)\right\rangle  & =\Tr\!\left[Y^{\dag}\left(\omega+sI\right)^{-1}Y\left(\omega+sI\right)^{-1}\rho\left(\omega+sI\right)^{-1}\right]\nonumber \\
 & \qquad+\Tr\!\left[Y^{\dag}\left(\omega+sI\right)^{-1}\rho\left(\omega+sI\right)^{-1}Y\left(\omega+sI\right)^{-1}\right]\label{eq:Phi-s-rho-min-eig-1}\\
 & =\Tr\!\left[Y^{\dag}\left(\omega+sI\right)^{-1}Y\left(\omega+sI\right)^{-1}\rho\left(\omega+sI\right)^{-1}\right]\nonumber \\
 & \qquad+\Tr\!\left[Y\left(\omega+sI\right)^{-1}Y^{\dag}\left(\omega+sI\right)^{-1}\rho\left(\omega+sI\right)^{-1}\right]\\
 & \geq\Tr\!\left[Y^{\dag}\left(\left(u+s\right)^{-1}I\right)Y\left(\omega+sI\right)^{-1}\rho\left(\omega+sI\right)^{-1}\right]\nonumber \\
 & \qquad+\Tr\!\left[Y\left(\left(u+s\right)^{-1}I\right)Y^{\dag}\left(\omega+sI\right)^{-1}\rho\left(\omega+sI\right)^{-1}\right]\\
 & =\left(u+s\right)^{-1}\Tr\!\left[\left(Y^{\dag}Y+YY^{\dag}\right)\left(\omega+sI\right)^{-1}\rho\left(\omega+sI\right)^{-1}\right]\\
 & \geq\left(u+s\right)^{-1}\Tr\!\left[\left(Y^{\dag}Y+YY^{\dag}\right)\left(\omega+sI\right)^{-1}\left(\lambda_{\min}(\rho)I\right)\left(\omega+sI\right)^{-1}\right]\\
 & =\lambda_{\min}(\rho)\left(u+s\right)^{-1}\Tr\!\left[\left(Y^{\dag}Y+YY^{\dag}\right)\left(\omega+sI\right)^{-2}\right]\\
 & \geq\lambda_{\min}(\rho)\left(u+s\right)^{-3}\Tr\!\left[Y^{\dag}Y+YY^{\dag}\right]\\
 & =2\lambda_{\min}(\rho)\left(u+s\right)^{-3}.\label{eq:Phi-s-rho-min-eig-last}
\end{align}
For the first inequality, we used the fact that $\left(\omega+sI\right)^{-1}\geq\left(u+s\right)^{-1}I$,
which follows from the assumption that $\omega$ satisfies \eqref{eq:optimal-omega-meas-rel-ent}.
For the penultimate quality, we used that $\Tr\!\left[Y^{\dag}Y+YY^{\dag}\right]=2$
because $\left\Vert Y\right\Vert _{2}=1$ by assumption.

Defining
\begin{equation}
\ell\equiv\left\Vert \rho^{-\frac{1}{2}}\sigma\rho^{-\frac{1}{2}}\right\Vert ^{-1}
\end{equation}
and letting $Y$ satisfy $\left\Vert Y\right\Vert _{2}=1$, the maximum
eigenvalue of $\Phi_{s,\rho,\omega}$ can be bounded from above as
follows: 
\begin{align}
\left\langle Y,\Phi_{s,\rho,\omega}(Y)\right\rangle  & =\Tr\!\left[Y^{\dag}\left(\omega+sI\right)^{-1}Y\left(\omega+sI\right)^{-1}\rho\left(\omega+sI\right)^{-1}\right]\nonumber \\
 & \qquad+\Tr\!\left[Y\left(\omega+sI\right)^{-1}Y^{\dag}\left(\omega+sI\right)^{-1}\rho\left(\omega+sI\right)^{-1}\right]\label{eq:Phi-s-rho-max-eig-1}\\
 & \leq\Tr\!\left[Y^{\dag}\left(\left(\ell+s\right)I\right)^{-1}Y\left(\omega+sI\right)^{-1}\rho\left(\omega+sI\right)^{-1}\right]\nonumber \\
 & \qquad+\Tr\!\left[Y\left(\left(\ell+s\right)I\right)^{-1}Y^{\dag}\left(\omega+sI\right)^{-1}\rho\left(\omega+sI\right)^{-1}\right]\\
 & =\left(\ell+s\right)^{-1}\Tr\!\left[\left(Y^{\dag}Y+YY^{\dag}\right)\left(\omega+sI\right)^{-1}\rho\left(\omega+sI\right)^{-1}\right]\\
 & \leq\left(\ell+s\right)^{-1}\Tr\!\left[\left(Y^{\dag}Y+YY^{\dag}\right)\left(\omega+sI\right)^{-1}\left(\lambda_{\max}(\rho)I\right)\left(\omega+sI\right)^{-1}\right]\\
 & =\lambda_{\max}(\rho)\left(\ell+s\right)^{-1}\Tr\!\left[\left(Y^{\dag}Y+YY^{\dag}\right)\left(\omega+sI\right)^{-2}\right]\\
 & \leq\lambda_{\max}(\rho)\left(\ell+s\right)^{-3}\Tr\!\left[Y^{\dag}Y+YY^{\dag}\right]\\
 & =2\lambda_{\max}(\rho)\left(\ell+s\right)^{-3},\label{eq:Phi-s-rho-max-eig-last}
\end{align}
where we followed a similar line of reasoning and again invoked the
assumption that $\omega$ satisfies \eqref{eq:optimal-omega-meas-rel-ent}.

We now apply these bounds to $\Phi_{\omega}$, by observing that 
\begin{align}
\left\langle Y,\Phi_{\omega}(Y)\right\rangle  & =-\int_{0}^{\infty}ds\ \left\langle Y,\Phi_{s,\rho,\omega}(Y)\right\rangle \\
 & \leq-2\lambda_{\min}(\rho)\int_{0}^{\infty}ds\ \left(u+s\right)^{-3}\\
 & =-\frac{2\lambda_{\min}(\rho)}{2u^{2}}\\
 & =-\frac{\lambda_{\min}(\rho)}{\left\Vert \sigma^{-\frac{1}{2}}\rho\sigma^{-\frac{1}{2}}\right\Vert ^{2}}.
\end{align}
and
\begin{align}
\left\langle Y,\Phi_{\omega}(Y)\right\rangle  & =-\int_{0}^{\infty}ds\ \left\langle Y,\Phi_{s,\rho,\omega}(Y)\right\rangle \\
 & \geq-2\lambda_{\max}(\rho)\int_{0}^{\infty}ds\ \left(\ell+s\right)^{-3}\\
 & =-\frac{2\lambda_{\max}(\rho)}{2\ell^{2}}\\
 & =-\lambda_{\max}(\rho)\left\Vert \rho^{-\frac{1}{2}}\sigma\rho^{-\frac{1}{2}}\right\Vert ^{2},
\end{align}
thus concluding the proof.
\end{proof}
\begin{cor}
\label{cor:smoothness-strong-concavity-meas-rel-ent}On the operator
interval
\begin{equation}
\omega\in\left[\left\Vert \rho^{-\frac{1}{2}}\sigma\rho^{-\frac{1}{2}}\right\Vert ^{-1}I,\left\Vert \sigma^{-\frac{1}{2}}\rho\sigma^{-\frac{1}{2}}\right\Vert I\right],
\end{equation}
the function $h_{\rho,\sigma}(\omega)$ is $\beta$-smooth with parameter
\begin{equation}
\beta\coloneqq\lambda_{\max}(\rho)\left\Vert \rho^{-\frac{1}{2}}\sigma\rho^{-\frac{1}{2}}\right\Vert ^{2},
\end{equation}
and it is $\gamma$-strongly concave with parameter
\begin{equation}
\gamma\coloneqq\frac{\lambda_{\min}(\rho)}{\left\Vert \sigma^{-\frac{1}{2}}\rho\sigma^{-\frac{1}{2}}\right\Vert ^{2}}.
\end{equation}
Thus, on this operator interval, the condition number $\kappa$ of
the function $h_{\rho,\sigma}(\omega)$ is given by
\begin{equation}
\kappa\coloneqq\frac{\beta}{\gamma}=\frac{\lambda_{\max}(\rho)}{\lambda_{\min}(\rho)}\left(\left\Vert \rho^{-\frac{1}{2}}\sigma\rho^{-\frac{1}{2}}\right\Vert \left\Vert \sigma^{-\frac{1}{2}}\rho\sigma^{-\frac{1}{2}}\right\Vert \right)^{2}.\label{eq:kappa-meas-rel-ent}
\end{equation}
\end{cor}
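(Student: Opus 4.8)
The plan is to simply translate the eigenvalue bounds on the Hessian superoperator $\Phi_{\omega}$ established in the Lemma immediately preceding this corollary into the language of $\beta$-smoothness and $\gamma$-strong concavity, using the definitions recorded in the Preliminaries. The key observation is that the operator interval appearing in the statement of this corollary is \emph{exactly} the hypothesis \eqref{eq:optimal-omega-meas-rel-ent} under which the preceding Lemma holds, so its conclusion applies verbatim for every $\omega$ whose eigenvalues lie in this interval. Moreover, $\Phi_{\omega}$ was identified as the Hessian superoperator of $h_{\rho,\sigma}$ in Lemma \ref{lem:Hessian-superoperator-measured-rel-ent}, so the variational eigenvalue bounds on $\Phi_{\omega}$ are precisely the quantities that enter the definitions of smoothness and strong concavity.

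Concretely, I would first recall that for a concave smooth function, $\beta$-smoothness requires $\min_{Y:\|Y\|_{2}=1}\langle Y,\mathcal{H}_{\omega}(Y)\rangle\geq-\beta$, while $\gamma$-strong concavity requires $\max_{Y:\|Y\|_{2}=1}\langle Y,\mathcal{H}_{\omega}(Y)\rangle\leq-\gamma$. Reading off the lower bound from the preceding Lemma, namely $\min_{Y:\|Y\|_{2}=1}\langle Y,\Phi_{\omega}(Y)\rangle\geq-\lambda_{\max}(\rho)\|\rho^{-1/2}\sigma\rho^{-1/2}\|^{2}$, I identify $-\beta$ with $-\lambda_{\max}(\rho)\|\rho^{-1/2}\sigma\rho^{-1/2}\|^{2}$ and conclude that $h_{\rho,\sigma}$ is $\beta$-smooth with $\beta=\lambda_{\max}(\rho)\|\rho^{-1/2}\sigma\rho^{-1/2}\|^{2}$. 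Similarly, from the upper bound $\max_{Y:\|Y\|_{2}=1}\langle Y,\Phi_{\omega}(Y)\rangle\leq-\lambda_{\min}(\rho)/\|\sigma^{-1/2}\rho\sigma^{-1/2}\|^{2}$, I identify $-\gamma$ with $-\lambda_{\min}(\rho)/\|\sigma^{-1/2}\rho\sigma^{-1/2}\|^{2}$ and conclude $\gamma$-strong concavity with $\gamma=\lambda_{\min}(\rho)/\|\sigma^{-1/2}\rho\sigma^{-1/2}\|^{2}$.

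Finally, I would obtain the condition number by forming the ratio $\kappa=\beta/\gamma$; substituting the two expressions and simplifying the product yields the claimed formula \eqref{eq:kappa-meas-rel-ent}, with the ratio $\lambda_{\max}(\rho)/\lambda_{\min}(\rho)$ multiplying the squared product of the two max-relative-entropy norms. There is no genuine obstacle at this stage: all of the analytic content---in particular the operator-monotonicity estimates bounding $\langle Y,\Phi_{s,\rho,\omega}(Y)\rangle$ and the subsequent integration over $s$ that produces the factor $1/(2u^{2})$ and $1/(2\ell^{2})$---was already carried out in the preceding Lemma. Thus this corollary is a direct restatement of those bounds in optimization-theoretic terminology, combined with one elementary division to compute $\kappa$.
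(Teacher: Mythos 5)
Your proposal is correct and matches the paper's proof, which likewise treats the corollary as an immediate consequence of the eigenvalue bounds on the Hessian superoperator $\Phi_{\omega}$ established in the preceding lemma, read through the definitions of $\beta$-smoothness and $\gamma$-strong concavity for concave functions, with $\kappa=\beta/\gamma$ computed by direct substitution. Your identification of the operator interval with the hypothesis \eqref{eq:optimal-omega-meas-rel-ent} is exactly the intended reading, so nothing is missing.
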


\begin{proof}
This is an immediate consequence of the eigenvalue bounds from Lemma
\ref{lem:Hessian-superoperator-measured-rel-ent}.
\end{proof}

\subsection{Nesterov accelerated optimization of measured relative entropy}

We employ Nesterov accelerated projected gradient ascent in order
to perform the optimization in \eqref{eq:measured-rel-ent-var-form}.
\begin{lyxalgorithm}
\label{alg:measured-rel-ent}Proceed according to the following steps:
\begin{enumerate}
\item Set $m=0$, and initialize
\begin{equation}
\omega_{0}\leftarrow\xi_{0}\leftarrow\frac{1}{2}\left(\left\Vert \rho^{-\frac{1}{2}}\sigma\rho^{-\frac{1}{2}}\right\Vert ^{-1}+\left\Vert \sigma^{-\frac{1}{2}}\rho\sigma^{-\frac{1}{2}}\right\Vert \right)I.
\end{equation}
\item While $\left\Vert \left.\frac{\partial}{\partial\omega}h_{\rho,\sigma}(\omega)\right|_{\omega=\omega_{m}}\right\Vert _{2}>\sqrt{2\gamma\varepsilon}$,
\begin{align}
\xi_{m+1} & \leftarrow\omega_{m}+\frac{1}{\beta}\left.\frac{\partial}{\partial\omega}h_{\rho,\sigma}(\omega)\right|_{\omega=\omega_{m}},\\
\chi_{m+1} & \leftarrow\left[\xi_{m+1}\right]_{\mathcal{I}},\\
\omega_{m+1} & \leftarrow\chi_{m+1}+\left(\frac{\sqrt{\kappa}-1}{\sqrt{\kappa}+1}\right)\left(\chi_{m+1}-\chi_{m}\right),
\end{align}
where
\begin{equation}
\mathcal{I}\coloneqq\left[\left\Vert \rho^{-\frac{1}{2}}\sigma\rho^{-\frac{1}{2}}\right\Vert ^{-1},\left\Vert \sigma^{-\frac{1}{2}}\rho\sigma^{-\frac{1}{2}}\right\Vert \right],
\end{equation}
and $\gamma$, $\beta$, and $\kappa$ are defined in Corollary~\ref{cor:smoothness-strong-concavity-meas-rel-ent}.
Set $m=m+1$.
\item Output $h_{\rho,\sigma}(\omega_{m})$ as an estimate of $D^{M}(\rho\|\sigma)$
satisfying \eqref{eq:meas-rel-ent-eps-error}.
\end{enumerate}
\end{lyxalgorithm}

The stopping condition $\left\Vert \left.\frac{\partial}{\partial\omega}h_{\rho,\sigma}(\omega)\right|_{\omega=\omega_{m}}\right\Vert _{2}\leq\sqrt{2\gamma\varepsilon}$
implies that 
\begin{equation}
\left|D^{M}(\rho\|\sigma)-h_{\rho,\sigma}(\omega_{m})\right|\leq\varepsilon,\label{eq:meas-rel-ent-eps-error}
\end{equation}
as a consequence of the same reasoning that led to \eqref{eq:error-threshold-desired}.
The number of iterations required by the above algorithm to guarantee
\eqref{eq:meas-rel-ent-eps-error} is equal to \cite[Theorem~3.18]{Bubeck2015}
\begin{equation}
O\!\left(\sqrt{\kappa}\ln\!\left(\frac{1}{\varepsilon}\right)\right).
\end{equation}

\section{Measured R\'enyi relative entropy}

\label{sec:Measured-Renyi-relative}For a positive definite state
$\rho$, a positive definite operator $\sigma$, and $\alpha\in\left(0,1\right)\cup\left(1,\infty\right)$,
the measured R\'enyi relative entropy is defined as \cite[Eqs.~(3.116)--(3.117)]{Fuchs1996}
\begin{equation}
D_{\alpha}^{M}(\rho\|\sigma)\coloneqq\frac{1}{\alpha-1}\ln Q_{\alpha}^{M}(\rho\|\sigma),\label{eq:measured-renyi-rel-ent-def}
\end{equation}
where
\begin{equation}
Q_{\alpha}^{M}(\rho\|\sigma)\coloneqq\begin{cases}
\inf_{\left(\Lambda_{x}\right)_{x\in\mathcal{X}}}\sum_{x\in\mathcal{X}}\left(\Tr\!\left[\Lambda_{x}\rho\right]\right)^{\alpha}\left(\Tr\!\left[\Lambda_{x}\sigma\right]\right)^{1-\alpha} & :\alpha\in\left(0,1\right)\\
\sup_{\left(\Lambda_{x}\right)_{x\in\mathcal{X}}}\sum_{x\in\mathcal{X}}\left(\Tr\!\left[\Lambda_{x}\rho\right]\right)^{\alpha}\left(\Tr\!\left[\Lambda_{x}\sigma\right]\right)^{1-\alpha} & :\alpha\in\left(1,\infty\right)
\end{cases}.
\end{equation}
As for measured relative entropy, $\left(\Lambda_{x}\right)_{x\in\mathcal{X}}$
is a positive operator-valued measure (i.e., satisfying $\Lambda_{x}\geq0$
for all $x\in\mathcal{X}$ and $\sum_{x\in\mathcal{X}}\Lambda_{x}=I$).
The latter quantity $Q_{\alpha}^{M}(\rho\|\sigma)$ is called measured
R\'enyi relative quasi-entropy, and it can be expressed in the following
variational form \cite[Lemma~3]{Berta2017}:
\begin{equation}
Q_{\alpha}^{M}(\rho\|\sigma)=\begin{cases}
\inf_{\omega>0}g_{\rho,\sigma}^{\alpha}(\omega) & :\alpha\in\left(0,\frac{1}{2}\right)\\
\inf_{\omega>0}h_{\rho,\sigma}^{\alpha}(\omega) & :\alpha\in\left[\frac{1}{2},1\right)\\
\sup_{\omega>0}h_{\rho,\sigma}^{\alpha}(\omega) & :\alpha\in\left(1,\infty\right)
\end{cases},\label{eq:meas-renyi-var-form}
\end{equation}
where
\begin{align}
g_{\rho,\sigma}^{\alpha}(\omega) & \coloneqq\alpha\Tr[\omega\rho]+\left(1-\alpha\right)\Tr\!\left[\omega^{\frac{\alpha}{\alpha-1}}\sigma\right],\label{eq:g-alpha-meas-renyi}\\
h_{\rho,\sigma}^{\alpha}(\omega) & \coloneqq\alpha\Tr\!\left[\omega^{\frac{\alpha-1}{\alpha}}\rho\right]+\left(1-\alpha\right)\Tr[\omega\sigma].\label{eq:h-alpha-meas-renyi}
\end{align}
Observe that $g_{\rho,\sigma}^{\alpha}(\omega)=h_{\rho,\sigma}^{\alpha}(\omega)$
for $\alpha=\frac{1}{2}$.

\subsection{Matrix gradient of measured R\'enyi relative entropy}

Some of the properties listed in this section were recently presented
in~\cite{Sreekumar2025}; however, we list them here for completeness.
\begin{lem}
\label{lem:matrix-gradient-meas-renyi-rel-ent}The matrix gradient
$\frac{\partial}{\partial\omega}g_{\rho,\sigma}^{\alpha}(\omega)$
for $\alpha\in\left(0,\frac{1}{2}\right)$ is given by
\begin{align}
\frac{\partial}{\partial\omega}g_{\rho,\sigma}^{\alpha}(\omega) & =\alpha\rho+\left(1-\alpha\right)\sum_{\ell,m}f_{x^{\frac{\alpha}{\alpha-1}}}^{[1]}(\lambda_{\ell},\lambda_{m})\Pi_{\ell}\sigma\Pi_{m}\label{eq:matrix-gradient-g-divided-diff}\\
 & =\alpha\rho+c_{1}(\alpha)\int_{0}^{\infty}dt\ t^{\frac{\alpha}{\alpha-1}}\left(\omega+tI\right)^{-1}\sigma\left(\omega+tI\right)^{-1},\label{eq:matrix-gradient-g}\\
c_{1}(\alpha) & \equiv\left(1-\alpha\right)\frac{\sin\!\left(\left(\frac{\alpha}{\alpha-1}\right)\pi\right)}{\pi},\label{eq:c1-alpha-constant}
\end{align}
and the matrix gradient $\frac{\partial}{\partial\omega}h_{\rho,\sigma}^{\alpha}(\omega)$
for $\alpha\in\left[\frac{1}{2},1\right)\cup\left(1,\infty\right)$
is given by
\begin{align}
\frac{\partial}{\partial\omega}h_{\rho,\sigma}^{\alpha}(\omega) & =\left(1-\alpha\right)\sigma+\alpha\sum_{\ell,m}f_{x^{\frac{\alpha-1}{\alpha}}}^{[1]}(\lambda_{\ell},\lambda_{m})\Pi_{\ell}\rho\Pi_{m}\label{eq:matrix-gradient-h-divided-diff}\\
 & =\left(1-\alpha\right)\sigma+c_{2}(\alpha)\int_{0}^{\infty}dt\ t^{\frac{\alpha-1}{\alpha}}\left(\omega+tI\right)^{-1}\rho\left(\omega+tI\right)^{-1},\label{eq:matrix-gradient-h}\\
c_{2}(\alpha) & \equiv\alpha\frac{\sin\!\left(\left(\frac{\alpha-1}{\alpha}\right)\pi\right)}{\pi},\label{eq:c2-alpha-constant}
\end{align}
where $\omega=\sum_{k}\lambda_{k}\Pi_{k}$ is the spectral decomposition
of $\omega$. Furthermore, $f_{x^{\frac{\alpha}{\alpha-1}}}^{[1]}(x,y)$
and $f_{x^{\frac{\alpha-1}{\alpha}}}^{[1]}(x,y)$ in \eqref{eq:matrix-gradient-g-divided-diff}
and \eqref{eq:matrix-gradient-h-divided-diff} are first divided difference
functions of $x\mapsto x^{\frac{\alpha}{\alpha-1}}$ and $x\mapsto x^{\frac{\alpha-1}{\alpha}}$,
respectively, defined for $x,y>0$ as
\begin{align}
f_{x^{\frac{\alpha}{\alpha-1}}}^{[1]}(x,y) & \coloneqq\begin{cases}
\left(\frac{\alpha}{\alpha-1}\right)x^{\frac{1}{\alpha-1}} & :x=y\\
\frac{x^{\frac{\alpha}{\alpha-1}}-y^{\frac{\alpha}{\alpha-1}}}{x-y} & :x\neq y\ ,
\end{cases}\\
f_{x^{\frac{\alpha-1}{\alpha}}}^{[1]}(x,y) & \coloneqq\begin{cases}
\left(\frac{\alpha-1}{\alpha}\right)x^{-\frac{1}{\alpha}} & :x=y\\
\frac{x^{\frac{\alpha-1}{\alpha}}-y^{\frac{\alpha-1}{\alpha}}}{x-y} & :x\neq y\ .
\end{cases}
\end{align}
\end{lem}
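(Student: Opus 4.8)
The plan is to treat each objective function as a sum of a term that is linear in $\omega$ and a term that is a trace against a power of $\omega$, and to differentiate each piece using the general matrix-derivative formulas already recorded in Appendix~\ref{app:matrix-derivatives}. For the linear pieces, $\frac{\partial}{\partial\omega}\Tr[A\omega]=A$ gives $\frac{\partial}{\partial\omega}(\alpha\Tr[\omega\rho])=\alpha\rho$ inside $g_{\rho,\sigma}^{\alpha}$ and $\frac{\partial}{\partial\omega}((1-\alpha)\Tr[\omega\sigma])=(1-\alpha)\sigma$ inside $h_{\rho,\sigma}^{\alpha}$. For the power pieces, I would invoke the divided-difference formula $\frac{\partial}{\partial\omega}\Tr[f(\omega)A]=\sum_{\ell,m}f^{[1]}(\lambda_{\ell},\lambda_{m})\Pi_{\ell}A\Pi_{m}$ from \eqref{eq:matrix-deriv-func-f}, specialized to $f(x)=x^{\frac{\alpha}{\alpha-1}}$ acting against $\sigma$ for $g$, and to $f(x)=x^{\frac{\alpha-1}{\alpha}}$ acting against $\rho$ for $h$. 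This produces the divided-difference expressions \eqref{eq:matrix-gradient-g-divided-diff} and \eqref{eq:matrix-gradient-h-divided-diff} directly, exactly parallel to the derivation of Lemma~\ref{lem:matrix-gradient-meas-rel-ent}.

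To pass to the integral forms \eqref{eq:matrix-gradient-g} and \eqref{eq:matrix-gradient-h}, I would use the integral representation of the divided difference of a power, namely $f_{x^{p}}^{[1]}(x,y)=\frac{\sin(p\pi)}{\pi}\int_{0}^{\infty}dt\ \frac{t^{p}}{(x+t)(y+t)}$, valid for $p\in(-1,0)\cup(0,1)$. This identity I would verify by partial fractions in the case $x\neq y$ (reducing to the single-resolvent integral $\int_{0}^{\infty}dt\ t^{p}(x+t)^{-1}=-\frac{\pi}{\sin(p\pi)}x^{p}$, itself a Beta-function and reflection-formula computation) and by a direct Beta integral in the coincident case $x=y$, where it returns $p\,x^{p-1}$. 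Substituting this representation into the divided-difference forms, interchanging the finite spectral sum with the integral, and collapsing $\sum_{\ell}(\lambda_{\ell}+t)^{-1}\Pi_{\ell}=(\omega+tI)^{-1}$ turns the double sum into the resolvent sandwich $(\omega+tI)^{-1}\sigma(\omega+tI)^{-1}$ (respectively with $\rho$). It then remains to identify the scalar prefactor: with $p=\frac{\alpha}{\alpha-1}$ the $g$ prefactor is $(1-\alpha)\frac{\sin((\frac{\alpha}{\alpha-1})\pi)}{\pi}=c_{1}(\alpha)$, and with $p=\frac{\alpha-1}{\alpha}$ the $h$ prefactor is $\alpha\frac{\sin((\frac{\alpha-1}{\alpha})\pi)}{\pi}=c_{2}(\alpha)$, matching \eqref{eq:c1-alpha-constant} and \eqref{eq:c2-alpha-constant}; tracking the sign of the sine (using that $\frac{\alpha}{\alpha-1}<0$ on $(0,\tfrac12)$) confirms the signs.

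The step I expect to require the most care is the range of the exponent and convergence of the integral at its endpoints. For $g$ on $\alpha\in(0,\tfrac12)$ the exponent $\frac{\alpha}{\alpha-1}$ stays strictly inside $(-1,0)$, so the integrand behaves like $t^{p}$ near $t=0$ and like $t^{p-2}$ near infinity and is integrable; for $h$ on $\alpha\in(1,\infty)$ the exponent $\frac{\alpha-1}{\alpha}$ lies in $(0,1)$, where the $t^{p-2}$ decay at infinity again secures convergence. The delicate case is $h$ on $\alpha\in[\tfrac12,1)$, where $\frac{\alpha-1}{\alpha}\in[-1,0)$: at the endpoint $\alpha=\tfrac12$ the exponent equals $-1$, the integrand $t^{-1}(\omega+tI)^{-1}\rho(\omega+tI)^{-1}$ fails to be integrable at $t=0$, while simultaneously $c_{2}(\tfrac12)=0$, an indeterminate $0\cdot\infty$. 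I would dispose of this boundary point using the divided-difference form \eqref{eq:matrix-gradient-h-divided-diff}, which remains valid and collapses to $-\omega^{-1}\rho\omega^{-1}$ there, or equivalently by reading the integral form as the limit $\alpha\downarrow\tfrac12$. Finally, the interchange of differentiation and summation with integration is justified by dominated convergence, using that the resolvents are uniformly bounded when $\omega$ ranges over a fixed operator interval bounded away from $0$.
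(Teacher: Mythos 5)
Your proposal follows essentially the same route as the paper's proof: differentiate term by term, using $\frac{\partial}{\partial\omega}\Tr[A\omega]=A$ (cf.\ \eqref{eq:matrix-deriv-n-1-2}) for the linear pieces and the divided-difference formula \eqref{eq:matrix-deriv-func-f} for the power pieces, then convert to resolvent integrals; the paper simply cites the prepackaged formula \eqref{eq:power-func-deriv-minus-1-to-plus-1} from Appendix~\ref{app:matrix-derivatives}, whereas you re-derive it. Two remarks. First, a small slip: your partial-fraction reduction to the single-resolvent integral $\int_{0}^{\infty}dt\ t^{p}\left(x+t\right)^{-1}=-\frac{\pi}{\sin(p\pi)}x^{p}$ is valid only for $p\in\left(-1,0\right)$ (as the paper notes for \eqref{eq:standard-integral-power-sine}); for $h_{\rho,\sigma}^{\alpha}$ with $\alpha\in\left(1,\infty\right)$ the exponent $p=\frac{\alpha-1}{\alpha}\in\left(0,1\right)$, and each single-resolvent integral then diverges at infinity, so in that regime you must use the two-resolvent representation \eqref{eq:integral-rep-div-diff-x-r} directly (or regularize with a cutoff and let the divergent pieces cancel)---a routine fix that does not affect the conclusion. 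Second, your handling of the endpoint $\alpha=\frac{1}{2}$, where the exponent equals $-1$, $c_{2}\!\left(\frac{1}{2}\right)=0$, and the integral in \eqref{eq:matrix-gradient-h} diverges at $t=0$, is a genuine point of care that the paper's one-line proof glosses over: the cited formula \eqref{eq:power-func-deriv-minus-1-to-plus-1} is stated only for $r\in\left(-1,0\right)\cup\left(0,1\right)$, so at $\alpha=\frac{1}{2}$ the integral form must indeed be understood via the divided-difference expression \eqref{eq:matrix-gradient-h-divided-diff}, which collapses there to $\frac{1}{2}\sigma-\frac{1}{2}\omega^{-1}\rho\omega^{-1}$, or as the limit $\alpha\downarrow\frac{1}{2}$, exactly as you propose.
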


\begin{proof}
Eqs.~\eqref{eq:matrix-gradient-g-divided-diff}--\eqref{eq:c2-alpha-constant}
follow from a direct application of \eqref{eq:matrix-deriv-n-1-2},
\eqref{eq:matrix-deriv-func-f}, and \eqref{eq:power-func-deriv-minus-1-to-plus-1}.
In more detail, for $\alpha\in\left(0,\frac{1}{2}\right)$, consider
that
\begin{align}
\frac{\partial}{\partial\omega}g_{\rho,\sigma}^{\alpha}(\omega) & =\frac{\partial}{\partial\omega}\left(\alpha\Tr[\rho\omega]+\left(1-\alpha\right)\Tr[\sigma\omega^{\frac{\alpha}{\alpha-1}}]\right)\\
 & =\alpha\frac{\partial}{\partial\omega}\Tr[\rho\omega]+\left(1-\alpha\right)\frac{\partial}{\partial\omega}\Tr[\sigma\omega^{\frac{\alpha}{\alpha-1}}]\\
 & =\alpha\rho+\left(1-\alpha\right)\sum_{\ell,m}f_{x^{\frac{\alpha}{\alpha-1}}}^{[1]}(\lambda_{\ell},\lambda_{m})\Pi_{\ell}\sigma\Pi_{m}\\
 & =\alpha\rho+c_{1}(\alpha)\int_{0}^{\infty}dt\ t^{\frac{\alpha}{\alpha-1}}\left(\omega+tI\right)^{-1}\sigma\left(\omega+tI\right)^{-1}.
\end{align}
Additionally, for $\alpha\in\left[\frac{1}{2},1\right)\cup\left(1,\infty\right)$,
consider that
\begin{align}
\frac{\partial}{\partial\omega}h_{\rho,\sigma}^{\alpha}(\omega) & =\frac{\partial}{\partial\omega}\left(\alpha\Tr[\rho\omega^{\frac{\alpha-1}{\alpha}}]+\left(1-\alpha\right)\Tr[\sigma\omega]\right)\\
 & =\alpha\frac{\partial}{\partial\omega}\Tr[\rho\omega^{\frac{\alpha-1}{\alpha}}]+\left(1-\alpha\right)\frac{\partial}{\partial\omega}\Tr[\sigma\omega]\\
 & =\left(1-\alpha\right)\sigma+\alpha\sum_{\ell,m}f_{x^{\frac{\alpha-1}{\alpha}}}^{[1]}(\lambda_{\ell},\lambda_{m})\Pi_{\ell}\rho\Pi_{m}\\
 & =\left(1-\alpha\right)\sigma+c_{2}(\alpha)\int_{0}^{\infty}dt\ t^{\frac{\alpha-1}{\alpha}}\left(\omega+tI\right)^{-1}\rho\left(\omega+tI\right)^{-1},
\end{align}
thus concluding the proof.
\end{proof}
Thus, the first-order optimality conditions for the optimal solutions
of \eqref{eq:meas-renyi-var-form} are that $\omega>0$ should satisfy,
for $\alpha\in\left(0,\frac{1}{2}\right)$,
\begin{equation}
0=\frac{\partial}{\partial\omega}g_{\rho,\sigma}^{\alpha}(\omega)\quad\implies\quad\alpha\rho+c_{1}(\alpha)\int_{0}^{\infty}dt\ t^{\frac{\alpha}{\alpha-1}}\left(\omega+tI\right)^{-1}\sigma\left(\omega+tI\right)^{-1}=0,
\end{equation}
or equivalently,
\begin{equation}
\alpha\rho+\left(1-\alpha\right)\sum_{\ell,m}f_{x^{\frac{\alpha}{\alpha-1}}}^{[1]}(\lambda_{\ell},\lambda_{m})\Pi_{\ell}\sigma\Pi_{m}=0,\label{eq:first-order-optimal-condition-renyi-0-half}
\end{equation}
and for $\alpha\in\left[\frac{1}{2},1\right)\cup\left(1,\infty\right)$,
\begin{equation}
0=\frac{\partial}{\partial\omega}h_{\rho,\sigma}^{\alpha}(\omega)\quad\implies\quad\left(1-\alpha\right)\sigma+c_{2}(\alpha)\int_{0}^{\infty}dt\ t^{\frac{\alpha-1}{\alpha}}\left(\omega+tI\right)^{-1}\rho\left(\omega+tI\right)^{-1}=0,
\end{equation}
or equivalently,
\begin{equation}
\left(1-\alpha\right)\sigma+\alpha\sum_{\ell,m}f_{x^{\frac{\alpha-1}{\alpha}}}^{[1]}(\lambda_{\ell},\lambda_{m})\Pi_{\ell}\rho\Pi_{m}=0.\label{eq:first-order-optimal-condition-renyi-half-infty}
\end{equation}
This allows us to determine conditions on the eigenvalues of $\omega$,
which are relevant for performing a search for the optimal $\omega$.
\begin{lem}
\label{lem:omega-bounds-1}The optimal $\omega>0$ for $g_{\rho,\sigma}^{\alpha}(\omega)$
and $\alpha\in\left(0,\frac{1}{2}\right)$ satisfies the following:
\begin{equation}
\left\Vert \sigma^{-\frac{1}{2}}\rho\sigma^{-\frac{1}{2}}\right\Vert ^{-\left(1-\alpha\right)}I\leq\omega\leq\left\Vert \rho^{-\frac{1}{2}}\sigma\rho^{-\frac{1}{2}}\right\Vert ^{1-\alpha}I.\label{eq:optimal-omega-alpha-0-half}
\end{equation}
The optimal $\omega>0$ for $h_{\rho,\sigma}^{\alpha}(\omega)$ and
$\alpha\in\left[\frac{1}{2},1\right)\cup\left(1,\infty\right)$ satisfies
the following:
\begin{equation}
\left\Vert \rho^{-\frac{1}{2}}\sigma\rho^{-\frac{1}{2}}\right\Vert ^{-\alpha}I\leq\omega\leq\left\Vert \sigma^{-\frac{1}{2}}\rho\sigma^{-\frac{1}{2}}\right\Vert ^{\alpha}I.\label{eq:optimal-omega-alpha-half-infty}
\end{equation}
\end{lem}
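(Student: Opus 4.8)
The plan is to adapt, essentially verbatim, the argument used to establish Lemma~\ref{lem:omega-bounds} for the measured relative entropy, since the first-order optimality conditions \eqref{eq:first-order-optimal-condition-renyi-0-half} and \eqref{eq:first-order-optimal-condition-renyi-half-infty} have exactly the same structural form as \eqref{eq:first-order-optimal-condition}. For each of the two cases I would apply the map $X\mapsto\Tr[\Pi_k X]$ to both sides of the relevant optimality condition, where $\{\Pi_k\}_k$ are the spectral projections of the optimal $\omega=\sum_k\lambda_k\Pi_k$. Using orthogonality $\Pi_k\Pi_\ell=\delta_{k,\ell}\Pi_k$ together with cyclicity of the trace, the double sum $\sum_{\ell,m}$ collapses to the single diagonal term $\ell=m=k$, leaving only the $x=y$ branch of the divided difference evaluated at $\lambda_k$.

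Carrying this out, for $\alpha\in(0,\frac{1}{2})$ the diagonal value $f_{x^{\alpha/(\alpha-1)}}^{[1]}(\lambda_k,\lambda_k)=\frac{\alpha}{\alpha-1}\lambda_k^{1/(\alpha-1)}$ combines with the prefactor $1-\alpha$ via the identity $(1-\alpha)\cdot\frac{\alpha}{\alpha-1}=-\alpha$, and after dividing by $\alpha$ the condition reduces to $\Tr[\Pi_k\rho]=\lambda_k^{1/(\alpha-1)}\Tr[\Pi_k\sigma]$, i.e.
\begin{equation}
\lambda_k=\left(\frac{\Tr[\Pi_k\rho]}{\Tr[\Pi_k\sigma]}\right)^{\alpha-1}.
\end{equation}
For $\alpha\in[\frac{1}{2},1)\cup(1,\infty)$ the diagonal value $f_{x^{(\alpha-1)/\alpha}}^{[1]}(\lambda_k,\lambda_k)=\frac{\alpha-1}{\alpha}\lambda_k^{-1/\alpha}$ combines with the prefactor $\alpha$ via $\alpha\cdot\frac{\alpha-1}{\alpha}=\alpha-1$, and after dividing by the nonzero factor $1-\alpha$ the condition reduces to $\Tr[\Pi_k\sigma]=\lambda_k^{-1/\alpha}\Tr[\Pi_k\rho]$, i.e.
\begin{equation}
\lambda_k=\left(\frac{\Tr[\Pi_k\rho]}{\Tr[\Pi_k\sigma]}\right)^{\alpha}.
\end{equation}
In both cases, every eigenvalue of the optimal $\omega$ is a power of the ratio $r_k\coloneqq\Tr[\Pi_k\rho]/\Tr[\Pi_k\sigma]$.

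To finish, I would reuse the ratio bounds already derived in the proof of Lemma~\ref{lem:omega-bounds}: by the reasoning in \eqref{eq:eigenvalue-bnds-meas-rel-ent}--\eqref{eq:dual-flip-max-rel-entropy}, applied with $\Lambda=\Pi_k$, the quantity $r_k$ obeys $\|\rho^{-\frac{1}{2}}\sigma\rho^{-\frac{1}{2}}\|^{-1}\leq r_k\leq\|\sigma^{-\frac{1}{2}}\rho\sigma^{-\frac{1}{2}}\|$ for every $k$. Applying the power map $r\mapsto r^{\alpha-1}$ in the first case and $r\mapsto r^{\alpha}$ in the second, and using monotonicity, then yields \eqref{eq:optimal-omega-alpha-0-half} and \eqref{eq:optimal-omega-alpha-half-infty}, respectively. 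The only point requiring genuine care---and the step I view as the main (if modest) obstacle---is the direction of these inequalities: for $\alpha\in(0,\frac{1}{2})$ the exponent $\alpha-1$ is negative, so $r\mapsto r^{\alpha-1}$ is \emph{decreasing} and the upper and lower bounds on $r_k$ swap roles, producing the bound $\|\sigma^{-\frac{1}{2}}\rho\sigma^{-\frac{1}{2}}\|^{-(1-\alpha)}I\leq\omega\leq\|\rho^{-\frac{1}{2}}\sigma\rho^{-\frac{1}{2}}\|^{1-\alpha}I$; whereas in the second case the exponent $\alpha$ is positive, so $r\mapsto r^{\alpha}$ is increasing and the bounds are preserved in order. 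One should also record the harmless facts that the divisions by $\alpha$ and by $1-\alpha$ are legitimate, since $\alpha\neq 0$ and $\alpha\neq 1$ on the relevant ranges.
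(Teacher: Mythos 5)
Your proposal is correct and follows essentially the same route as the paper's proof: project the first-order optimality conditions \eqref{eq:first-order-optimal-condition-renyi-0-half} and \eqref{eq:first-order-optimal-condition-renyi-half-infty} onto the spectral projections $\Pi_k$, collapse the divided-difference double sum to its diagonal term, solve for $\lambda_k$ as a power of $\Tr[\Pi_k\rho]/\Tr[\Pi_k\sigma]$, and then invoke the bounds \eqref{eq:dual-max-rel-entropy}--\eqref{eq:dual-flip-max-rel-entropy} from Lemma~\ref{lem:omega-bounds}. The only (cosmetic) difference is that for $\alpha\in\left(0,\frac{1}{2}\right)$ the paper writes $\lambda_{k}=\left(\Tr[\Pi_{k}\sigma]/\Tr[\Pi_{k}\rho]\right)^{1-\alpha}$ and uses the increasing map $x\mapsto x^{1-\alpha}$, whereas you keep the ratio $\Tr[\Pi_{k}\rho]/\Tr[\Pi_{k}\sigma]$ and correctly account for the decreasing map $x\mapsto x^{\alpha-1}$ swapping the bounds, which yields the identical conclusion.
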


\begin{proof}
We begin with $g_{\rho,\sigma}^{\alpha}(\omega)$ and $\alpha\in\left(0,\frac{1}{2}\right)$
and then apply the map $X\mapsto\Tr[\Pi_{k}X]$ to \eqref{eq:first-order-optimal-condition-renyi-0-half}
to find that
\begin{align}
0 & =\Tr\!\left[\Pi_{k}\left(\alpha\rho+\left(1-\alpha\right)\sum_{\ell,m}f_{x^{\frac{\alpha}{\alpha-1}}}^{[1]}(\lambda_{\ell},\lambda_{m})\Pi_{\ell}\sigma\Pi_{m}\right)\right]\\
 & =\alpha\Tr\!\left[\Pi_{k}\rho\right]+\left(1-\alpha\right)\Tr\!\left[\Pi_{k}\left(\sum_{\ell,m}f_{x^{\frac{\alpha}{\alpha-1}}}^{[1]}(\lambda_{\ell},\lambda_{m})\Pi_{\ell}\sigma\Pi_{m}\right)\right]\\
 & =\alpha\Tr\!\left[\Pi_{k}\rho\right]+\left(1-\alpha\right)\sum_{\ell,m}f_{x^{\frac{\alpha}{\alpha-1}}}^{[1]}(\lambda_{\ell},\lambda_{m})\Tr\!\left[\Pi_{k}\Pi_{\ell}\sigma\Pi_{m}\right]\\
 & =\alpha\Tr\!\left[\Pi_{k}\rho\right]+\left(1-\alpha\right)\left(\frac{\alpha}{\alpha-1}\right)\lambda_{k}^{\frac{1}{\alpha-1}}\Tr\!\left[\Pi_{k}\sigma\right],
\end{align}
which implies that
\begin{equation}
\frac{\Tr\!\left[\Pi_{k}\rho\right]}{\Tr\!\left[\Pi_{k}\sigma\right]}=\lambda_{k}^{\frac{1}{\alpha-1}},
\end{equation}
and is equivalent to
\begin{equation}
\lambda_{k}=\left(\frac{\Tr\!\left[\Pi_{k}\sigma\right]}{\Tr\!\left[\Pi_{k}\rho\right]}\right)^{1-\alpha}.
\end{equation}
Thus, every eigenvalue satisfies
\begin{align}
\left\Vert \sigma^{-\frac{1}{2}}\rho\sigma^{-\frac{1}{2}}\right\Vert ^{-\left(1-\alpha\right)} & =\left(\min_{\Lambda:0\leq\Lambda\leq I}\frac{\Tr\!\left[\Lambda\sigma\right]}{\Tr\!\left[\Lambda\rho\right]}\right)^{1-\alpha}\\
 & \leq\lambda_{k}=\left(\frac{\Tr\!\left[\Pi_{k}\sigma\right]}{\Tr\!\left[\Pi_{k}\rho\right]}\right)^{1-\alpha}\\
 & \leq\left(\max_{\Lambda:0\leq\Lambda\leq I}\frac{\Tr\!\left[\Lambda\sigma\right]}{\Tr\!\left[\Lambda\rho\right]}\right)^{1-\alpha}\\
 & =\left\Vert \rho^{-\frac{1}{2}}\sigma\rho^{-\frac{1}{2}}\right\Vert ^{1-\alpha},
\end{align}
where we applied \eqref{eq:dual-max-rel-entropy} and \eqref{eq:dual-flip-max-rel-entropy},
as well as monotonicity of the function $x\mapsto x^{1-\alpha}$ for
$\alpha\in\left(0,\frac{1}{2}\right)$. We thus conclude \eqref{eq:optimal-omega-alpha-0-half}.

Now let us consider $h_{\rho,\sigma}^{\alpha}(\omega)$ and $\alpha\in\left[\frac{1}{2},1\right)\cup\left(1,\infty\right)$.
Apply the map $X\mapsto\Tr[\Pi_{k}X]$ to \eqref{eq:first-order-optimal-condition-renyi-half-infty}
to find that
\begin{align}
0 & =\Tr\!\left[\Pi_{k}\left(\left(1-\alpha\right)\sigma+\alpha\sum_{\ell,m}f_{x^{\frac{\alpha-1}{\alpha}}}^{[1]}(\lambda_{\ell},\lambda_{m})\Pi_{\ell}\rho\Pi_{m}\right)\right]\\
 & =\left(1-\alpha\right)\Tr\!\left[\Pi_{k}\sigma\right]+\alpha\Tr\!\left[\Pi_{k}\left(\sum_{\ell,m}f_{x^{\frac{\alpha-1}{\alpha}}}^{[1]}(\lambda_{\ell},\lambda_{m})\Pi_{\ell}\rho\Pi_{m}\right)\right]\\
 & =\left(1-\alpha\right)\Tr\!\left[\Pi_{k}\sigma\right]+\alpha\sum_{\ell,m}f_{x^{\frac{\alpha-1}{\alpha}}}^{[1]}(\lambda_{\ell},\lambda_{m})\Tr\!\left[\Pi_{k}\Pi_{\ell}\rho\Pi_{m}\right]\\
 & =\left(1-\alpha\right)\Tr\!\left[\Pi_{k}\sigma\right]+\alpha\left(\frac{\alpha-1}{\alpha}\right)\lambda_{k}^{-\frac{1}{\alpha}}\Tr\!\left[\Pi_{k}\rho\right],
\end{align}
which implies that
\begin{equation}
\lambda_{k}^{-\frac{1}{\alpha}}\Tr\!\left[\Pi_{k}\rho\right]=\Tr\!\left[\Pi_{k}\sigma\right],
\end{equation}
and is equivalent to
\begin{equation}
\lambda_{k}=\left(\frac{\Tr\!\left[\Pi_{k}\rho\right]}{\Tr\!\left[\Pi_{k}\sigma\right]}\right)^{\alpha}.
\end{equation}
Then proceeding as in \eqref{eq:eigenvalue-bnds-meas-rel-ent}--\eqref{eq:dual-flip-max-rel-entropy}
(but with the $\alpha$ power included), we conclude \eqref{eq:optimal-omega-alpha-half-infty}.
\end{proof}

\subsection{Hessian superoperator for measured R\'enyi relative entropy}
\begin{lem}
\label{lem:hessian-super-op-meas-renyi-rel-ent}For $\alpha\in\left(0,\frac{1}{2}\right)$,
the Hessian superoperator of $g_{\rho,\sigma}^{\alpha}(\omega)$ in
\eqref{eq:g-alpha-meas-renyi} is the following self-adjoint, Hermiticity-preserving
superoperator:
\begin{equation}
\Psi_{\alpha,\omega}(X)\coloneqq-c_{1}(\alpha)\int_{0}^{\infty}dt\ t^{\frac{\alpha}{\alpha-1}}\left(\omega+tI\right)^{-1}\left[X\left(\omega+tI\right)^{-1}\sigma+\sigma\left(\omega+tI\right)^{-1}X\right]\left(\omega+tI\right)^{-1},\label{eq:hessian-superoperator-alpha-0-half}
\end{equation}
where $c_{1}(\alpha)$ is defined in \eqref{eq:c1-alpha-constant}.

For $\alpha\in\left[\frac{1}{2},1\right)\cup\left(1,\infty\right)$,
the Hessian superoperator of $h_{\rho,\sigma}^{\alpha}(\omega)$ in
\eqref{eq:h-alpha-meas-renyi} is the following self-adjoint, Hermiticity-preserving
superoperator:
\begin{equation}
\Phi_{\alpha,\omega}(X)\coloneqq-c_{2}(\alpha)\int_{0}^{\infty}dt\ t^{\frac{\alpha-1}{\alpha}}\left(\omega+tI\right)^{-1}\left[X\left(\omega+tI\right)^{-1}\rho+\rho\left(\omega+tI\right)^{-1}X\right]\left(\omega+tI\right)^{-1},\label{eq:hessian-superoperator-alpha-half-infty}
\end{equation}
where $c_{2}(\alpha)$ is defined in \eqref{eq:c2-alpha-constant}.
\end{lem}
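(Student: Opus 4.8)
The plan is to mirror the computation in the proof of Lemma~\ref{lem:Hessian-superoperator-measured-rel-ent}, since the integral representations of the gradients in \eqref{eq:matrix-gradient-g} and \eqref{eq:matrix-gradient-h} have exactly the same sandwiched-resolvent structure as the gradient of the measured relative entropy. The only differences are a scalar prefactor ($c_1(\alpha)$ or $c_2(\alpha)$), a scalar weight ($t^{\frac{\alpha}{\alpha-1}}$ or $t^{\frac{\alpha-1}{\alpha}}$) inside the integral, and the choice of fixed operator ($\sigma$ for $g_{\rho,\sigma}^{\alpha}$ and $\rho$ for $h_{\rho,\sigma}^{\alpha}$) conjugated by the resolvents $(\omega+tI)^{-1}$.

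First I would start from \eqref{eq:matrix-gradient-g} (resp.\ \eqref{eq:matrix-gradient-h}) and apply $\frac{\partial}{\partial\omega_{i,j}}$ to the gradient. The constant term $\alpha\rho$ (resp.\ $(1-\alpha)\sigma$) does not depend on $\omega$ and so contributes nothing. For the integral term, I would interchange the derivative and the integral, treating the scalar weight as a constant with respect to $\omega_{i,j}$, and then differentiate the integrand $(\omega+tI)^{-1}\sigma(\omega+tI)^{-1}$ (resp.\ $(\omega+tI)^{-1}\rho(\omega+tI)^{-1}$). Here I would reuse, verbatim, the two ingredients from the measured relative entropy proof: the product rule together with the derivative-of-inverse identity $\frac{\partial}{\partial x}A(x)^{-1}=-A(x)^{-1}[\frac{\partial}{\partial x}A(x)]A(x)^{-1}$, combined with $\frac{\partial}{\partial\omega_{i,j}}\omega=|i\rangle\!\langle j|$. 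This yields, for the $X$-dependent part evaluated at $X=|i\rangle\!\langle j|$, exactly the bracketed expression appearing in \eqref{eq:hessian-superoperator-alpha-0-half} (resp.\ \eqref{eq:hessian-superoperator-alpha-half-infty}); extending by linearity to an arbitrary matrix $X$ then produces $\Psi_{\alpha,\omega}$ (resp.\ $\Phi_{\alpha,\omega}$).

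Having established the formula, I would argue the two structural properties. Hermiticity preservation is immediate by inspection: when $X$ is Hermitian, each integrand is a sum of a term and its adjoint, conjugated by the Hermitian resolvents and weighted by a real scalar, so the result is Hermitian. For self-adjointness with respect to the Hilbert--Schmidt inner product, I would follow the strategy of defining, for each fixed $t>0$, the per-$t$ superoperator obtained by dropping the integral and the prefactor, show that it is self-adjoint via the same cyclic-trace manipulation as in \eqref{eq:self-adjoint-hessian-superop-1}--\eqref{eq:self-adjoint-hessian-superop-last} (now with $\sigma$, resp.\ $\rho$, in place of $\rho$), and conclude that $\Psi_{\alpha,\omega}$ (resp.\ $\Phi_{\alpha,\omega}$) is self-adjoint because it is an integral of self-adjoint superoperators against the real positive weight $t^{\frac{\alpha}{\alpha-1}}$ (resp.\ $t^{\frac{\alpha-1}{\alpha}}$) times the real constant $c_1(\alpha)$ (resp.\ $c_2(\alpha)$); reality of the prefactor is what guarantees $\langle Y,\Psi_{\alpha,\omega}(X)\rangle=\langle\Psi_{\alpha,\omega}(Y),X\rangle$.

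The main point requiring a little care is the interchange of the $\omega_{i,j}$-derivative with the $\int_0^\infty dt$ integral, which must be justified (e.g., by dominated convergence) on the relevant range of $\alpha$. Since the gradient integrals in \eqref{eq:matrix-gradient-g}--\eqref{eq:matrix-gradient-h} already converge on the stated $\alpha$-intervals and $\omega$ is bounded away from $0$ and $\infty$, the differentiated integrands remain integrable against the same weights, so this step is routine rather than an essential obstacle. Everything else is a direct transcription of the measured relative entropy argument.
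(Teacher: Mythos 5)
Your proposal is correct and follows essentially the same route as the paper's proof: differentiate the integral representations \eqref{eq:matrix-gradient-g} and \eqref{eq:matrix-gradient-h} under the integral sign, reuse the computation \eqref{eq:hessian-calc-1}--\eqref{eq:hessian-calc-final} with the appropriate operator substitution, and obtain self-adjointness by recognizing each Hessian as a real-weighted integral of the per-$t$ superoperators $\Phi_{t,\sigma,\omega}$ or $\Phi_{t,\rho,\omega}$ from \eqref{eq:Phi-s-rho-superop}, whose self-adjointness was already shown in \eqref{eq:self-adjoint-hessian-superop-1}--\eqref{eq:self-adjoint-hessian-superop-last}. Your explicit remark on justifying the derivative--integral interchange is a small point of added rigor that the paper leaves implicit, but it changes nothing structurally.
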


\begin{proof}
By recalling \eqref{eq:matrix-gradient-g}, consider that the Hessian
superoperator $\Psi_{\alpha,\omega}$ can be computed from
\begin{align}
\frac{\partial}{\partial\omega_{i,j}}\frac{\partial}{\partial\omega}g_{\rho,\sigma}^{\alpha}(\omega) & =\frac{\partial}{\partial\omega_{i,j}}\left(\alpha\rho+c_{1}(\alpha)\int_{0}^{\infty}dt\ t^{\frac{\alpha}{\alpha-1}}\left(\omega+tI\right)^{-1}\sigma\left(\omega+tI\right)^{-1}\right)\\
 & =c_{1}(\alpha)\int_{0}^{\infty}dt\ t^{\frac{\alpha}{\alpha-1}}\frac{\partial}{\partial\omega_{i,j}}\left[\left(\omega+tI\right)^{-1}\sigma\left(\omega+tI\right)^{-1}\right].
\end{align}
Now applying \eqref{eq:hessian-calc-1}--\eqref{eq:hessian-calc-final},
but with $\rho$ therein substituted with $\sigma$, we conclude \eqref{eq:hessian-superoperator-alpha-0-half}. 

The superoperator in \eqref{eq:hessian-superoperator-alpha-0-half}
is Hermiticity preserving by inspection.

Observing that
\begin{equation}
\Psi_{\alpha,\omega}=-c_{1}(\alpha)\int_{0}^{\infty}dt\ t^{\frac{\alpha}{\alpha-1}}\Phi_{t,\sigma,\omega},
\end{equation}
where $\Phi_{t,\sigma,\omega}$ is defined from \eqref{eq:Phi-s-rho-superop},
consider that $\Psi_{\alpha,\omega}$ is self-adjoint because $\Phi_{t,\sigma,\omega}$
is (recall \eqref{eq:self-adjoint-hessian-superop-1}--\eqref{eq:self-adjoint-hessian-superop-last});
i.e.,
\begin{align}
\left\langle Y,\Psi_{\alpha,\omega}(X)\right\rangle  & =-c_{1}(\alpha)\int_{0}^{\infty}dt\ t^{\frac{\alpha}{\alpha-1}}\left\langle Y,\Phi_{t,\sigma,\omega}(X)\right\rangle \\
 & =-c_{1}(\alpha)\int_{0}^{\infty}dt\ t^{\frac{\alpha}{\alpha-1}}\left\langle \Phi_{t,\sigma,\omega}(Y),X\right\rangle \\
 & =\left\langle \Psi_{\alpha,\omega}(Y),X\right\rangle .
\end{align}

By recalling \eqref{eq:matrix-gradient-h}, consider that the Hessian
superoperator $\Phi_{\alpha,\omega}$ can be computed from
\begin{align}
\frac{\partial}{\partial\omega_{i,j}}\frac{\partial}{\partial\omega}h_{\rho,\sigma}^{\alpha}(\omega) & =\frac{\partial}{\partial\omega_{i,j}}\left(\left(1-\alpha\right)\sigma+c_{2}(\alpha)\int_{0}^{\infty}dt\ t^{\frac{\alpha-1}{\alpha}}\left(\omega+tI\right)^{-1}\rho\left(\omega+tI\right)^{-1}\right)\\
 & =c_{2}(\alpha)\int_{0}^{\infty}dt\ t^{\frac{\alpha}{\alpha-1}}\frac{\partial}{\partial\omega_{i,j}}\left[\left(\omega+tI\right)^{-1}\rho\left(\omega+tI\right)^{-1}\right].
\end{align}
Now applying \eqref{eq:hessian-calc-1}--\eqref{eq:hessian-calc-final},
we conclude \eqref{eq:hessian-superoperator-alpha-0-half}. 

The superoperator in \eqref{eq:hessian-superoperator-alpha-half-infty}
is Hermiticity preserving by inspection.

Observing that
\begin{equation}
\Phi_{\alpha,\omega}=-c_{2}(\alpha)\int_{0}^{\infty}dt\ t^{\frac{\alpha-1}{\alpha}}\Phi_{t,\rho,\omega},
\end{equation}
consider that $\Phi_{\alpha,\omega}$ is self-adjoint because $\Phi_{t,\rho,\omega}$
is (recall \eqref{eq:self-adjoint-hessian-superop-1}--\eqref{eq:self-adjoint-hessian-superop-last});
i.e.,
\begin{align}
\left\langle Y,\Phi_{\alpha,\omega}(X)\right\rangle  & =-c_{2}(\alpha)\int_{0}^{\infty}dt\ t^{\frac{\alpha-1}{\alpha}}\left\langle Y,\Phi_{t,\rho,\omega}(X)\right\rangle \\
 & =-c_{2}(\alpha)\int_{0}^{\infty}dt\ t^{\frac{\alpha-1}{\alpha}}\left\langle \Phi_{t,\rho,\omega}(Y),X\right\rangle \\
 & =\left\langle \Phi_{\alpha,\omega}(Y),X\right\rangle ,
\end{align}
thus concluding the proof.
\end{proof}

\subsection{Smoothness and strong convexity / concavity for measured R\'enyi
relative entropy}
\begin{lem}
\label{lem:smoothness-strong-con-meas-renyi}For $\alpha\in\left(0,\frac{1}{2}\right)$
and $\omega>0$ satisfying \eqref{eq:optimal-omega-alpha-0-half},
the Hessian superoperator $\Psi_{\alpha,\omega}$ in \eqref{eq:hessian-superoperator-alpha-0-half}
has its minimum and maximum eigenvalues bounded as follows:
\begin{align}
\left(\frac{\alpha}{1-\alpha}\right)\lambda_{\min}(\sigma)\left\Vert \rho^{-\frac{1}{2}}\sigma\rho^{-\frac{1}{2}}\right\Vert ^{-\left(2-\alpha\right)} & \leq\min_{Y:\left\Vert Y\right\Vert _{2}=1}\left\langle Y,\Psi_{\alpha,\omega}(Y)\right\rangle \\
 & \leq\max_{Y:\left\Vert Y\right\Vert _{2}=1}\left\langle Y,\Psi_{\alpha,\omega}(Y)\right\rangle \\
 & \leq\left(\frac{\alpha}{1-\alpha}\right)\lambda_{\max}(\sigma)\left\Vert \sigma^{-\frac{1}{2}}\rho\sigma^{-\frac{1}{2}}\right\Vert ^{2-\alpha}.
\end{align}

For $\alpha\in\left[\frac{1}{2},1\right)$ and $\omega>0$ satisfying
\eqref{eq:optimal-omega-alpha-half-infty}, the Hessian superoperator
$\Phi_{\alpha,\omega}$ in \eqref{eq:hessian-superoperator-alpha-half-infty}
has its minimum and maximum eigenvalues bounded as follows:
\begin{align}
\lambda_{\min}(\rho)\left(\frac{1-\alpha}{\alpha}\right)\left\Vert \sigma^{-\frac{1}{2}}\rho\sigma^{-\frac{1}{2}}\right\Vert ^{-\left(\alpha+1\right)} & \leq\min_{Y:\left\Vert Y\right\Vert _{2}=1}\left\langle Y,\Phi_{\alpha,\omega}(Y)\right\rangle \\
 & \leq\max_{Y:\left\Vert Y\right\Vert _{2}=1}\left\langle Y,\Phi_{\alpha,\omega}(Y)\right\rangle \\
 & \leq\lambda_{\max}(\rho)\left(\frac{1-\alpha}{\alpha}\right)\left\Vert \rho^{-\frac{1}{2}}\sigma\rho^{-\frac{1}{2}}\right\Vert ^{\alpha+1}.
\end{align}

For $\alpha\in\left(1,\infty\right)$ and $\omega>0$ satisfying \eqref{eq:optimal-omega-alpha-half-infty},
the Hessian superoperator $\Phi_{\alpha,\omega}$ in \eqref{eq:hessian-superoperator-alpha-half-infty}
has its minimum and maximum eigenvalues bounded as follows:
\begin{align}
-\lambda_{\max}(\rho)\left(\frac{\alpha-1}{\alpha}\right)\left\Vert \rho^{-\frac{1}{2}}\sigma\rho^{-\frac{1}{2}}\right\Vert ^{\alpha+1} & \leq\min_{Y:\left\Vert Y\right\Vert _{2}=1}\left\langle Y,\Phi_{\alpha,\omega}(Y)\right\rangle \\
 & \leq\max_{Y:\left\Vert Y\right\Vert _{2}=1}\left\langle Y,\Phi_{\alpha,\omega}(Y)\right\rangle \\
 & \leq-\lambda_{\min}(\rho)\left(\frac{\alpha-1}{\alpha}\right)\left\Vert \sigma^{-\frac{1}{2}}\rho\sigma^{-\frac{1}{2}}\right\Vert ^{-\left(\alpha+1\right)}.
\end{align}
\end{lem}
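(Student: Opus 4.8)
The plan is to reduce all three cases to the eigenvalue bounds already established for the inner superoperator $\Phi_{t,A,\omega}$ in \eqref{eq:Phi-s-rho-superop}, and then to integrate against the appropriate scalar weight. The point is that each Rényi Hessian in Lemma \ref{lem:hessian-super-op-meas-renyi-rel-ent} is built from exactly this superoperator: comparing \eqref{eq:hessian-superoperator-alpha-0-half} with \eqref{eq:Phi-s-rho-superop} gives $\Psi_{\alpha,\omega}=-c_1(\alpha)\int_0^\infty dt\,t^{\frac{\alpha}{\alpha-1}}\Phi_{t,\sigma,\omega}$, while \eqref{eq:hessian-superoperator-alpha-half-infty} gives $\Phi_{\alpha,\omega}=-c_2(\alpha)\int_0^\infty dt\,t^{\frac{\alpha-1}{\alpha}}\Phi_{t,\rho,\omega}$. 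First I would record, for $\left\Vert Y\right\Vert_2=1$ and any $aI\leq\omega\leq bI$, the bounds $2\lambda_{\min}(A)(b+t)^{-3}\leq\left\langle Y,\Phi_{t,A,\omega}(Y)\right\rangle\leq 2\lambda_{\max}(A)(a+t)^{-3}$, with $A=\sigma$ in the first case and $A=\rho$ otherwise; these follow from the identical chain of inequalities as \eqref{eq:Phi-s-rho-min-eig-1}--\eqref{eq:Phi-s-rho-max-eig-last}, with $\rho$ replaced by $A$ and the eigenvalue window of $\omega$ supplied by Lemma \ref{lem:omega-bounds-1}. Concretely, $(a,b)=(\left\Vert\sigma^{-\frac12}\rho\sigma^{-\frac12}\right\Vert^{-(1-\alpha)},\left\Vert\rho^{-\frac12}\sigma\rho^{-\frac12}\right\Vert^{1-\alpha})$ for $\alpha\in(0,\tfrac12)$ by \eqref{eq:optimal-omega-alpha-0-half}, and $(a,b)=(\left\Vert\rho^{-\frac12}\sigma\rho^{-\frac12}\right\Vert^{-\alpha},\left\Vert\sigma^{-\frac12}\rho\sigma^{-\frac12}\right\Vert^{\alpha})$ for $\alpha\in[\tfrac12,1)\cup(1,\infty)$ by \eqref{eq:optimal-omega-alpha-half-infty}.

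The decisive step is to pin down the sign of $c_1(\alpha)$ and $c_2(\alpha)$, since this fixes the direction of every inequality. For $\alpha\in(0,\tfrac12)$ one has $\frac{\alpha}{\alpha-1}\in(-1,0)$, so $\sin(\frac{\alpha}{\alpha-1}\pi)<0$ and hence $c_1(\alpha)<0$; thus $-c_1(\alpha)>0$ and $\Psi_{\alpha,\omega}$ inherits the \emph{positive} inner bounds, consistent with $g^\alpha_{\rho,\sigma}$ being convex. Similarly $c_2(\alpha)<0$ for $\alpha\in(\tfrac12,1)$ but $c_2(\alpha)>0$ for $\alpha\in(1,\infty)$, so $-c_2(\alpha)$ changes sign across $\alpha=1$; this is exactly what turns the positive bounds of the convex regime $[\tfrac12,1)$ into the negative bounds of the concave regime $(1,\infty)$, and it forces $\lambda_{\max}$ and $\lambda_{\min}$ to be paired with the \emph{opposite} interval endpoints in the latter case. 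With the sign settled, I would propagate the scalar bounds through the $t$-integral, sandwiching each extreme eigenvalue of the Rényi Hessian by a quantity of the form $\left|c(\alpha)\right|\,2\lambda_{\max}(A)\int_0^\infty dt\,t^{p}(a+t)^{-3}$, or the analogue with $\lambda_{\min}(A)$ and $b$ in place of $a$, where $p=\frac{\alpha}{\alpha-1}$ in the first case and $p=\frac{\alpha-1}{\alpha}$ otherwise.

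It then remains to evaluate $\int_0^\infty dt\,t^{p}(a+t)^{-3}=a^{p-2}\,\frac{1}{2}\Gamma(p+1)\Gamma(2-p)$ (the Beta integral, convergent since $p\in(-1,1)$ on each open range, and derived in Appendix \ref{app:Derivations-of-integral-formulas}) and to simplify the constant. The power of $a$ comes out exactly right: $p-2=-\frac{2-\alpha}{1-\alpha}$ in the first case and $p-2=-\frac{\alpha+1}{\alpha}$ in the others, which turns $a$ and $b$ from Lemma \ref{lem:omega-bounds-1} into the advertised powers $\left\Vert\sigma^{-\frac12}\rho\sigma^{-\frac12}\right\Vert^{2-\alpha}$, $\left\Vert\rho^{-\frac12}\sigma\rho^{-\frac12}\right\Vert^{-(2-\alpha)}$, $\left\Vert\rho^{-\frac12}\sigma\rho^{-\frac12}\right\Vert^{\alpha+1}$, and $\left\Vert\sigma^{-\frac12}\rho\sigma^{-\frac12}\right\Vert^{-(\alpha+1)}$. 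The prefactor then collapses through the reflection formula $\Gamma(p+1)\Gamma(2-p)=p(1-p)\,\pi/\sin(p\pi)$: the $\sin(p\pi)$ cancels the sine inside $c_1(\alpha)$ or $c_2(\alpha)$, leaving $-c_1(\alpha)\Gamma(p+1)\Gamma(2-p)=\frac{\alpha}{1-\alpha}$ and $-c_2(\alpha)\Gamma(p+1)\Gamma(2-p)=\frac{1-\alpha}{\alpha}$ (negative precisely when $\alpha>1$), which are the scalar coefficients appearing in the three displayed bounds.

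I expect the main obstacle to be the bookkeeping of signs rather than any single hard estimate: one must track simultaneously the sign of $c_i(\alpha)$, the induced direction of each inequality, and the correct pairing of $\lambda_{\max}/\lambda_{\min}$ with the endpoints $a,b$, and these pairings swap between the convex ($\alpha<1$) and concave ($\alpha>1$) regimes. A genuine, if minor, technical point is the endpoint $\alpha=\tfrac12$, where $\frac{\alpha-1}{\alpha}=-1$, so $c_2(\tfrac12)=0$ while $\int_0^\infty t^{-1}(\omega+tI)^{-1}\rho(\omega+tI)^{-1}\,dt$ diverges, making the integral representation an indeterminate $0\cdot\infty$. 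I would handle this by continuity: both $\Phi_{\alpha,\omega}$, which coincides with $\Psi_{\frac12,\omega}$ at $\alpha=\tfrac12$ since $g^{1/2}_{\rho,\sigma}=h^{1/2}_{\rho,\sigma}$, and the closed-form bounds depend continuously on $\alpha$, so the inequalities established on $(\tfrac12,1)$ pass to the limit at $\alpha=\tfrac12$.
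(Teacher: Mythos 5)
Your proposal is correct and follows the same structural route as the paper: write each R\'enyi Hessian as a weighted integral of the inner superoperator $\Phi_{t,A,\omega}$ from \eqref{eq:Phi-s-rho-superop}, transfer the sandwich $2\lambda_{\min}(A)(b+t)^{-3}\leq\langle Y,\Phi_{t,A,\omega}(Y)\rangle\leq2\lambda_{\max}(A)(a+t)^{-3}$ from the measured-relative-entropy analysis using the operator windows of Lemma \ref{lem:omega-bounds-1}, fix the sign of $c_{1}(\alpha)$ and $c_{2}(\alpha)$ (including the sign flip at $\alpha=1$ that swaps the $\lambda_{\max}/\lambda_{\min}$ pairings), and integrate. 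The one methodological difference is how you evaluate the scalar integrals: you compute $\int_{0}^{\infty}dt\,t^{p}(k+t)^{-3}=\tfrac{1}{2}k^{p-2}\Gamma(p+1)\Gamma(2-p)$ as a Beta integral and collapse the constant via the reflection formula $\Gamma(p+1)\Gamma(2-p)=p(1-p)\pi/\sin(p\pi)$, whereas the paper's Appendix \ref{app:Derivations-of-integral-formulas} obtains \eqref{eq:key-integral-meas-renyi-eigenval-bnds-first} and \eqref{eq:key-integral-meas-renyi-eigenval-bnds} by differentiating the Cauchy-type representation \eqref{eq:standard-integral-power-sine} under the integral; both yield identical formulas, and your route is arguably more standard and self-contained. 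You also flag a real subtlety the paper passes over silently: at $\alpha=\tfrac{1}{2}$ one has $c_{2}(\tfrac{1}{2})=0$ while $\int_{0}^{\infty}dt\,t^{-1}(k+t)^{-3}$ diverges, so the representation \eqref{eq:hessian-superoperator-alpha-half-infty} and the formula \eqref{eq:key-integral-meas-renyi-eigenval-bnds} are formally $0\cdot\infty$ there, yet the paper asserts them on the closed endpoint. Your continuity patch works, with one small caveat: the operator interval \eqref{eq:optimal-omega-alpha-half-infty} itself depends on $\alpha$, so you should freeze $\omega$ with spectrum in the $\alpha=\tfrac{1}{2}$ interval $[a,b]$, integrate the fixed inner bounds against the weight $-c_{2}(\alpha')t^{(\alpha'-1)/\alpha'}$, and let $\alpha'\downarrow\tfrac{1}{2}$ (or, even more simply, note that $h_{\rho,\sigma}^{1/2}$ involves $\Tr[\omega^{-1}\rho]$, whose Hessian $X\mapsto\tfrac{1}{2}\left(\omega^{-1}X\omega^{-1}\rho\,\omega^{-1}+\omega^{-1}\rho\,\omega^{-1}X\omega^{-1}\right)$ admits the same sandwich directly, with no integral at all).
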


\begin{proof}
For $\alpha\in\left(0,\frac{1}{2}\right)$, observe from \eqref{eq:hessian-superoperator-alpha-0-half}
that
\begin{equation}
\Psi_{\alpha,\omega}=-c_{1}(\alpha)\int_{0}^{\infty}dt\ t^{\frac{\alpha}{\alpha-1}}\Phi_{t,\sigma,\omega},
\end{equation}
where $\Phi_{t,\sigma,\omega}$ is defined in \eqref{eq:Phi-s-rho-superop}.
Observe that
\begin{equation}
-c_{1}(\alpha)=-\left(1-\alpha\right)\frac{\sin\!\left(\left(\frac{\alpha}{\alpha-1}\right)\pi\right)}{\pi}\geq0
\end{equation}
for $\alpha\in\left(0,\frac{1}{2}\right)$, and define
\begin{align}
u_{1}(\alpha) & \equiv\left\Vert \rho^{-\frac{1}{2}}\sigma\rho^{-\frac{1}{2}}\right\Vert ^{1-\alpha},\\
\ell_{1}(\alpha) & \equiv\left\Vert \sigma^{-\frac{1}{2}}\rho\sigma^{-\frac{1}{2}}\right\Vert ^{-\left(1-\alpha\right)}.
\end{align}
Applying \eqref{eq:Phi-s-rho-min-eig-1}--\eqref{eq:Phi-s-rho-min-eig-last},
but with $\rho$ therein substituted with $\sigma$, and considering
\eqref{eq:optimal-omega-alpha-0-half}, observe that
\begin{align}
\left\langle Y,\Psi_{\alpha,\omega}(Y)\right\rangle  & =-c_{1}(\alpha)\int_{0}^{\infty}dt\ t^{\frac{\alpha}{\alpha-1}}\left\langle Y,\Phi_{t,\sigma,\omega}(Y)\right\rangle \\
 & \geq-2c_{1}(\alpha)\lambda_{\min}(\sigma)\int_{0}^{\infty}dt\ t^{\frac{\alpha}{\alpha-1}}\left(u_{1}(\alpha)+t\right)^{-3}\\
 & =2\lambda_{\min}(\sigma)\frac{\alpha}{2\left(1-\alpha\right)}u_{1}(\alpha)^{-\frac{2-\alpha}{1-\alpha}}\\
 & =\left(\frac{\alpha}{1-\alpha}\right)\lambda_{\min}(\sigma)\left\Vert \rho^{-\frac{1}{2}}\sigma\rho^{-\frac{1}{2}}\right\Vert ^{\alpha-2},
\end{align}
and
\begin{align}
\left\langle Y,\Phi(Y)\right\rangle  & =-c_{1}(\alpha)\int_{0}^{\infty}dt\ t^{\frac{\alpha}{\alpha-1}}\left\langle Y,\Phi_{t,\sigma,\omega}(Y)\right\rangle \\
 & \leq-2c_{1}(\alpha)\lambda_{\max}(\sigma)\int_{0}^{\infty}dt\ t^{\frac{\alpha}{\alpha-1}}\left(\ell_{1}(\alpha)+t\right)^{-3}\\
 & =\left(\frac{\alpha}{1-\alpha}\right)\lambda_{\max}(\sigma)\left\Vert \sigma^{-\frac{1}{2}}\rho\sigma^{-\frac{1}{2}}\right\Vert ^{2-\alpha},
\end{align}
where, in both cases above, we used the following integral, holding
for $\alpha\in\left(0,\frac{1}{2}\right)$ and $k>0$:
\begin{equation}
-c_{1}(\alpha)\int_{0}^{\infty}dt\ t^{\frac{\alpha}{\alpha-1}}\left(k+t\right)^{-3}=\frac{\alpha}{2\left(1-\alpha\right)}k^{-\frac{2-\alpha}{1-\alpha}},\label{eq:key-integral-meas-renyi-eigenval-bnds-first}
\end{equation}
as derived in Appendix~\ref{app:Derivations-of-integral-formulas}.

For $\alpha\in\left[\frac{1}{2},1\right)$, observe from \eqref{eq:hessian-superoperator-alpha-half-infty}
that
\begin{equation}
\Phi_{\alpha,\omega}=-c_{2}(\alpha)\int_{0}^{\infty}dt\ t^{\frac{\alpha-1}{\alpha}}\Phi_{t,\rho,\omega},
\end{equation}
where $\Phi_{t,\rho,\omega}$ is defined in \eqref{eq:Phi-s-rho-superop}.
Observe that
\begin{equation}
-c_{2}(\alpha)=-\alpha\frac{\sin\!\left(\left(\frac{\alpha-1}{\alpha}\right)\pi\right)}{\pi}\geq0
\end{equation}
for $\alpha\in\left[\frac{1}{2},1\right)$, and define
\begin{align}
u_{2}(\alpha) & \equiv\left\Vert \sigma^{-\frac{1}{2}}\rho\sigma^{-\frac{1}{2}}\right\Vert ^{\alpha},\\
\ell_{2}(\alpha) & \equiv\left\Vert \rho^{-\frac{1}{2}}\sigma\rho^{-\frac{1}{2}}\right\Vert ^{-\alpha}.
\end{align}
Applying \eqref{eq:Phi-s-rho-min-eig-1}--\eqref{eq:Phi-s-rho-min-eig-last}
and considering \eqref{eq:optimal-omega-alpha-half-infty}, observe
that
\begin{align}
\left\langle Y,\Psi_{\alpha,\omega}(Y)\right\rangle  & =-c_{2}(\alpha)\int_{0}^{\infty}dt\ t^{\frac{\alpha-1}{\alpha}}\left\langle Y,\Phi_{t,\rho,\omega}(Y)\right\rangle \\
 & \geq-2c_{2}(\alpha)\lambda_{\min}(\rho)\int_{0}^{\infty}dt\ t^{\frac{\alpha-1}{\alpha}}\left(u_{2}(\alpha)+t\right)^{-3}\\
 & =2\lambda_{\min}(\rho)\left(\frac{1-\alpha}{2\alpha}\right)\left(\left\Vert \sigma^{-\frac{1}{2}}\rho\sigma^{-\frac{1}{2}}\right\Vert ^{\alpha}\right)^{-\frac{\alpha+1}{\alpha}}\\
 & =\lambda_{\min}(\rho)\left(\frac{1-\alpha}{\alpha}\right)\left\Vert \sigma^{-\frac{1}{2}}\rho\sigma^{-\frac{1}{2}}\right\Vert ^{-\left(\alpha+1\right)},
\end{align}
and
\begin{align}
\left\langle Y,\Phi(Y)\right\rangle  & =-c_{2}(\alpha)\int_{0}^{\infty}dt\ t^{\frac{\alpha-1}{\alpha}}\left\langle Y,\Phi_{t,\rho,\omega}(Y)\right\rangle \\
 & \leq-2c_{2}(\alpha)\lambda_{\max}(\rho)\int_{0}^{\infty}dt\ t^{\frac{\alpha-1}{\alpha}}\left(\ell_{2}(\alpha)+t\right)^{-3}\\
 & =2\lambda_{\max}(\rho)\left(\frac{1-\alpha}{2\alpha}\right)\left(\left\Vert \rho^{-\frac{1}{2}}\sigma\rho^{-\frac{1}{2}}\right\Vert ^{-\alpha}\right)^{-\frac{\alpha+1}{\alpha}}\\
 & =\lambda_{\max}(\rho)\left(\frac{1-\alpha}{\alpha}\right)\left\Vert \rho^{-\frac{1}{2}}\sigma\rho^{-\frac{1}{2}}\right\Vert ^{\alpha+1},
\end{align}
where, in both cases above, we used the following integral, holding
for $\alpha\in\left[\frac{1}{2},1\right)\cup(1,\infty)$ and $k>0$:
\begin{equation}
-c_{2}(\alpha)\int_{0}^{\infty}dt\ t^{\frac{\alpha-1}{\alpha}}\left(k+t\right)^{-3}=\left(\frac{1-\alpha}{2\alpha}\right)k^{-\frac{\alpha+1}{\alpha}},\label{eq:key-integral-meas-renyi-eigenval-bnds}
\end{equation}
as derived in Appendix~\ref{app:Derivations-of-integral-formulas}.

For $\alpha\in\left(1,\infty\right)$, observe from \eqref{eq:hessian-superoperator-alpha-half-infty}
that
\begin{equation}
\Phi_{\alpha,\omega}=-c_{2}(\alpha)\int_{0}^{\infty}dt\ t^{\frac{\alpha-1}{\alpha}}\Phi_{t,\rho,\omega},
\end{equation}
where $\Phi_{t,\rho,\omega}$ is defined in \eqref{eq:Phi-s-rho-superop}.
Observe that
\begin{equation}
c_{2}(\alpha)=\alpha\frac{\sin\!\left(\left(\frac{\alpha-1}{\alpha}\right)\pi\right)}{\pi}\geq0
\end{equation}
for $\alpha\in\left(1,\infty\right)$, and define
\begin{align}
u_{2}(\alpha) & \equiv\left\Vert \sigma^{-\frac{1}{2}}\rho\sigma^{-\frac{1}{2}}\right\Vert ^{\alpha},\\
\ell_{2}(\alpha) & \equiv\left\Vert \rho^{-\frac{1}{2}}\sigma\rho^{-\frac{1}{2}}\right\Vert ^{-\alpha}.
\end{align}
Applying \eqref{eq:Phi-s-rho-min-eig-1}--\eqref{eq:Phi-s-rho-min-eig-last}
and considering \eqref{eq:optimal-omega-alpha-half-infty}, observe
that
\begin{align}
\left\langle Y,\Psi_{\alpha,\omega}(Y)\right\rangle  & =-c_{2}(\alpha)\int_{0}^{\infty}dt\ t^{\frac{\alpha-1}{\alpha}}\left\langle Y,\Phi_{t,\rho,\omega}(Y)\right\rangle \\
 & \geq-2c_{2}(\alpha)\lambda_{\max}(\rho)\int_{0}^{\infty}dt\ t^{\frac{\alpha-1}{\alpha}}\left(\ell_{2}(\alpha)+t\right)^{-3}\\
 & =2\lambda_{\max}(\rho)\left(\frac{1-\alpha}{2\alpha}\right)\left(\left\Vert \rho^{-\frac{1}{2}}\sigma\rho^{-\frac{1}{2}}\right\Vert ^{-\alpha}\right)^{-\frac{\alpha+1}{\alpha}}\\
 & =-\lambda_{\max}(\rho)\left(\frac{\alpha-1}{\alpha}\right)\left\Vert \rho^{-\frac{1}{2}}\sigma\rho^{-\frac{1}{2}}\right\Vert ^{\alpha+1},
\end{align}
and
\begin{align}
\left\langle Y,\Phi(Y)\right\rangle  & =-c_{2}(\alpha)\int_{0}^{\infty}dt\ t^{\frac{\alpha-1}{\alpha}}\left\langle Y,\Phi_{t,\rho,\omega}(Y)\right\rangle \\
 & \leq-2c_{2}(\alpha)\lambda_{\min}(\rho)\int_{0}^{\infty}dt\ t^{\frac{\alpha-1}{\alpha}}\left(u_{2}(\alpha)+t\right)^{-3}\\
 & =2\lambda_{\min}(\rho)\left(\frac{1-\alpha}{2\alpha}\right)\left(\left\Vert \sigma^{-\frac{1}{2}}\rho\sigma^{-\frac{1}{2}}\right\Vert ^{\alpha}\right)^{-\frac{\alpha+1}{\alpha}}\\
 & =-\lambda_{\min}(\rho)\left(\frac{\alpha-1}{\alpha}\right)\left\Vert \sigma^{-\frac{1}{2}}\rho\sigma^{-\frac{1}{2}}\right\Vert ^{-\left(\alpha+1\right)},
\end{align}
where again we applied \eqref{eq:key-integral-meas-renyi-eigenval-bnds}.
\end{proof}
\begin{cor}
\label{cor:smoothness-strong-concavity-meas-renyi-rel-ent}For $\alpha\in\left(0,\frac{1}{2}\right)$
and on the operator interval
\begin{equation}
\omega\in\left[\left\Vert \sigma^{-\frac{1}{2}}\rho\sigma^{-\frac{1}{2}}\right\Vert ^{-\left(1-\alpha\right)}I,\left\Vert \rho^{-\frac{1}{2}}\sigma\rho^{-\frac{1}{2}}\right\Vert ^{1-\alpha}I\right],\label{eq:a-0-half-op-interval}
\end{equation}
the function $g_{\rho,\sigma}^{\alpha}(\omega)$ is $\beta_{\alpha}$-smooth
with parameter
\begin{equation}
\beta_{\alpha}\coloneqq\left(\frac{\alpha}{1-\alpha}\right)\lambda_{\max}(\sigma)\left\Vert \sigma^{-\frac{1}{2}}\rho\sigma^{-\frac{1}{2}}\right\Vert ^{2-\alpha},
\end{equation}
and it is $\gamma_{\alpha}$-strongly convex with parameter
\begin{equation}
\gamma_{\alpha}\coloneqq\left(\frac{\alpha}{1-\alpha}\right)\lambda_{\min}(\sigma)\left\Vert \rho^{-\frac{1}{2}}\sigma\rho^{-\frac{1}{2}}\right\Vert ^{\alpha-2}.
\end{equation}
Thus, on the operator interval in \eqref{eq:a-0-half-op-interval}
and for $\alpha\in\left(0,\frac{1}{2}\right)$, the condition number
$\kappa_{\alpha}$ of the function $g_{\rho,\sigma}^{\alpha}(\omega)$
is given by
\begin{equation}
\kappa_{\alpha}\coloneqq\frac{\beta_{\alpha}}{\gamma_{\alpha}}=\frac{\lambda_{\max}(\sigma)}{\lambda_{\min}(\sigma)}\left(\left\Vert \rho^{-\frac{1}{2}}\sigma\rho^{-\frac{1}{2}}\right\Vert \left\Vert \sigma^{-\frac{1}{2}}\rho\sigma^{-\frac{1}{2}}\right\Vert \right)^{2-\alpha}.\label{eq:kappa-alpha-0-half}
\end{equation}

For $\alpha\in\left[\frac{1}{2},1\right)\cup\left(1,\infty\right)$
and on the operator interval
\begin{equation}
\omega\in\left[\left\Vert \rho^{-\frac{1}{2}}\sigma\rho^{-\frac{1}{2}}\right\Vert ^{-\alpha}I,\left\Vert \sigma^{-\frac{1}{2}}\rho\sigma^{-\frac{1}{2}}\right\Vert ^{\alpha}I\right],\label{eq:a-half-infty-op-interval}
\end{equation}
the function $h_{\rho,\sigma}^{\alpha}(\omega)$ is $\beta_{\alpha}$-smooth
with parameter
\begin{equation}
\beta_{\alpha}\coloneqq\lambda_{\max}(\rho)\left(\frac{1-\alpha}{\alpha}\right)\left\Vert \rho^{-\frac{1}{2}}\sigma\rho^{-\frac{1}{2}}\right\Vert ^{\alpha+1},
\end{equation}
and it is $\gamma_{\alpha}$-strongly convex / concave with parameter
\begin{equation}
\gamma_{\alpha}\coloneqq\lambda_{\min}(\rho)\left(\frac{1-\alpha}{\alpha}\right)\left\Vert \sigma^{-\frac{1}{2}}\rho\sigma^{-\frac{1}{2}}\right\Vert ^{-\left(\alpha+1\right)}.
\end{equation}
Thus, on the operator interval in \eqref{eq:a-half-infty-op-interval}
and for $\alpha\in\left[\frac{1}{2},1\right)\cup\left(1,\infty\right)$,
the condition number $\kappa_{\alpha}$ of the function $h_{\rho,\sigma}^{\alpha}(\omega)$
is given by
\begin{equation}
\kappa_{\alpha}\coloneqq\frac{\beta_{\alpha}}{\gamma_{\alpha}}=\frac{\lambda_{\max}(\rho)}{\lambda_{\min}(\rho)}\left(\left\Vert \rho^{-\frac{1}{2}}\sigma\rho^{-\frac{1}{2}}\right\Vert \left\Vert \sigma^{-\frac{1}{2}}\rho\sigma^{-\frac{1}{2}}\right\Vert \right)^{\alpha+1}.\label{eq:kappa-alpha-half-infty}
\end{equation}
\end{cor}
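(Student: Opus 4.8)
The plan is to obtain this corollary immediately from the eigenvalue bounds of Lemma~\ref{lem:smoothness-strong-con-meas-renyi}, in exactly the same way that Corollary~\ref{cor:smoothness-strong-concavity-meas-rel-ent} followed from the analogous bounds for the measured relative entropy. On each operator interval appearing in the statement---which is precisely the interval on which the corresponding bound in Lemma~\ref{lem:smoothness-strong-con-meas-renyi} was established (namely \eqref{eq:optimal-omega-alpha-0-half} for $\alpha\in(0,\frac12)$ and \eqref{eq:optimal-omega-alpha-half-infty} otherwise)---I would read off $\beta_\alpha$ and $\gamma_\alpha$ by matching the eigenvalue bounds to the definitions of $\beta$-smoothness and $\gamma$-strong convexity / concavity recalled in the Preliminaries. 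First I would treat the two convex regimes, $g_{\rho,\sigma}^\alpha$ for $\alpha\in(0,\frac12)$ and $h_{\rho,\sigma}^\alpha$ for $\alpha\in[\frac12,1)$, where the Hessian superoperator is positive definite: here $\beta_\alpha$ is the stated upper bound on $\max_{Y}\langle Y,\mathcal{H}_\omega(Y)\rangle$ and $\gamma_\alpha$ is the stated lower bound on $\min_{Y}\langle Y,\mathcal{H}_\omega(Y)\rangle$, which directly yields the listed parameters.

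Next I would handle the concave regime $\alpha\in(1,\infty)$, where $\Phi_{\alpha,\omega}$ is negative definite. Here the sign conventions from the Preliminaries apply: $\beta$-smoothness means $\min_{Y}\langle Y,\Phi_{\alpha,\omega}(Y)\rangle\ge-\beta_\alpha$ and $\gamma$-strong concavity means $\max_{Y}\langle Y,\Phi_{\alpha,\omega}(Y)\rangle\le-\gamma_\alpha$. I would therefore identify $-\beta_\alpha$ with the lower bound on the minimum eigenvalue and $-\gamma_\alpha$ with the upper bound on the maximum eigenvalue from Lemma~\ref{lem:smoothness-strong-con-meas-renyi}, giving $\beta_\alpha=\lambda_{\max}(\rho)\bigl(\frac{\alpha-1}{\alpha}\bigr)\|\rho^{-\frac12}\sigma\rho^{-\frac12}\|^{\alpha+1}$ and $\gamma_\alpha=\lambda_{\min}(\rho)\bigl(\frac{\alpha-1}{\alpha}\bigr)\|\sigma^{-\frac12}\rho\sigma^{-\frac12}\|^{-(\alpha+1)}$, which agree with the stated values once one adopts the convention that the unified formula is written with the factor $\frac{1-\alpha}{\alpha}$ and the overall sign is absorbed into the concave interpretation.

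Finally I would compute $\kappa_\alpha=\beta_\alpha/\gamma_\alpha$ in each regime. The scalar prefactor ($\frac{\alpha}{1-\alpha}$ for $g_{\rho,\sigma}^\alpha$, and $\frac{1-\alpha}{\alpha}$ or $\frac{\alpha-1}{\alpha}$ for $h_{\rho,\sigma}^\alpha$) cancels in the ratio, leaving $\lambda_{\max}/\lambda_{\min}$ (of $\sigma$ for $\alpha\in(0,\frac12)$, of $\rho$ otherwise) times the two operator-norm factors. The one point of bookkeeping is that these two norms enter $\beta_\alpha$ and $\gamma_\alpha$ with reciprocal exponents ($2-\alpha$ against $\alpha-2$, and $\alpha+1$ against $-(\alpha+1)$), so forming the quotient merges them into a single power of the product $\|\rho^{-\frac12}\sigma\rho^{-\frac12}\|\,\|\sigma^{-\frac12}\rho\sigma^{-\frac12}\|$, producing \eqref{eq:kappa-alpha-0-half} and \eqref{eq:kappa-alpha-half-infty}. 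There is no substantial obstacle; the only care needed is keeping the sign conventions straight between the convex ($\alpha<1$) and concave ($\alpha>1$) regimes and verifying that the exponents combine as claimed.
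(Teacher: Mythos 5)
Your proposal is correct and takes essentially the same route as the paper, whose entire proof is the observation that the corollary is an immediate consequence of the eigenvalue bounds in Lemma~\ref{lem:smoothness-strong-con-meas-renyi}; you simply spell out the matching of those bounds to the definitions of $\beta$-smoothness and $\gamma$-strong convexity / concavity from the Preliminaries and verify the cancellation in $\kappa_\alpha=\beta_\alpha/\gamma_\alpha$. Your careful sign bookkeeping for the concave regime $\alpha\in(1,\infty)$ (reading the factor as $\frac{\alpha-1}{\alpha}$, with the sign absorbed into the concave interpretation of $\pm\beta_\alpha$, $\pm\gamma_\alpha$) is exactly the right way to reconcile the unified $\frac{1-\alpha}{\alpha}$ formula in the statement with the negative-definite Hessian bounds.
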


\begin{proof}
This is an immediate consequence of the eigenvalue bounds from Lemma
\ref{lem:smoothness-strong-con-meas-renyi}.
\end{proof}

\subsection{Nesterov accelerated optimization of measured R\'enyi relative entropy}

This section presents our algorithms for calculating the measured
R\'enyi relative quasi-entropy, as given in \eqref{eq:meas-renyi-var-form}.
We consider the two cases of $\alpha\in\left(0,\frac{1}{2}\right)$
and $\alpha\in\left[\frac{1}{2},1\right)\cup\left(1,\infty\right)$
separately.

\subsubsection{Optimization algorithm for $\alpha\in\left(0,\frac{1}{2}\right)$}

We employ Nesterov accelerated projected gradient descent in order
to perform the optimization in \eqref{eq:meas-renyi-var-form} for
$\alpha\in\left(0,\frac{1}{2}\right)$.
\begin{lyxalgorithm}
\label{alg:measured-renyi-0-half}For $\alpha\in\left(0,\frac{1}{2}\right)$,
proceed according to the following steps:
\begin{enumerate}
\item Set $m=0$, and initialize
\begin{equation}
\omega_{0}\leftarrow\xi_{0}\leftarrow\frac{1}{2}\left(\left\Vert \sigma^{-\frac{1}{2}}\rho\sigma^{-\frac{1}{2}}\right\Vert ^{-\left(1-\alpha\right)}+\left\Vert \rho^{-\frac{1}{2}}\sigma\rho^{-\frac{1}{2}}\right\Vert ^{1-\alpha}\right)I.
\end{equation}
\item While $\left\Vert \left.\frac{\partial}{\partial\omega}g_{\rho,\sigma}^{\alpha}(\omega)\right|_{\omega=\omega_{m}}\right\Vert _{2}>\sqrt{2\gamma_{\alpha}\varepsilon}$,
\begin{align}
\xi_{m+1} & \leftarrow\omega_{m}-\frac{1}{\beta_{\alpha}}\left.\frac{\partial}{\partial\omega}g_{\rho,\sigma}^{\alpha}(\omega)\right|_{\omega=\omega_{m}},\\
\chi_{m+1} & \leftarrow\left[\xi_{m+1}\right]_{\mathcal{I}_{\alpha}},\\
\omega_{m+1} & \leftarrow\chi_{m+1}+\left(\frac{\sqrt{\kappa_{\alpha}}-1}{\sqrt{\kappa_{\alpha}}+1}\right)\left(\chi_{m+1}-\chi_{m}\right),
\end{align}
where
\begin{equation}
\mathcal{I}_{\alpha}\coloneqq\left[\left\Vert \sigma^{-\frac{1}{2}}\rho\sigma^{-\frac{1}{2}}\right\Vert ^{-\left(1-\alpha\right)},\left\Vert \rho^{-\frac{1}{2}}\sigma\rho^{-\frac{1}{2}}\right\Vert ^{1-\alpha}\right],
\end{equation}
and $\beta_{\alpha}$, $\gamma_{\alpha}$, and $\kappa_{\alpha}$
are defined in Corollary~\ref{cor:smoothness-strong-concavity-meas-renyi-rel-ent}.
Set $m=m+1$.
\item Output $g_{\rho,\sigma}^{\alpha}(\omega_{m})$ as an estimate of $Q_{\alpha}^{M}(\rho\|\sigma)$
satisfying \eqref{eq:meas-Renyi-rel-ent-eps-error-alpha-0-half}.
\end{enumerate}
\end{lyxalgorithm}

The stopping condition $\left\Vert \left.\frac{\partial}{\partial\omega}g_{\rho,\sigma}^{\alpha}(\omega)\right|_{\omega=\omega_{m}}\right\Vert _{2}\leq\sqrt{2\gamma_{\alpha}\varepsilon}$
implies that 
\begin{equation}
\left|Q_{\alpha}^{M}(\rho\|\sigma)-g_{\rho,\sigma}^{\alpha}(\omega_{m})\right|\leq\varepsilon,\label{eq:meas-Renyi-rel-ent-eps-error-alpha-0-half}
\end{equation}
as a consequence of the same reasoning that led to \eqref{eq:error-threshold-desired}.
The number of iterations required by the above algorithm is equal
to \cite[Theorem~3.18]{Bubeck2015}
\begin{equation}
O\!\left(\sqrt{\kappa_{\alpha}}\ln\!\left(\frac{1}{\varepsilon}\right)\right).
\end{equation}

\subsubsection{Optimization algorithm for $\alpha\in\left[\frac{1}{2},1\right)\cup\left(1,\infty\right)$}

We employ Nesterov accelerated gradient descent / ascent in order
to perform the optimization in \eqref{eq:meas-renyi-var-form} for
$\alpha\in\left[\frac{1}{2},1\right)\cup\left(1,\infty\right)$.
\begin{lyxalgorithm}
\label{alg:measured-renyi-half-infty}For $\alpha\in\left[\frac{1}{2},1\right)\cup\left(1,\infty\right)$,
proceed according to the following steps:
\begin{enumerate}
\item Set $m=0$, and initialize
\begin{equation}
\omega_{0}\leftarrow\xi_{0}\leftarrow\frac{1}{2}\left(\left\Vert \rho^{-\frac{1}{2}}\sigma\rho^{-\frac{1}{2}}\right\Vert ^{-\alpha}+\left\Vert \sigma^{-\frac{1}{2}}\rho\sigma^{-\frac{1}{2}}\right\Vert ^{\alpha}\right)I.
\end{equation}
\item While $\left\Vert \left.\frac{\partial}{\partial\omega}h_{\rho,\sigma}^{\alpha}(\omega)\right|_{\omega=\omega_{m}}\right\Vert _{2}>\sqrt{2\gamma_{\alpha}\varepsilon}$,
\begin{align}
\xi_{m+1} & \leftarrow\omega_{m}+\sgn(\alpha-1)\frac{1}{\beta_\alpha}\left.\frac{\partial}{\partial\omega}h_{\rho,\sigma}^{\alpha}(\omega)\right|_{\omega=\omega_{m}},\\
\chi_{m+1} & \leftarrow\left[\xi_{m+1}\right]_{\mathcal{I}_{\alpha}},\\
\omega_{m+1} & \leftarrow\chi_{m+1}+\left(\frac{\sqrt{\kappa_{\alpha}}-1}{\sqrt{\kappa_{\alpha}}+1}\right)\left(\chi_{m+1}-\chi_{m}\right),
\end{align}
where
\begin{equation}
\mathcal{I}_{\alpha}\coloneqq\left[\left\Vert \rho^{-\frac{1}{2}}\sigma\rho^{-\frac{1}{2}}\right\Vert ^{-\alpha},\left\Vert \sigma^{-\frac{1}{2}}\rho\sigma^{-\frac{1}{2}}\right\Vert ^{\alpha}\right],
\end{equation}
and $\beta_{\alpha}$, $\gamma_{\alpha}$, and $\kappa_{\alpha}$
are defined in Corollary~\ref{cor:smoothness-strong-concavity-meas-renyi-rel-ent}.
Set $m=m+1$.
\item Output $h_{\rho,\sigma}^{\alpha}(\omega_{m})$ as an estimate of $Q_{\alpha}^{M}(\rho\|\sigma)$
satisfying \eqref{eq:meas-Renyi-rel-ent-eps-error-alpha-half-infty}.
\end{enumerate}
\end{lyxalgorithm}

The stopping condition $\left\Vert \left.\frac{\partial}{\partial\omega}h_{\rho,\sigma}^{\alpha}(\omega)\right|_{\omega=\omega_{m}}\right\Vert _{2}\leq\sqrt{2\gamma_{\alpha}\varepsilon}$
ensures that 
\begin{equation}
\left|Q_{\alpha}^{M}(\rho\|\sigma)-h_{\rho,\sigma}^{\alpha}(\omega_{m})\right|\leq\varepsilon,\label{eq:meas-Renyi-rel-ent-eps-error-alpha-half-infty}
\end{equation}
as a consequence of the same reasoning that led to \eqref{eq:error-threshold-desired}.
The number of iterations required by the above algorithm is equal
to \cite[Theorem~3.18]{Bubeck2015}
\begin{equation}
O\!\left(\sqrt{\kappa_{\alpha}}\ln\!\left(\frac{1}{\varepsilon}\right)\right).
\end{equation}

\section{Comparison with SDP algorithms from~\cite{Huang2024}}

\label{sec:Comparison-with-SDP}

In our prior work~\cite{Huang2024},
we proposed semidefinite optimization algorithms for calculating
measured relative entropy and measured R\'enyi relative entropy.
In this section, we compare our algorithms proposed here with our
previous algorithms from~\cite{Huang2024}. In short, the main message
is that our proposed algorithm here is always more memory efficient,
and for well-conditioned states, it is notably faster.

Let us first consider the total computational costs of our algorithms
proposed here. The iteration complexity (total number of iterations
needed for convergence) of Algorithm~\ref{alg:measured-rel-ent},
for calculating measured relative entropy, is given by
\begin{equation}
O\!\left(\sqrt{\kappa}\ln\!\left(\frac{1}{\varepsilon}\right)\right),
\end{equation}
where $\kappa=\frac{\lambda_{\max}(\rho)}{\lambda_{\min}(\rho)}\left(\left\Vert \rho^{-\frac{1}{2}}\sigma\rho^{-\frac{1}{2}}\right\Vert \left\Vert \sigma^{-\frac{1}{2}}\rho\sigma^{-\frac{1}{2}}\right\Vert \right)^{2}$,
as stated in \eqref{eq:kappa-meas-rel-ent}. Furthermore, each iteration
of Algorithm~\ref{alg:measured-rel-ent} requires matrix gradient
updates and computation of $\left[\xi_{m+1}\right]_{\mathcal{I}}$,
the latter of which requires an eigenvalue decomposition. Both of
these steps have cost $O(d^{3})$, where $d$ is the dimension of
the states $\rho$ and $\sigma$. Thus, the total computational cost
of Algorithm~\ref{alg:measured-rel-ent} is
\begin{equation}
O\!\left(\sqrt{\kappa}d^{3}\ln\!\left(\frac{1}{\varepsilon}\right)\right).\label{eq:cost-meas-rel-ent-alg}
\end{equation}
The total computational costs of Algorithms~\ref{alg:measured-renyi-0-half}
and~\ref{alg:measured-renyi-half-infty} follow from similar arguments
and are given by
\begin{equation}
O\!\left(\sqrt{\kappa_{\alpha}}d^{3}\ln\!\left(\frac{1}{\varepsilon}\right)\right),\label{eq:cost-meas-Renyi-rel-ent-alg}
\end{equation}
where
\begin{align}
\kappa_{\alpha} & =\frac{\lambda_{\max}(\sigma)}{\lambda_{\min}(\sigma)}\left(\left\Vert \rho^{-\frac{1}{2}}\sigma\rho^{-\frac{1}{2}}\right\Vert \left\Vert \sigma^{-\frac{1}{2}}\rho\sigma^{-\frac{1}{2}}\right\Vert \right)^{2-\alpha}\quad\text{for}\ \alpha\in\left(0,\frac{1}{2}\right),\\
\kappa_{\alpha} & =\frac{\lambda_{\max}(\rho)}{\lambda_{\min}(\rho)}\left(\left\Vert \rho^{-\frac{1}{2}}\sigma\rho^{-\frac{1}{2}}\right\Vert \left\Vert \sigma^{-\frac{1}{2}}\rho\sigma^{-\frac{1}{2}}\right\Vert \right)^{\alpha+1}\quad\text{for}\ \alpha\in\left[\frac{1}{2},1\right)\cup\left(1,\infty\right),
\end{align}
as stated in \eqref{eq:kappa-alpha-0-half} and \eqref{eq:kappa-alpha-half-infty},
respectively. The total memory requirements of Algorithms~\ref{alg:measured-rel-ent},
\ref{alg:measured-renyi-0-half}, and~\ref{alg:measured-renyi-half-infty}
is equal to $O\!\left(d^{2}\right)$, as only a constant number of
$d\times d$ matrices are needed to be stored for each gradient update
step.

Our previous algorithms from~\cite{Huang2024} rely on semidefinite
optimization, which in turn makes use of the interior-point method
\cite{Nesterov1994} (see also~\cite{Vandenberghe1996,Helmberg1996,Todd2001,Helmberg2003}).
Path-following interior-points methods for optimizing over the positive
semidefinite cone of dimension $d'$ have iteration complexity
\begin{equation}
O\!\left(\sqrt{d'}\ln\!\left(\frac{1}{\varepsilon}\right)\right).
\end{equation}
Each iteration of the interior-point method requires forming and solving
a Newton system, also known as a Schur complement system. If the SDP
has $m$ affine constraints, then the Schur complement is an $m\times m$
system. For dense data, forming the Schur matrix typically costs at
least $O(m^{2}d'{}^{2})$, factoring it costs $O(m^{3})$, and various
matrix manipulations cost $O(d'{}^{3})$. Thus, the total cost per
iteration for the interior-point method is
\begin{equation}
O(m^{2}d'{}^{2}+d'{}^{3}+m^{3}),
\end{equation}
so that the total computational cost of the interior-point method
is
\begin{equation}
O\!\left(\left(m^{2}d'{}^{\frac{5}{2}}+d'{}^{\frac{7}{2}}+m^{3}d'{}^{\frac{1}{2}}\right)\ln\!\left(\frac{1}{\varepsilon}\right)\right).\label{eq:IPM-total-cost}
\end{equation}

Our algorithm from~\cite{Huang2024} for calculating measured relative
entropy involves solving an SDP with $O\!\left(\sqrt{\ln\left(\frac{1}{\varepsilon}\right)}\right)$
matrix inequalities, each of dimension $2d$, where $d$ is the dimension
of the states $\rho$ and $\sigma$. By examining the error analysis
in~\cite{Fawzi2019} more closely, in particular the proof of Theorem
1 therein (see \cite[Appendix~B.2.2]{Fawzi2019}), we find that the
number of matrix inequalities required is given by
\begin{equation}
m=\left\lceil \log_{2}\ln a\right\rceil +1+\frac{3}{2}\left(\left\lceil \sqrt{\log_{2}\left(\frac{32\ln a}{\varepsilon}\right)}\right\rceil +1\right),
\end{equation}
where, in our case, $a\coloneqq\max\!\left\{ \left\Vert \rho^{-\frac{1}{2}}\sigma\rho^{-\frac{1}{2}}\right\Vert ,\left\Vert \sigma^{-\frac{1}{2}}\rho\sigma^{-\frac{1}{2}}\right\Vert \right\} $,
due to Lemma~\ref{lem:omega-bounds}. Thus, by plugging into \eqref{eq:IPM-total-cost}
and ignoring the term $\log_{2}\ln a$, the total computational cost
of calculating measured relative entropy in this way is given by
\begin{equation}
O\!\left(\left[\ln\!\left(\frac{1}{\varepsilon}\right)\right]^{2}d'{}^{\frac{5}{2}}+d'{}^{\frac{7}{2}}\ln\!\left(\frac{1}{\varepsilon}\right)+\ln\!\left(\frac{1}{\varepsilon}\right)^{\frac{5}{2}}d'{}^{\frac{1}{2}}\right),
\end{equation}
where $d'=2d$. Comparing this cost with the cost in \eqref{eq:cost-meas-rel-ent-alg},
the biggest gain occurs if $\kappa\ll2d$, which is the case for well-conditioned
states.

Our algorithm from~\cite{Huang2024} for calculating measured R\'enyi
relative entropy involves solving an SDP with $O\!\left(\ln\!\left(q\right)\right)$
matrix inequalities, each of dimension $2d$, where $d$ is the dimension
of the states $\rho$ and $\sigma$. Here, we assume that the R\'enyi
parameter $\alpha$ is rational and can be written as $\frac{p}{q}=\frac{\alpha}{\alpha-1}$
for $\alpha\in\left(0,\frac{1}{2}\right)$ and as $\frac{p}{q}=1-\frac{1}{\alpha}$
for $\alpha\in\left[\frac{1}{2},1\right)\cup\left(1,\infty\right)$,
where $p$ and $q$ are relatively prime. Thus, by plugging into \eqref{eq:IPM-total-cost},
the total computational cost of calculating measured R\'enyi relative
entropy in this way is given by
\begin{equation}
O\!\left(\left(\ln\!\left(q\right)^{2}d'{}^{\frac{5}{2}}+d'{}^{\frac{7}{2}}+\ln\!\left(q\right)^{3}d'{}^{\frac{1}{2}}\right)\ln\!\left(\frac{1}{\varepsilon}\right)\right),
\end{equation}
where $d'=2d$. Comparing this with the cost in \eqref{eq:cost-meas-Renyi-rel-ent-alg},
the biggest gain occurs if $\kappa_{\alpha}\ll2d$, which is the case
for well-conditioned states.

In Figure~\ref{fig:run_time_linear_linear}, we compare the run time of the semidefinite program in~\cite{Huang2024} to that of Algorithm~\ref{alg:measured-rel-ent} as a function of the Hilbert space dimension. Here we chose $\varepsilon = \exp(-6)\approx 2.5\times 10^{-3}$. The input states are bosonic thermal states~\cite{Serafini17}, truncated to dimension $d$ in the Fock basis and then trace-normalized, with mean photon numbers $\tfrac{1}{10}d$ and $\tfrac{1.2}{10}d$, respectively. 
Across the entire range of dimensions shown, Algorithm~\ref{alg:measured-rel-ent} is consistently much faster than the SDP from~\cite{Huang2024}.  For example, around $d\approx 10$, the new algorithm is roughly two orders of magnitude times faster, while for $d\approx 80$, it is three orders of magnitude times faster. All python source files needed to reproduce these figures are available with the arXiv posting of our paper.

\begin{figure}[h!]
\includegraphics[width=0.48\linewidth]{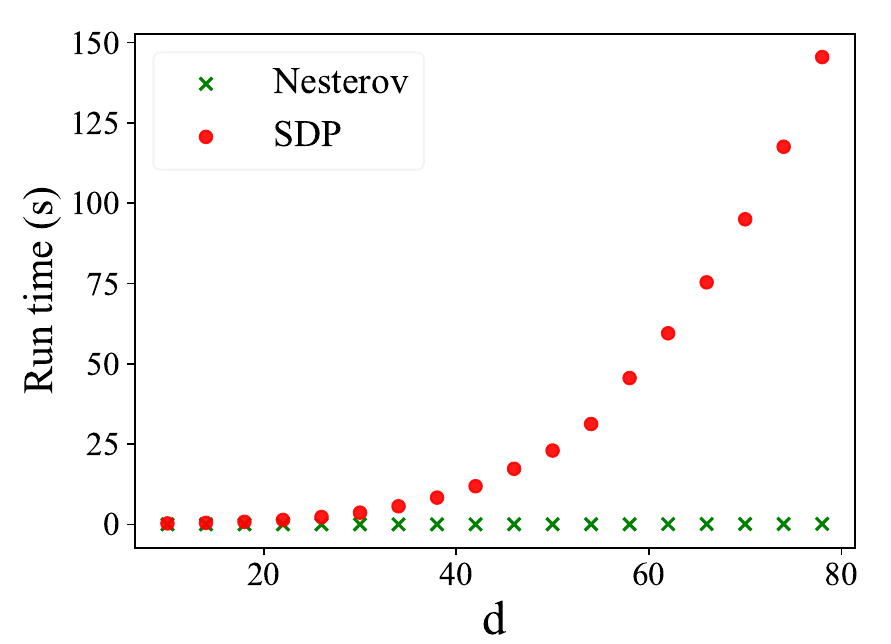}
\includegraphics[width=0.48\linewidth]{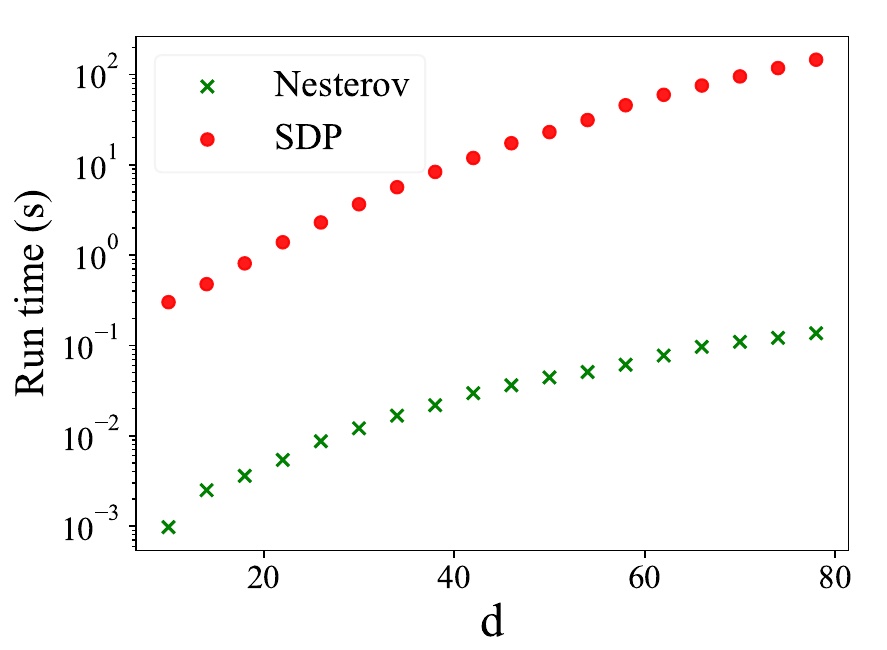}
\caption{\label{fig:run_time_linear_linear}  Runtime comparison between Algorithm~\ref{alg:measured-rel-ent} (green crosses) and the semidefinite program (SDP; red circles) of Ref.~\cite{Huang2024}. The left panel shows the data on linear axes, while the right panel uses a log-linear scale.}
\end{figure}

\section{Conclusion}

\label{sec:Conclusion}In summary, our main mathematical contribution
was to elucidate foundational properties of variational formulas for
measured relative entropies, namely, that their objective functions
are $\beta$-smooth and $\gamma$-strongly convex / concave on relevant
operator intervals. This enabled the application of Nesterov accelerated
projected gradient descent / ascent for calculating the optimal values
of measured relative entropies. The approach is generally more memory
efficient than our previous algorithm from~\cite{Huang2024}, which
is based on semidefinite optimization, and for well-conditioned states,
it is notably faster.

Going forward from here, let us recall that one of the main contributions
of~\cite{Huang2024} was semidefinite optimization algorithms for
calculating measured relative entropies of quantum channels. Here
we have not addressed how to conduct this optimization using Nesterov
accelerated methods. However, we think it might be possible to combine
ideas from the present paper and our prior paper~\cite{Huang2024}
to address this problem, but we leave it for future work.

\medskip{}

\textit{Acknowledgements}---ZH is supported by an ARC DECRA Fellowship
(DE230100144) “Quantum-enabled super-resolution imaging” and an RMIT
Vice Chancellor’s Senior Research Fellowship. She is also grateful
to the Cornell School of Electrical and Computer Engineering for hospitality
during an October 2025 research visit, as well as the Cornell Lab
of Ornithology for the Merlin app, which enabled many hours of fruitful
birding. MMW acknowledges support from the National Science Foundation
under grant no.~2329662 and from the Cornell School of Electrical
and Computer Engineering.

\bibliographystyle{alphaurl}
\phantomsection\addcontentsline{toc}{section}{\refname}\bibliography{Ref}

\appendix

\section{Derivatives of scalar functions with respect to matrices}

\label{app:matrix-derivatives}In this appendix, we review the theory
of matrix derivatives and calculate matrix derivatives of various
functions that play a role in our paper.

Let $X$ be a $d\times d$ matrix. Let $f(X)$ be a scalar function
of a matrix. Then the matrix derivative (in numerator layout notation)
is defined as
\begin{equation}
\frac{\partial f(X)}{\partial X}\coloneqq\begin{bmatrix}\frac{\partial f(X)}{\partial x_{1,1}} & \frac{\partial f(X)}{\partial x_{2,1}} & \cdots & \frac{\partial f(X)}{\partial x_{d,1}}\\
\frac{\partial f(X)}{\partial x_{1,2}} & \frac{\partial f(X)}{\partial x_{2,2}} &  & \frac{\partial f(X)}{\partial x_{d,2}}\\
\vdots &  & \ddots & \vdots\\
\frac{\partial f(X)}{\partial x_{1,d}} & \frac{\partial f(X)}{\partial x_{2,d}} & \cdots & \frac{\partial f(X)}{\partial x_{d,d}}
\end{bmatrix},
\end{equation}
where the matrix elements of $X$ are denoted by $x_{i,j}$ for all
$i,j\in\left\{ 1,\ldots,d\right\} $.

\subsection{Power functions}
\begin{prop}
\label{prop:matrix-deriv-powers}Let $A$ and $X$ be $d\times d$
matrices, and let $n\in\mathbb{N}$. Then the following equality holds:
\begin{equation}
\frac{\partial\Tr[AX^{n}]}{\partial X}=\sum_{k=0}^{n-1}X^{k}AX^{n-k-1}.\label{eq:n-power-matrix-deriv}
\end{equation}
If $X$ is Hermitian with a spectral decomposition given by $X=\sum_{k}\lambda_{k}\Pi_{k}$,
then
\begin{equation}
\frac{\partial\Tr[AX^{n}]}{\partial X}=\sum_{\ell,m}f_{x^{n}}^{\left[1\right]}(\lambda_{\ell},\lambda_{m})\Pi_{\ell}A\Pi_{m},
\end{equation}
where $f_{x^{n}}^{\left[1\right]}$ is the first divided difference
of the function $x\mapsto x^{n}$ and is defined as 
\begin{equation}
f_{x^{n}}^{\left[1\right]}(\lambda_{\ell},\lambda_{m})\coloneqq\begin{cases}
n\lambda_{\ell}^{n-1} & :\lambda_{\ell}=\lambda_{m}\\
\frac{\lambda_{\ell}^{n}-\lambda_{m}^{n}}{\lambda_{\ell}-\lambda_{m}} & :\lambda_{\ell}\neq\lambda_{m}
\end{cases}.
\end{equation}
\end{prop}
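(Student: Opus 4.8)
The plan is to establish \eqref{eq:n-power-matrix-deriv} directly from the Leibniz product rule together with the elementary derivative $\frac{\partial}{\partial X_{i,j}}X=|i\rangle\!\langle j|$ recorded in \eqref{eq:basic-matrix-deriv-1}--\eqref{eq:basic-matrix-deriv-last}, and then to obtain the Hermitian (divided-difference) form by inserting a spectral decomposition and summing a geometric series.

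First I would compute the individual entries of the matrix derivative. Recalling the numerator-layout relation $\left[\frac{\partial f(X)}{\partial X}\right]_{i,j}=\frac{\partial f(X)}{\partial X_{j,i}}$, I would write $X^n$ as a product of $n$ copies of $X$ and apply the product rule to differentiate $\Tr[AX^n]$ with respect to the scalar $X_{j,i}$. Since $\frac{\partial}{\partial X_{j,i}}X=|j\rangle\!\langle i|$, this gives
\begin{equation}
\frac{\partial}{\partial X_{j,i}}\Tr[AX^n]=\sum_{k=0}^{n-1}\Tr\!\left[AX^k|j\rangle\!\langle i|X^{n-1-k}\right]=\sum_{k=0}^{n-1}\langle i|X^{n-1-k}AX^k|j\rangle,
\end{equation}
where the second equality uses cyclicity of the trace. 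Reading off entries and re-indexing the sum via $k\mapsto n-1-k$ then shows that $\left[\frac{\partial\Tr[AX^n]}{\partial X}\right]_{i,j}=\langle i|\!\left(\sum_{k=0}^{n-1}X^kAX^{n-1-k}\right)\!|j\rangle$, which is precisely \eqref{eq:n-power-matrix-deriv}. (An equivalent route is induction on $n$, writing $\Tr[AX^{n}]=\Tr[(AX)X^{n-1}]$ and combining the $n=1$ base case with the inductive step; I would prefer the direct Leibniz computation as it is cleaner.)

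For the Hermitian case, I would substitute $X=\sum_k\lambda_k\Pi_k$ into \eqref{eq:n-power-matrix-deriv}. Using $X^k=\sum_\ell\lambda_\ell^k\Pi_\ell$ together with orthogonality of the spectral projections, the formula collapses to
\begin{equation}
\sum_{k=0}^{n-1}X^kAX^{n-1-k}=\sum_{\ell,m}\!\left(\sum_{k=0}^{n-1}\lambda_\ell^k\,\lambda_m^{n-1-k}\right)\!\Pi_\ell A\Pi_m.
\end{equation}
It then remains to identify the scalar coefficient with the divided difference $f_{x^n}^{[1]}(\lambda_\ell,\lambda_m)$: when $\lambda_\ell=\lambda_m$ each of the $n$ summands equals $\lambda_\ell^{n-1}$, yielding $n\lambda_\ell^{n-1}$, and when $\lambda_\ell\neq\lambda_m$ the geometric sum evaluates to $\frac{\lambda_\ell^n-\lambda_m^n}{\lambda_\ell-\lambda_m}$, matching the stated definition.

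There is no serious obstacle here; the argument is routine. The only points requiring care are bookkeeping ones: respecting the numerator-layout index swap between $(i,j)$ and $(j,i)$, tracking trace cyclicity so that the noncommuting factors land in the correct order, and re-indexing the sum to match \eqref{eq:n-power-matrix-deriv}. I would double-check the geometric-sum evaluation against the $\lambda_\ell\to\lambda_m$ limit to confirm that the two cases of the divided difference are consistent.
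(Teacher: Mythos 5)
Your proposal is correct and follows essentially the same route as the paper's proof: the Leibniz product rule combined with the entrywise derivative $\frac{\partial}{\partial x_{i,j}}X=|i\rangle\!\langle j|$ and trace cyclicity yields \eqref{eq:n-power-matrix-deriv}, and substituting the spectral decomposition reduces the coefficient to the geometric sum $\sum_{k=0}^{n-1}\lambda_{\ell}^{k}\lambda_{m}^{n-1-k}$, which is then identified with the divided difference $f_{x^{n}}^{[1]}(\lambda_{\ell},\lambda_{m})$. The only cosmetic difference is that you evaluate the geometric sum inline (correctly in both cases), whereas the paper cites an external reference for that identity.
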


\begin{proof}
By applying the product rule, consider that 
\begin{align}
\frac{\partial\Tr[AX^{n}]}{\partial x_{i,j}} & =\frac{\partial\Tr[AXX\cdots X]}{\partial x_{i,j}}\\
 & =\Tr\!\left[A\left(\frac{\partial}{\partial x_{i,j}}X\right)X\cdots X\right]+\Tr\!\left[AX\left(\frac{\partial}{\partial x_{i,j}}X\right)\cdots X\right]\nonumber \\
 & \qquad+\cdots+\Tr\!\left[AXX\cdots\left(\frac{\partial}{\partial x_{i,j}}X\right)\right]\\
 & =\Tr\!\left[X\cdots XA\left(\frac{\partial}{\partial x_{i,j}}X\right)\right]+\Tr\!\left[X\cdots XAX\left(\frac{\partial}{\partial x_{i,j}}X\right)\right]\nonumber \\
 & \qquad+\cdots+\Tr\!\left[AXX\cdots X\left(\frac{\partial}{\partial x_{i,j}}X\right)\right]\\
 & =\Tr\!\left[\left(\sum_{k=0}^{n-1}X^{k}AX^{n-k-1}\right)\left(\frac{\partial}{\partial x_{i,j}}X\right)\right]\\
 & =\Tr\!\left[\left(\sum_{k=0}^{n-1}X^{k}AX^{n-k-1}\right)|i\rangle\!\langle j|\right]\label{eq:simple-matrix-deriv}\\
 & =\langle j|\left(\sum_{k=0}^{n-1}X^{k}AX^{n-k-1}\right)|i\rangle\\
 & =\left[\sum_{k=0}^{n-1}X^{k}AX^{n-k-1}\right]_{j,i},
\end{align}
thus implying the claim in \eqref{eq:n-power-matrix-deriv}. The equality
in \eqref{eq:simple-matrix-deriv} follows because
\begin{align}
\frac{\partial}{\partial x_{i,j}}X & =\frac{\partial}{\partial x_{i,j}}\sum_{k,\ell}x_{k,\ell}|k\rangle\!\langle\ell|\\
 & =\sum_{k,\ell}\left(\frac{\partial}{\partial x_{i,j}}x_{k,\ell}\right)|k\rangle\!\langle\ell|\\
 & =\sum_{k,\ell}\delta_{i,k}\delta_{j,\ell}|k\rangle\!\langle\ell|\\
 & =|i\rangle\!\langle j|.
\end{align}

In the case that $X$ is Hermitian, consider that
\begin{align}
\frac{\partial\Tr[AX^{n}]}{\partial X} & =\sum_{k=0}^{n-1}X^{k}AX^{n-k-1}\\
 & =\sum_{k=0}^{n-1}\left(\sum_{\ell}\lambda_{\ell}\Pi_{\ell}\right)^{k}A\left(\sum_{m}\lambda_{m}\Pi_{m}\right)^{n-k-1}\\
 & =\sum_{k=0}^{n-1}\left(\sum_{\ell}\lambda_{\ell}^{k}\Pi_{\ell}\right)A\left(\sum_{m}\lambda_{m}^{n-k-1}\Pi_{m}\right)\\
 & =\sum_{\ell,m}\Pi_{\ell}A\Pi_{m}\sum_{k=0}^{n-1}\lambda_{\ell}^{k}\lambda_{m}^{n-k-1}\\
 & =\sum_{\ell,m}f_{x^{n}}^{\left[1\right]}(\lambda_{\ell},\lambda_{m})\Pi_{\ell}A\Pi_{m},
\end{align}
where the last equality follows because $\sum_{k=0}^{n-1}\lambda_{\ell}^{k}\lambda_{m}^{n-k-1}=f_{x^{n}}^{\left[1\right]}(\lambda_{\ell},\lambda_{m})$.
For a simple proof of this, see, e.g., \cite[Eqs.~(B12)--(B15)]{Wilde2025}.
\end{proof}
\begin{rem}
Special cases of Proposition~\ref{prop:matrix-deriv-powers} include
$n=1$ and $n=2$:
\begin{equation}
\frac{\partial\Tr[AX]}{\partial X}=A,\qquad\frac{\partial\Tr[AX^{2}]}{\partial X}=AX+XA.\label{eq:matrix-deriv-n-1-2}
\end{equation}
\end{rem}

\subsection{Functions with convergent power series expansions}

Suppose that $f$ is a function with a power series expansion convergent
on an interval $I$, and suppose furthermore that all the eigenvalues
of a Hermitian operator $X$ are contained in $I$. By reasoning similar
to that in \cite[Eqs.~(B18)--(B30)]{Wilde2025}, we then obtain the
following result by linearity:
\begin{equation}
\frac{\partial}{\partial X}\Tr[Af(X)]=\sum_{\ell,m}f^{[1]}(\lambda_{\ell},\lambda_{m})\Pi_{\ell}A\Pi_{m},\label{eq:matrix-deriv-func-f}
\end{equation}
where $X=\sum_{k}\lambda_{k}\Pi_{k}$ is a spectral decomposition
of $X$ and the first divided difference function $f^{[1]}(y_{1},y_{2})$
is defined for $y_{1},y_{2}\in I$ as
\begin{equation}
f^{[1]}(y_{1},y_{2})\coloneqq\begin{cases}
f'(y_{1}) & :y_{1}=y_{2}\\
\frac{f(y_{1})-f(y_{2})}{y_{1}-y_{2}} & :y_{1}\neq y_{2}
\end{cases}.
\end{equation}

By applying reasoning similar to that in the proofs of \cite[Propositions~47--50]{Wilde2025},
we obtain the following corollary:
\begin{cor}
For all Hermitian $X$,
\begin{equation}
\frac{\partial}{\partial X}\Tr[Ae^{X}]=\int_{0}^{1}dt\ e^{tX}Ae^{\left(1-t\right)X}.
\end{equation}
For all positive definite $X$,
\begin{align}
\frac{\partial}{\partial X}\Tr[A\ln X] & =\int_{0}^{\infty}ds\ \left(X+sI\right)^{-1}A\left(X+sI\right)^{-1},\label{eq:deriv-log}\\
\forall r\in\mathbb{R},\qquad\frac{\partial}{\partial X}\Tr[AX^{r}] & =r\int_{0}^{1}dt\int_{0}^{\infty}ds\ \left(\frac{X^{rt}}{X+sI}\right)A\left(\frac{X^{r\left(1-t\right)}}{X+sI}\right),\\
\forall r\in\left(-1,0\right)\cup\left(0,1\right)\qquad\frac{\partial}{\partial X}\Tr[AX^{r}] & =\frac{\sin(r\pi)}{\pi}\int_{0}^{\infty}dt\ t^{r}\left(X+tI\right)^{-1}A\left(X+tI\right)^{-1}.\label{eq:power-func-deriv-minus-1-to-plus-1}
\end{align}
\end{cor}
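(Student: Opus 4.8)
The plan is to reduce every case to the master formula \eqref{eq:matrix-deriv-func-f}, which already expresses $\frac{\partial}{\partial X}\Tr[Af(X)]$ as $\sum_{\ell,m}f^{[1]}(\lambda_\ell,\lambda_m)\Pi_\ell A\Pi_m$ in the eigenbasis $X=\sum_k\lambda_k\Pi_k$. Thus the entire task is to find, for each of the functions $e^x$, $\ln x$, and $x^r$, an integral representation of the scalar divided difference $f^{[1]}(a,b)$ that \emph{factorizes} into a product of a function of $a$ and a function of $b$, integrated against an auxiliary parameter. The reason this suffices is the following reassembly step: whenever $f^{[1]}(a,b)=\int d\xi\,p_\xi(a)\,q_\xi(b)$, one has
\begin{equation}
\sum_{\ell,m}f^{[1]}(\lambda_\ell,\lambda_m)\Pi_\ell A\Pi_m=\int d\xi\left(\sum_\ell p_\xi(\lambda_\ell)\Pi_\ell\right)A\left(\sum_m q_\xi(\lambda_m)\Pi_m\right)=\int d\xi\ p_\xi(X)\,A\,q_\xi(X),
\end{equation}
after interchanging the finite sum with the integral (justified by absolute convergence) and using $\sum_\ell p_\xi(\lambda_\ell)\Pi_\ell=p_\xi(X)$. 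Each claimed operator identity then follows immediately once the corresponding scalar identity is in hand.

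For the exponential, I would verify that $f^{[1]}(a,b)=\int_0^1 dt\,e^{ta}e^{(1-t)b}$: on the diagonal $a=b$ the integrand is constant and yields $e^a=f'(a)$, while for $a\neq b$ the elementary integral $\int_0^1 e^{ta+(1-t)b}\,dt$ evaluates to $\frac{e^a-e^b}{a-b}$, matching $f^{[1]}$. Here $p_\xi(x)=e^{tx}$ and $q_\xi(x)=e^{(1-t)x}$, so reassembly gives $\int_0^1 dt\,e^{tX}Ae^{(1-t)X}$. For the logarithm, the target is $f^{[1]}(a,b)=\int_0^\infty ds\,(a+s)^{-1}(b+s)^{-1}$; the diagonal case gives $\int_0^\infty(a+s)^{-2}\,ds=1/a$, and the off-diagonal case follows from the partial-fraction identity $\frac{1}{(a+s)(b+s)}=\frac{1}{b-a}\bigl(\frac{1}{a+s}-\frac{1}{b+s}\bigr)$, whose integral is $\frac{\ln a-\ln b}{a-b}$. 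This reproduces \eqref{eq:deriv-log}.

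For the general real power $x^r$ I would build the representation in two stages. Writing $f^{[1]}(a,b)=r\int_0^1 dt\int_0^\infty ds\,\frac{a^{rt}}{a+s}\,\frac{b^{r(1-t)}}{b+s}$, I would first carry out the $s$-integral using the logarithm computation just established, which turns the inner integral into the logarithmic divided difference $(\ln)^{[1]}(a,b)$ and pulls it outside the $t$-integral; the remaining factor $r\int_0^1 a^{rt}b^{r(1-t)}\,dt$ evaluates to $\frac{a^r-b^r}{\ln a-\ln b}$ for $a\neq b$ and to $ra^r$ for $a=b$, so that multiplication by $(\ln)^{[1]}(a,b)$ telescopes to $\frac{a^r-b^r}{a-b}$ (and to $ra^{r-1}$ on the diagonal). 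The factorization is $p_{t,s}(x)=x^{rt}/(x+s)$ and $q_{t,s}(x)=x^{r(1-t)}/(x+s)$, yielding the stated double-integral operator formula valid for all $r\in\mathbb{R}$. For the restricted range $r\in(-1,0)\cup(0,1)$, the cleaner single-integral representation $f^{[1]}(a,b)=\frac{\sin(r\pi)}{\pi}\int_0^\infty dt\,\frac{t^r}{(a+t)(b+t)}$ is what I would verify directly: on the diagonal the substitution $t=au$ reduces the integral to $a^{r-1}B(r+1,1-r)$ with $B(r+1,1-r)=\Gamma(r+1)\Gamma(1-r)=r\pi/\sin(r\pi)$, recovering $ra^{r-1}$; off the diagonal, partial fractions together with the Mellin integral $\int_0^\infty\frac{t^r}{a+t}\,dt=-a^r\,\pi/\sin(r\pi)$ give $\frac{a^r-b^r}{a-b}$. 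The factorization $p_t(x)=q_t(x)=(x+t)^{-1}$, with scalar weight $\frac{\sin(r\pi)}{\pi}t^r$, then produces \eqref{eq:power-func-deriv-minus-1-to-plus-1}.

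The main obstacle I anticipate is convergence bookkeeping in the power-function cases: the Mellin integral $\int_0^\infty t^r(a+t)^{-1}\,dt$ converges only for $-1<r<0$, so the off-diagonal partial-fraction argument is literally valid only there. For $0<r<1$ the two pieces individually diverge while their difference converges, so I would either insert a finite cutoff $M$, note that the leading $t^{r-1}$ asymptotics cancel between the two pieces before taking $M\to\infty$, or observe that the combined integral $\int_0^\infty t^r[(a+t)(b+t)]^{-1}\,dt$ converges on the full strip $-1<r<1$ and that both sides of the scalar identity are analytic in $r$ there, so the identity extends by analytic continuation from $(-1,0)$. The only other routine point is justifying the sum--integral interchange in the reassembly step, which holds because the sums are finite and the integrands are absolutely integrable for $X$ positive definite.
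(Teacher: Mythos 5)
Your proposal is correct and follows essentially the same route as the paper, which proves this corollary by deferring to the arguments of \cite[Propositions~47--50]{Wilde2025}: namely, combining the divided-difference formula \eqref{eq:matrix-deriv-func-f} with factorized integral representations of the scalar divided differences of $e^{x}$, $\ln x$, and $x^{r}$ (the same representations the paper itself invokes later as \eqref{eq:standard-integral-power-sine} and \eqref{eq:integral-rep-div-diff-x-r}), then reassembling via the spectral decomposition. Your convergence bookkeeping for the Mellin integrals on $r\in\left(0,1\right)$ versus $r\in\left(-1,0\right)$ is accurate and simply makes explicit details the paper delegates to the citation.
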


\section{Derivations of integral formulas}

\label{app:Derivations-of-integral-formulas}In this appendix, we
derive the integral formulas in \eqref{eq:key-integral-meas-renyi-eigenval-bnds-first}
and \eqref{eq:key-integral-meas-renyi-eigenval-bnds}.

\subsection{Proof of \eqref{eq:key-integral-meas-renyi-eigenval-bnds-first}}

Set $\zeta\coloneqq\frac{\alpha}{\alpha-1}$, so that $\zeta\in\left(-1,0\right)$
for $\alpha\in\left(0,\frac{1}{2}\right)$ and
\[
-c_{1}(\alpha)\int_{0}^{\infty}dt\ t^{\frac{\alpha}{\alpha-1}}\left(k+t\right)^{-3}=-\left(1-\alpha\right)\frac{\sin\!\left(\zeta\pi\right)}{\pi}\int_{0}^{\infty}dt\ t^{\zeta}\left(k+t\right)^{-3}.
\]
Consider that
\begin{equation}
\left(k+t\right)^{-3}=\frac{1}{2}\frac{\partial}{\partial k^{2}}\left(k+t\right)^{-1},\label{eq:derivative-trick-cube}
\end{equation}
so that
\begin{align}
\int_{0}^{\infty}dt\ t^{\zeta}\left(k+t\right)^{-3} & =\int_{0}^{\infty}dt\ t^{\zeta}\left(\frac{1}{2}\frac{\partial}{\partial k^{2}}\left(k+t\right)^{-1}\right)\\
 & =\frac{1}{2}\frac{\partial}{\partial k^{2}}\int_{0}^{\infty}dt\ \frac{t^{\zeta}}{k+t}\\
 & =\frac{1}{2}\frac{\partial}{\partial k^{2}}\left(-k^{\zeta}\frac{\pi}{\sin\!\left(\zeta\pi\right)}\right),
\end{align}
where the last equality follows as a consequence of the following
standard integral (see, e.g., \cite[Eq.~(B.117)]{Wilde2025}), holding
for $r\in\left(-1,0\right)$ and $x>0$:
\begin{align}
x^{r} & =-\frac{\sin(r\pi)}{\pi}\int_{0}^{\infty}dt\ \frac{t^{r}}{x+t}.\label{eq:standard-integral-power-sine}
\end{align}
So it follows that
\begin{align}
-\left(1-\alpha\right)\frac{\sin\!\left(\zeta\pi\right)}{\pi}\int_{0}^{\infty}dt\ t^{\zeta}\left(k+t\right)^{-3} & =-\left(1-\alpha\right)\frac{\sin\!\left(\zeta\pi\right)}{2\pi}\frac{\partial}{\partial k^{2}}\left(-k^{\zeta}\frac{\pi}{\sin\!\left(\zeta\pi\right)}\right)\\
 & =\frac{\left(1-\alpha\right)}{2}\frac{\partial}{\partial k^{2}}k^{\zeta}\\
 & =\frac{\left(1-\alpha\right)\zeta\left(\zeta-1\right)}{2}k^{\zeta-2}\\
 & =\frac{\left(1-\alpha\right)\frac{\alpha}{\alpha-1}\left(\frac{\alpha}{\alpha-1}-1\right)}{2}k^{\frac{\alpha}{\alpha-1}-2}\\
 & =\frac{\alpha}{2\left(1-\alpha\right)}k^{-\frac{2-\alpha}{1-\alpha}},
\end{align}
which completes the proof of \eqref{eq:key-integral-meas-renyi-eigenval-bnds-first}.

\subsection{Proof of \eqref{eq:key-integral-meas-renyi-eigenval-bnds}}

Set $\xi\coloneqq\frac{\alpha-1}{\alpha}$, so that $\xi\in\left[-1,0\right)$
for $\alpha\in\left[\frac{1}{2},1\right)$. Consider that
\begin{equation}
-c_{2}(\alpha)\int_{0}^{\infty}dt\ t^{\frac{\alpha-1}{\alpha}}\left(k+t\right)^{-3}=-\alpha\frac{\sin\!\left(\xi\pi\right)}{\pi}\int_{0}^{\infty}dt\ t^{\xi}\left(k+t\right)^{-3}.
\end{equation}
Apply \eqref{eq:derivative-trick-cube} again to conclude that
\begin{align}
-\alpha\frac{\sin\!\left(\xi\pi\right)}{\pi}\int_{0}^{\infty}dt\ t^{\xi}\left(k+t\right)^{-3} & =-\alpha\frac{\sin\!\left(\xi\pi\right)}{\pi}\frac{1}{2}\frac{\partial}{\partial k^{2}}\int_{0}^{\infty}dt\ \frac{t^{\xi}}{k+t}\\
 & =-\alpha\frac{\sin\!\left(\xi\pi\right)}{\pi}\frac{1}{2}\frac{\partial}{\partial k^{2}}\left(-k^{\xi}\frac{\pi}{\sin\!\left(\xi\pi\right)}\right)\\
 & =\frac{\alpha}{2}\frac{\partial}{\partial k^{2}}\left(k^{\xi}\right)\\
 & =\frac{\alpha\xi\left(\xi-1\right)}{2}k^{\xi-2}\\
 & =\frac{\alpha\frac{\alpha-1}{\alpha}\left(\frac{\alpha-1}{\alpha}-1\right)}{2}k^{\frac{\alpha-1}{\alpha}-2}\\
 & =\left(\frac{1-\alpha}{2\alpha}\right)k^{-\frac{\alpha+1}{\alpha}},
\end{align}
where the second equality follows again from \eqref{eq:standard-integral-power-sine}.
This proves \eqref{eq:key-integral-meas-renyi-eigenval-bnds} for
$\alpha\in\left[\frac{1}{2},1\right)$. 

For $\alpha>1$, consider that $\xi\in\left(0,1\right)$, and then
\begin{align}
-c_{2}(\alpha)\int_{0}^{\infty}dt\ t^{\frac{\alpha-1}{\alpha}}\left(k+t\right)^{-3} & =-\alpha\frac{\sin\!\left(\xi\pi\right)}{\pi}\int_{0}^{\infty}dt\ t^{\xi}\left(k+t\right)^{-3}\\
 & =-\alpha\frac{\sin\!\left(\xi\pi\right)}{\pi}\int_{0}^{\infty}dt\ t^{\xi}\left(-\frac{1}{2}\right)\frac{\partial}{\partial k}\left(k+t\right)^{-2}\\
 & =\alpha\frac{\sin\!\left(\xi\pi\right)}{2\pi}\frac{\partial}{\partial k}\int_{0}^{\infty}dt\ \frac{t^{\xi}}{\left(k+t\right)^{2}}\\
 & =\alpha\frac{\sin\!\left(\xi\pi\right)}{2\pi}\frac{\partial}{\partial k}\left(\frac{\xi\pi}{\sin(\xi\pi)}k^{\xi-1}\right)\\
 & =\alpha\frac{\xi}{2}\frac{\partial}{\partial k}\left(k^{\xi-1}\right)\\
 & =\alpha\frac{\xi\left(\xi-1\right)}{2}k^{\xi-2}\\
 & =\left(\frac{1-\alpha}{2\alpha}\right)k^{-\frac{\alpha+1}{\alpha}}
\end{align}
where we invoked the following integral representation for $r\in(0,1)$
and $x,y>0$ such that $x\neq y$ (see, e.g., \cite[Eq.~(B.102)]{Wilde2025}):
\begin{align}
\frac{x^{r}-y^{r}}{x-y} & =\frac{\sin(r\pi)}{\pi}\int_{0}^{\infty}dt\ \frac{t^{r}}{\left(x+t\right)\left(y+t\right)},\label{eq:integral-rep-div-diff-x-r}
\end{align}
which implies in the $y\to x$ limit that
\begin{equation}
rx^{r-1}=\frac{\sin(r\pi)}{\pi}\int_{0}^{\infty}dt\ \frac{t^{r}}{\left(x+t\right)^{2}}.
\end{equation}
This concludes the proof of \eqref{eq:key-integral-meas-renyi-eigenval-bnds}
for $\alpha>1$.
\end{document}